\numberwithin{equation}{section}
\normalfont\fontsize{12pt}{14pt}\selectfont\bfseries}{\thesection}{1em}{\MakeUppercase}
\newcommand{\C}{\mathbb{C}}
\newcommand{\R}{\mathbb{R}}
\newcommand{\N}{\mathbb{N}}
\newcommand{\Z}{\mathbb{Z}}
\theoremstyle{plain}
\newtheorem{thm}{Theorem}[section]
\newtheorem{prop}[thm]{Proposition}
\newtheorem{cor}[thm]{Corollary}
\theoremstyle{definition}
\newtheorem{defn}[thm]{Definition}
\newtheorem{remark}[thm]{Remark}
\begin{document}
\title{\textsc{Mock Modularity In CHL Models}}
\author{Ajit Bhand$^1$, Ashoke Sen$^2$, Ranveer Kumar Singh$^3$}
\bigskip

\date{\today}
\maketitle
%\centerline{ \it ~$^1$Department of Physics, Indian Institute of Science Education and ResearchBhopal}
%\centerline{ \it Bhopal bypass, Bhopal 462066, India}
\centerline{ \it ~$^1$Department of Mathematics, Indian Institute of Science Education and Research
Bhopal}
\centerline{ \it Bhopal bypass, Bhopal 462066, India}

\centerline{ \it ~$^2$International Centre for Theoretical Sciences - TIFR}
\centerline{ \it  Bengaluru - 560089, India}
\centerline{ \it ~$^3$NHETC, Department of Physics and Astronomy, Rutgers University}
\centerline{ \it  126 Frelinghuysen Rd., Piscataway, NJ, 08855, USA.}

\bigskip

\centerline{E-mail: abhand@iiserb.ac.in, ashoke.sen@icts.res.in,}\centerline{rks158@scarletmail.rutgers.edu}
\abstract{
\noindent Dabholkar, Murthy and Zagier (DMZ)  proved that there is a canonical decomposition of a meromorphic Jacobi form of integral index for $\mathrm{SL}(2, \mathbb{Z})$ with poles on torsion points into polar and finite parts, and showed that the finite part is a mock Jacobi form. In this paper we  generalize the results of DMZ to meromorphic Jacobi forms of rational index for congruence subgroups of  $\mathrm{SL}(2, \mathbb{Z})$.  As an application, we establish that a large class of  single-centered black hole degeneracies in CHL models are given by the Fourier coefficients of mock Jacobi forms. In this process we refine the result of DMZ regarding the set of charges for which the single-centered black hole degeneracies are given by a mock modular form. In particular, in the case studied by DMZ, we present examples of charges for which the single-centered degeneracies are not captured by the mock modular form of the expected index.   
}\newpage
\tableofcontents
\section{Introduction and Summary}
\subsection{Introduction}
Modular forms are holomorphic functions from the upper half plane $\mathbb{H}$ to the complex plane which satisfy a growth condition and a certain symmetry property called modularity with respect to a finite-index subgroup of $\mathrm{SL}(2,\mathbb{Z})$. 

Mock modular forms first appeared in Ramanujan's famous deathbed letter written to Hardy in 1920 in which he provided a list of 17 functions which have ``modular-like" properties. Ramanujan called these functions mock theta functions. These functions were not completely understood until Zwegers \cite{zwegers2008mock} provided the correct formulation of mock modular forms in his thesis in 2007. Mock modular forms are holomorphic functions which do not satisfy the modularity property, but afford a nonholomorphic completion which is modular. In the last 15 years, mock modular forms have found numerous applications in number theory, combinatorics, moonshine and theoretical physics. For more details, we refer to the excellent survey articles \cite{FOLSOM2017500,Ono2009}.
In DMZ \cite{Dabholkar:2012nd} the authors introduce  \hyperref[mixedmmfdef]{mixed mock modular forms}, which are sums of products of modular forms and mock modular forms of appropriate weights.

For integer $n\geq 1$, the Siegel upper half plane $\mathbb{H}_n$ is defined as the set of all symmetric $n\times n$ complex matrices with positive definite imaginary part. A \hyperref[siegeldef]{Siegel modular form} of degree $n$ is a holomorphic function from $\mathbb{H}_n$ to $\mathbb{C}$ which satisfies the modularity property with respect to $\mathrm{Sp}(n, \mathbb{Z}).$  Siegel modular forms appear naturally in the context of counting quarter-BPS states in Type IIB superstring theory compactified over $K3\times T^2.$

Each Siegel modular form $f$ of degree 2 has a Fourier-Jacobi expansion. Identifying each $\begin{pmatrix}
    \tau & z\\
    z & \sigma \end{pmatrix}=:\Omega \in \mathbb{H}_2$ with the triple $(\tau,z,\sigma)\in\mathbb{H}\times\mathbb{C}\times\mathbb{H}$, we can write 
$$
f(\Omega)= f(\tau,z,\sigma)=\sum_{m=0}^{\infty}\varphi_m(\tau,z)p^{m}, 
$$
where $p=e^{2\pi i \sigma}$. The functions $\varphi_m: \mathbb{H}\times \mathbb{C}\rightarrow \mathbb{C}$ turn out to be what are called \hyperref[jacobidef]{\textit{Jacobi forms}}. 

In DMZ \cite{Dabholkar:2012nd}, the authors have defined the notion of (pure and mixed) \hyperref[mockjacobidef]{\emph{mock} Jacobi forms}. Let $\varphi(\tau,z)$ be a meromorphic Jacobi form of weight $k$ and index $m$ with respect to $\mathrm{SL}(2,\mathbb{Z})\ltimes\mathbb{Z}^2$. Assume that $\varphi$ has simple or double poles only at $z$ belonging to a discrete subset of $\mathbb{Q}\tau+\mathbb{Q}$. It is proved in DMZ \cite{Dabholkar:2012nd} that  $\varphi$ can be decomposed as $\varphi= \varphi^P+ \varphi^F$, where the \textit{polar part} $\varphi^P$ of $\varphi$, which contains all the poles of $\varphi$, is obtained using Appell-Lerch sums. It is then shown that the finite part $\varphi^F$ is a mixed mock Jacobi form.

An interesting application of mock Jacobi forms is in counting single-centered black hole degeneracy in Type IIB superstring theory compactified on $K3\times T^2.$ Single-centered degeneracies in these theories are characterised by electric and magnetic charge vectors. The general prescription to compute the single-centered degeneracy is to integrate the partition function, which is the inverse of the Igusa cusp form, on a specific contour, see \eqref{eq 3.2} for the precise expression. One of the main results of \cite{Dabholkar:2012nd} is to show that the generating function of the single-centered degeneracies with fixed magnetic charge is a mock Jacobi form. \par There are more general such physical models called CHL models obtained as orbifolds of $K3\times T^2$ or $T^6$ compactifications of Type IIB string theory \cite{Chaudhuri:1995fk}. The problem of computing the partition function of these models and the prescription for calculating the single-centered degeneracies was solved in \cite{David:2006ji,David:2006ud, David:2006yn, Jatkar:2005bh, Sen:2007qy,Sen:2009md,Sen:2010ts}. The results are similar to those for Type IIB compactification on $K3\times T^2$ except that the Igusa cusp form is replaced by a more general Siegel modular form for certain subgroups of $\mathrm{Sp}(2,\mathbb{Z}).$ One can ask if a similar mock modular structure emerges in these more general models when considering the generating function of single-centered black hole degeneracies. For some cases, mock modularity of single-centered black hole partition function with fixed small magnetic charge was explicitly demonstrated in \cite{Chattopadhyaya_2019}. Here we provide a more uniform proof of mock modularity for these compactifications and for all magnetic charges. 
%Using our general description of polar and finite parts we reproduce the explicit computations of \cite{Chattopadhyaya_2019}. 
\par Below we summarize the results obtained in this paper.
\subsection{Summary of results}
\begin{enumerate} 
\item  In Section \ref{subsec:dmzmockjacform}, we generalize the mathematical results of DMZ \cite{Dabholkar:2012nd} to meromorphic Jacobi forms for congruence subgroups. In particular, 
if $\varphi(\tau,z)$ is a meromorphic Jacobi form of weight $k$ and index $m$ with respect to $\Gamma\ltimes\mathbb{Z}^2$ with $\Gamma\in\{\Gamma_0(N),\Gamma_1(N)\}$ for some $N\geq 1$, with simple or double poles only at $z$ belonging to a discrete subset of $\mathbb{Q}\tau+\mathbb{Q}$, then we show that $\varphi$ can be decomposed as $\varphi= \varphi^P+ \varphi^F$, where $\varphi^P$ is the polar part which contains all the poles of $\varphi$ and  the finite part $\varphi^F$, which is holomorphic, is a mixed mock Jacobi form.
\item In Section \ref{rational_index}, we introduce a generalisation of the Appell-Lerch sums considered in \cite{Dabholkar:2012nd}. For $n \in \mathbb{N}$ and $k, m \in \mathbb{Z}$, let $J_{k,\frac{m}{N}}^N(\Gamma)$ be the space of meromorphic functions from $\mathbb{H}\times \mathbb{C}$ to $\mathbb{C}$ which transform like a Jacobi form of weight $k$ and index $m/N$ for $\Gamma\ltimes (N\mathbb{Z}\times\mathbb{Z})$. Let $\psi\in J_{k,\frac{m}{N}}^N(\Gamma)$ be a function with simple or double poles on a discrete subset $S(\psi)$ of $\mathbb{Q}\tau+  \mathbb{Q}$. For $\ell\in\Z/2m\Z$, let 
$$
h_{\ell}(\tau):= h_{\ell}^{\left(-\frac{\ell N}{2 m}\tau\right)}(\tau):=q^{-\frac{\ell^{2}N}{4m}} \int_{-\frac{\ell N}{2 m}\tau}^{-\frac{\ell N}{2 m}\tau+1} \psi(\tau, z) \zeta^{-\ell} d z,
$$
and 
$$
\psi^{F}(\tau, z):=\sum_{\ell(\bmod~ 2 m)} h_{\ell}(\tau) \widetilde{\vartheta}_{\frac{m}{N}, \ell}(\tau, z),
$$
where $q=e^{2\pi i\tau},~
\zeta = e^{2\pi i z}$ and
$$
\widetilde{\vartheta}_{\frac{m}{N} , \ell } ( \tau , z )=\sum \limits_ {\substack{ r\in \mathbb{Z}\\ r \equiv \ell(\textrm{mod}\ 2m) } } q ^ {N r ^ { 2 } / 4 m } \zeta ^ { r } = \sum \limits_ { n \in \mathbb { Z } } q ^ { N( \ell + 2 m n ) ^ { 2 } / 4 m } \zeta ^ { \ell + 2 m n }.
$$
We show that each $h_{\ell}$ is a mixed mock modular form and the function $\psi^{F}$ is a mixed mock Jacobi form. More precisely, we prove that for each $\ell \in \mathbb{Z} / 2 m \mathbb{Z}$ the completion of $h_{\ell}$ defined by
$$
\widehat{h}_{\ell}(\tau):=h_{\ell}(\tau)-\sum_{s \in S(\psi) / (N\mathbb{Z}\times\mathbb{Z})} \left(D_{s}(\tau) \widetilde{\Theta}_{\frac{m}{N}, \ell}^{s *}(\tau)+E_s(\tau)\widetilde{\Xi}_{\frac{m}{N}, \ell}^{s *}(\tau)\right)
$$
where $\widetilde{\Theta}_{\frac{m}{N}, \ell}^{s *}(\tau),\widetilde{\Xi}_{\frac{m}{N}, \ell}^{s *}(\tau)$ are certain non-holomorphic functions of $\tau$ as defined in \eqref{eq 2.29} and $D_s(\tau),E_s(\tau)$ are certain coefficients of the Laurent expansion of $\psi$ defined in \eqref{eq:resdoupole}, transforms like a modular form of weight $k-\frac{1}{2}$ with respect to $\Gamma(4mN)$  and the completion of $\psi^{F}$ defined by
$$
\widehat{\psi}^{F}(\tau, z):=\sum_{\ell(\bmod 2 m)} \widehat{h}_{\ell}(\tau) \widetilde{\vartheta}_{\frac{m}{N}, \ell}(\tau, z)
$$
transforms like a Jacobi form of weight $k$ and index $\frac{m}{N}$ with respect to $\Gamma\ltimes(N\mathbb{Z}\times\mathbb{Z})$. If $\psi$ has only simple poles then $E_s=0$ and the completion of $h_{\ell}$ simplifies; see Theorem \ref{thm_rational_index}.
\item The partition function for a general CHL model is given by $\widetilde{\Phi}_k^{-1}$, where $\Phi_k$ is a Siegel modular form of weight $k$ for a certain subgroup $\widetilde{G}$ of $\mathrm{Sp}(2, \mathbb{Z})$. We prove that the partition function $\widetilde{\Phi}_k^{-1}$ has the following Fourier-Jacobi expansion: 
\begin{equation}
\begin{split}
&\frac{1}{\widetilde{\Phi}_k(\tau,z,\sigma)}=\sum_{m=-\widetilde{\alpha}}^{\infty}\phi_m(\sigma,z)q^{m},\quad q=e^{2\pi i\tau},\quad \mathrm{Im}(\tau)\to\infty\\&\frac{1}{\widetilde{\Phi}_k(\tau,z,\sigma)}=\sum_{n=-\widetilde{\gamma}N}^{\infty}\psi_n(\tau,z)p^{n/N},\quad p=e^{2\pi i\sigma},\quad \mathrm{Im}(\sigma)\to\infty,
\end{split} 
\label{eq:FJexpPhikDD'}
\end{equation}
for some functions $\phi_m(\sigma,z)$ and $\psi_n(\tau,z)$ and constants $\widetilde{\alpha},\widetilde{\gamma}$.
In Theorem \ref{CHL_thm1}, we prove that  $\phi_m$ is a meromorphic Jacobi form of weight $-k$ and index $m$ with respect to $\Gamma^1(N)\ltimes\mathbb{Z}^2$ with poles at $z\in\mathbb{Z}\sigma+\mathbb{Z}$ and $\psi_n$ is a meromorphic Jacobi form of weight $-k$ and index $n/N$ with respect to $\Gamma_1(N)\ltimes(N\mathbb{Z}\times\mathbb{Z})$ with poles at $z\in N\mathbb{Z}\tau+\mathbb{Z}$.

\item As a consequence of results in Section \ref{sec:mockjacchl}, we show that the meromorphic Jacobi forms $\phi_m$ and $\psi_n$ admit a canonical decomposition $\phi_m=\phi_m^P+\phi_m^F$ and $\psi_n=\psi_n^P+\psi_n^F$, where the polar parts take the form 
\begin{equation}
\begin{split}
&\phi_m^P(\sigma,z)={\hat f^{(k)}(m)}{\tilde g^{(k)}(\sigma)}\sum_{s\in\mathbb{Z}}\frac{p^{ms^2+s}\zeta^{2ms+1}}{(1-p^{s}\zeta)^2},\\& \psi_n^P(\tau,z)=
{\hat g^{(k)}(n)}{\tilde f^{(k)}(\tau)}\sum_{s\in\mathbb{Z}}\frac{q^{N(ns^2+s)}\zeta^{2ns+1}}{(1-q^{Ns}\zeta)^2},\end{split}  
\end{equation}
where $\zeta=e^{2\pi iz}$ and $\tilde f^{(k)},\tilde g^{(k)}$ are certain functions appearing as residues of the partition function at the poles. Consequently, $\phi_m^F$ and $\psi_m^F$
are mixed mock Jacobi forms.
\item In Theorem \ref{CHL_thm2}, we provide a uniform proof of mock modularity of single-centered black hole partition function for $\mathbb{Z}_N$ orbifold of $K3\times T^2$ as well as $T^6$ compactifications of Type IIB string theory. The precise statement is the following.
Let $A$ denote the set of integers $(n,\ell,m)$ with $m,n>0,4mn-N\ell^2>0$, 
satisfying 
\begin{eqnarray}\label{eq:AdefMinintri}
&& 
    \left[\left({m_1\ell+2ms +m\over m}\right)^2 +{m_1^2\over m^2}
    \left(4 {n\over N} m -\ell^2\right)\right] \ge 1\, , \nonumber \\ &&
    \hskip 1in \forall \ m_1\in N\mathbb{Z},\quad s\in \mathbb{Z},\quad s(s+1)= 0\  {\rm mod}\ m_1\, ,
\end{eqnarray}
and $B$ denote the set of integers $(n,\ell,m)$ with $m,n>0,4mn-N\ell^2>0$,
satisfying 
\begin{eqnarray}\label{eq:BdefMinintro}
    &&  \left[\left({n_1 N\ell-2ns-n\over n}\right)^2 
    +{n_1^2N^2\over n^2}
    \left(4 {n\over N} m -\ell^2\right)\right] \ge 1\,, \nonumber \\
 && \hskip 1in \forall\ {n_1,s\in\mathbb{Z},\quad s(s+1)=0 \ {\rm mod}\ n_1}\, . 
\end{eqnarray}
We prove that
\begin{itemize}
\item The single-centered black hole degeneracies for charges $(Q,P)$ with $Q^2/2=m/N,(Q\cdot P)=\ell$ and $ P^2/2=n$ with $(n,\ell,m)$ belonging to set $A$ are Fourier coefficients of the mock Jacobi form $\phi_m^F$.
\item The single-centered black hole degeneracies for charges $(Q,P)$ with $Q^2/2=m/N,Q\cdot P=\ell, P^2/2=n$ with $(n,\ell,m)$ belonging to  set $B$ are Fourier coefficients of the mock Jacobi form $\psi_n^F$.
\end{itemize}
We note however that the sets $A$ and $B$ differ somewhat from the one specified by DMZ even for $N=1$ for which the original DMZ analysis was performed. In particular, we give examples of charges that are neither in the set $A$ nor in the set $B$ and hence their degeneracies are not given by Fourier coefficients of $\phi_m^F$ or $\psi_n^F$ (for $N=1$, $\phi_m^F=\psi_m^F$). 
%The range of charges $n/N\leq m$ and $m/N\leq n$ is not covered in the range above. We show that for these charges one can use a fixed transformation of the partition function under $\widetilde{G}$ to map the charges to a new set of charges which are in one of the range specified in the above two items. In total, we conclude that the degeneracy for the set of charges $(n,\ell,m)$ with fixed $m$ are the coefficients of the mock Jacobi form $\phi_m^F$ and with fixed $n$ are the coefficients of $\psi_n^F$.  
\end{enumerate}
\text{}\\
%In Section \ref{sec:expcal}, we provide explicit calculations of the finite and polar parts of the first few meromorphic Jacobi forms obtained from the $\mathbb{Z}_N$ orbifold of the $K3\times T^2$ compactification of Type IIB string theory for $N=1,2,3,4,5,6,7,8$. The results agree with that of \cite{Chattopadhyaya_2019}.
This article is organized as follows. In Section \ref{sec2}, we review the definitions and examples of classical modular forms, mock modular forms, Siegel modular forms and Jacobi forms. Next, we review the results of DMZ in Section \ref{subsec:dmzmockjacform} and prove certain generalizations in Section \ref{rational_index}. Finally in Section \ref{sec:mockjacchl} we apply these results to establish the mock modularity of the single-centered black hole partition functions in CHL models. Appendix \ref{app:Phiktrans} contains some technical details of the transformation rules of the CHL partition functions and Appendix \ref{app:Cdeform} contains the proof of the validity of deformation of a certain integration contour required in Section \ref{sec:mockjacchl}.  
\section{Mathematical preliminaries}
\label{sec2}
In this section, we recall some basic facts about modular forms and mock modular forms. We follow \cite{bringmann2017harmonic,cohen2017modular,koblitz2012introduction,murty2016problems} for this exposition. Let   
\begin{equation}
\mathrm{SL}(2,\mathbb{Z}):=\left\{\begin{pmatrix}
a&b\\c&d
\end{pmatrix}:a,b,c,d\in\mathbb{Z},ad-bc=1\right\}.
\end{equation}  
Let $\mathbb{H}:=\{\tau=x+iy\in\mathbb{C}:y>0\}$ be the upper half-plane. Then $\mathrm{SL}(2,\mathbb{Z})$ acts on $\mathbb{H}$ as follows: 
\begin{equation}
\gamma \tau=\frac{a\tau+b}{c\tau+d},\quad \gamma=\begin{pmatrix}
a&b\\c&d\\
\end{pmatrix}\in \mathrm{SL}(2,\mathbb{Z}), \quad \tau\in\mathbb{H}.
\label{eq:sl2zonh}
\end{equation}
%\sout{The set $\mathbb{Q}\cup\{i\infty\}$ are called the \emph{cusps} of $\mathbb{H}$}. 
Let $\mathbb{P}^1(\mathbb{Q})=\mathbb{Q}\cup\{i\infty\}$.
We extend the action of $\mathrm{SL}(2,\mathbb{Z})$ on the \textit{extended upper half plane} $\mathbb{H}^{\star}:=\mathbb{H}\cup\mathbb{P}^1(\mathbb{Q})$ as follows. For $\gamma=\begin{pmatrix}
    a & b\\c & d\end{pmatrix},$
we define
\begin{equation}
\label{gamma_ext}
\gamma(\tau)=\left\{\begin{array}{cc}\frac{a\tau+b}{c\tau+d}, & \quad \tau \neq i\infty\\
\frac{a}{c},&\quad \tau= i\infty\\
i\infty, &\quad (c\tau+ d)= 0\;\text{or}\; c= 0.
\end{array}\right.
\end{equation}
%Also, if the denominator $c\tau+ d$ or $c$ in \eqref{gamma_ext} is zero then $\gamma(\tau)= i\infty.$

Let $\Gamma$ be a finite index subgroup of $\mathrm{SL}(2,\mathbb{Z})$. A \emph{cusp} of $\Gamma$ is a $\Gamma$-orbit in $\mathbb{P}^1(\mathbb{Q})$. 
The set of $\Gamma$-orbits of $\mathbb{P}^1(\mathbb{Q})$ is called the set of cusps of $\Gamma.$ 
Note that for any $s=\frac{p}{q}\in\mathbb{Q},$ there is a $\gamma_s\in\mathrm{SL}(2,\mathbb{Z})$ such that $\gamma_s(i\infty)=\frac{p}{q}$. Since gcd$(p,q)=1$, $\gamma_s$ has the form
\begin{equation}
\gamma_s=\begin{pmatrix}
p&\star\\q&\star\\
\end{pmatrix}.
\end{equation}
In particular, $\mathrm{SL}(2,\mathbb{Z})$ has only one cusp $i\infty$. 
Let $[\rho]$ be any cusp of $\Gamma$ and $\rho$ be a representative of this cusp. Let $\gamma_\rho \in \mathrm{SL}(2,\mathbb{Z})$ be such that $\gamma_\rho (i\infty)=\rho$. Denote by $\Gamma_\rho$ the subgroup of $\gamma_\rho^{-1} \Gamma \gamma_\rho$ that fixes $i\infty$. Explicitly, $\Gamma_\rho=\gamma_\rho^{-1}\{g \in \Gamma : g \rho = \rho\}\gamma_\rho$. Then $\Gamma_\rho$ is
generated by $-I_2$ and $\begin{pmatrix} 
1&t_\rho\\0&1\end{pmatrix}$ for some positive integer $t_\rho$. The smallest positive integer $t_\rho$ satisfying this property is called the {\em cusp width} of the cusp 
$[\rho]$ with respect to $\Gamma$.  
We often  identify a cusp with a representative from the corresponding orbit and drop the $[~]$ from the symbol of
the cusp.
\\\\ For a positive integer $N$, we now define various finite-index subgroups of $\mathrm{SL}(2, \mathbb{Z})$ that we will have occasion to use throughout the paper.
\begin{equation}
\begin{split}
&\Gamma_0(N)\coloneqq \left\{\begin{pmatrix}
a&b\\c&d\\
\end{pmatrix}\in \mathrm{SL}(2,\mathbb{Z}):c~\equiv ~0~(\text{mod} ~N)\right\}\\
&\Gamma^0(N)\coloneqq \left\{\begin{pmatrix}
a&b\\c&d\\
\end{pmatrix}\in \mathrm{SL}(2,\mathbb{Z}):b~\equiv ~0~(\text{mod} ~N)\right\}\\
&\Gamma_1(N)\coloneqq \left\{\begin{pmatrix}
a&b\\c&d\\
\end{pmatrix}\in \mathrm{SL}(2,\mathbb{Z}):\begin{array}{c}
a \equiv d \equiv 1(\bmod~ N) \\
c \equiv 0(\bmod ~N)\end{array} \right\}\\
&\Gamma^1(N):=\left\{\begin{pmatrix}
a & b \\
c & d
\end{pmatrix}
\in \mathrm{S L}(2,\mathbb{Z}): \begin{array}{c}
a \equiv d \equiv 1(\bmod~ N) \\
b \equiv 0(\bmod~ N)
\end{array}\right\}\\
&\Gamma(N):=\left\{\begin{pmatrix}
a & b \\
c & d
\end{pmatrix} \in \mathrm{S L}(2,\mathbb{Z}): \begin{array}{c}
a \equiv d \equiv 1(\bmod~ N) \\
b\equiv c \equiv 0(\bmod~ N)
\end{array}\right\}
\end{split}
\end{equation}
A subgroup $\Gamma<\mathrm{SL}(2,\mathbb{Z})$ is called a \textit{congruence subgroup} of level $N$ if $\Gamma(N)\subset\Gamma$ and $N$ is the smallest positive integer with this property. Thus $\Gamma_0(N),\Gamma_1(N),\Gamma^0(N),\Gamma^1(N)$ are all congruence subgroups of level $N$.
\par Let $\kappa\in\frac{1}{2}\mathbb{Z}$. For a function $f:\mathbb{H}\longrightarrow\mathbb{C},$ define the weight-$\kappa$ slash operator as follows:
\begin{equation}
(f\underset{\kappa}{|}\gamma)(\tau)=f(\gamma \tau)\begin{cases}(c\tau+d)^{-\kappa}&\text{if $\kappa\in\mathbb{Z},~\gamma=\left(\begin{smallmatrix}a&b\\c&d\end{smallmatrix}\right)\in\mathrm{SL}(2,\mathbb{Z})$}\\\left(\frac{c}{d}\right)\varepsilon_d^{2\kappa}(c\tau+d)^{-\kappa}&\text{if $\kappa\in\frac{1}{2}+\mathbb{Z},~\gamma=\left(\begin{smallmatrix}a&b\\c&d\end{smallmatrix}\right)\in\Gamma_0(4)$}\end{cases},
\label{eq:automorphymodform}
\end{equation}
where $\left(\frac{c}{d}\right)$ is the Jacobi symbol and for odd $d$
\begin{equation}
\varepsilon_d=\begin{cases}
1&\text{if}~d\equiv 1\bmod ~4;\\i&\text{if}~d\equiv -1\bmod~ 4 
\end{cases}
\end{equation}
where we have chosen the principal branch of square root which maps the upper half space to the first quadrant and the lower half plane to the fourth quadrant. We assume that $\gamma\in\{\Gamma_0(N),\Gamma_1(N),\Gamma^1(N),\Gamma(N)\}$ for integral $\kappa$ and $\gamma\in\{\Gamma_0(N),\Gamma_1(N),\Gamma(N)\}$ with $4|N$ for half-integral $\kappa$. This restriction for $\gamma$ depending on weight is assumed throughout this paper.
\subsection{Mock modular forms}
In simple terms, modular forms are complex functions on the upper half-plane which have a special transformation rule with respect to the action of the modular group on the upper half-plane. The precise definition is given below. 
\begin{defn}
A holomorphic function $f:\mathbb{H}\longrightarrow\mathbb{C}$ is called a holomorphic modular form (resp. weakly holomorphic modular form resp. cusp form) of weight $\kappa$ for $\Gamma$ if $f\underset{\kappa}{|}\gamma=f$ for every $\gamma\in\Gamma$ and  $f$ is holomorphic (resp. weakly holomorphic, resp. vanishes) at all cusps of $\Gamma$ in the following sense: for every $\gamma\in\mathrm{SL}(2,\mathbb{Z})$, we have the following Fourier expansion
\begin{equation}
\left(f\underset{\kappa}{|}\gamma\right)(\tau)=\sum_{n\in\mathbb{Z}}c_{\gamma}(n)q^{n/h},
\label{eq:fourexpcusp}
\end{equation}
where $q=e^{2\pi i\tau}$ and $h$ is the cusp width of $\gamma(i\infty)$ with respect to $\Gamma$. Then $f$ is said to be holomorphic (resp. weakly holomorphic resp. vanishes) at the cusp $\gamma(i\infty)$ if $c_{\gamma}(n)=0$ unless $n\geq 0$ (resp. $n\geq n_0$ for some (possibly negative) integer, resp. $n>0$).
\end{defn}
\begin{remark}\label{rem:cuspcond}
The behaviour of a function $f:\mathbb{H}\to\mathbb{C}$ satisfying $f\underset{k}{|}\gamma=f$ for every $\gamma\in\Gamma$ at a (representative of a) cusp $\rho$ of $\Gamma$ is prescribed in terms of the behaviour of the function $f\underset{k}{|}\gamma_\rho$ at\footnote{Recall that $\gamma_\rho(i\infty)=\rho$.} $i\infty$. 
Note that, in general, $\Gamma<\mathrm{SL}(2,\mathbb{Z})$ has more that one inequivalent cusp. So the condition on the Fourier expansion \eqref{eq:fourexpcusp} for every $\gamma\in\mathrm{SL}(2,\mathbb{Z})$ simply guarantees the same condition at every $\Gamma$-inequivalent cusp. This is because if $s\in\mathbb{Q}\cup \{i\infty\}$ is a cusp then $\gamma s$ is a $\Gamma$-inequivalent cusp if $\gamma\in\mathrm{SL}(2,\mathbb{Z})\backslash\Gamma$. 
\end{remark} 
\noindent We have the following notation for the corresponding $\mathbb{C}$-vector spaces :
\begin{equation}
\begin{split}
&M_{\kappa}(\Gamma) ~:~\text{space of holomorphic modular forms}\\
&M_{\kappa}^!(\Gamma) ~:~\text{space of weakly holomorphic modular forms}\\&S_{\kappa}(\Gamma)~~:~\text{space of cusp forms}.
\end{split}
\end{equation}
The first examples of modular forms are Eisenstein series which can be defined by the following Fourier expansion: for $4\leq k\in 2\mathbb{Z}$; define the  Eisenstein series of weight $k$ by
\begin{equation}\label{eq:Ekqexp}
E_k(\tau)=1-\frac{2k}{B_k}\sum_{n=1}^{\infty}\sigma_{k-1}(n)
q^n,\quad q=e^{2\pi i\tau},
\end{equation}
where $\sigma_{k}(n)=\sum_{d|n}d^k$ is the $k^{\text{th}}$-divisor function and $B_k$ are Bernoulli numbers defined by 
 \begin{equation}
 \frac{x}{e^x-1}=\sum_{k=0}^{\infty}B_k\frac{x^k}{k!}.
 \end{equation}
$E_k$ is a modular form of weight $k$ for $\mathrm{SL}(2,\mathbb{Z})$. In fact, one can show that 
\begin{equation}
M_k(\mathrm{SL}(2,\Z))=\bigoplus_{\substack{4\ell_1+6\ell_2=k\\\ell_1,\ell_2\geq 0}}\mathbb{C}\cdot E_4^{\ell_1}E_6^{\ell_2}.
\end{equation} 
There are no modular forms of odd weight, negative weight and weight 2 for $\mathrm{SL}(2,\mathbb{Z})$. Moreover, $M_0(\mathrm{SL}(2,\mathbb{Z}))=\mathbb{C}$. There are no nontrivial cusp forms for $\mathrm{SL}(2,\mathbb{Z})$ for weight less that 12. Moreover there is a unique cusp form of weight 12 called the Ramanujan discriminant function:
\begin{equation}
\Delta(\tau)=q\prod_{n=1}^\infty(1-q^n)^{24}=\frac{E_4(\tau)^3-E_6(\tau)^2}{1728}.
\end{equation}
Note that if we define $E_2$ by substituting $k=2$ in \eqref{eq:Ekqexp}, it is a well defined holomorphic function. But it does not transform as a modular form. It is an example of a mock modular form which will be described below. It transforms as:
\begin{equation}\label{eq:E2tras}
E_2\left(\frac{a\tau+b}{c\tau+d}\right)=(c\tau+d)^2E_2(\tau)+\frac{6c(c\tau+d)}{\pi i},\quad \begin{pmatrix}
a&b\\c&d
\end{pmatrix}\in \mathrm{SL}(2,\mathbb{Z}).
\end{equation}
Eisenstein series can be generalised to congruence subgroups of $\mathrm{SL}(2,\mathbb{Z})$. \par Examples of half-integral weight modular forms include the classical theta function defined by 
\begin{equation}
\Theta(\tau)=\sum_{n\in\mathbb{Z}}q^{n^2},\quad q=e^{2\pi i\tau}.
\end{equation}   
It is a modular form of weight 1/2 for $\Gamma_0(4)$. Another important function is the Dedekind eta function defined as 
\begin{equation}
\eta(\tau)=q^{\frac{1}{24}}\prod\limits_{n=1}^{\infty}(1-q^n),\quad q=e^{2\pi i\tau},
\end{equation}
so that $\eta^{24}=\Delta$. It transforms similar to a modular form of weight 1/2 but with a phase:\begin{equation}
\eta(\tau+1)=e^{\frac{\pi i}{12}}\eta(\tau),\quad\eta\left(-\frac{1}{\tau}\right)=\sqrt{-i\tau}\eta(\tau).
\end{equation} 
\par We now define mock modular forms.
For $\kappa\in\R$, introduce the weight-$\kappa$ hyperbolic Laplacian (here $\tau=x+iy$)
\begin{equation}
\Delta_{\kappa}\coloneqq -y^2\left(\frac{\partial^2}{\partial x^2}+\frac{\partial^2}{\partial y^2}\right)+i\kappa y\left(\frac{\partial}{\partial x}+i\frac{\partial}{\partial y}\right)=-4y^2\frac{\partial}{\partial \tau}\frac{\partial}{\partial \bar{\tau}}+2i\kappa y\frac{\partial}{\partial \bar{\tau}}.
\end{equation}
\begin{defn}
A smooth function (in real sense) $f:\mathbb{H}\rightarrow\mathbb{C}$ is called a harmonic Maass form\footnote{In some papers, the term harmonic weak Maass form is also used to mean harmonic Maass form.} of weight $\kappa\in\frac{1}{2}\Z$ for the group $\Gamma$ if 
\begin{enumerate}
\item $f\underset{\kappa}{|}\gamma=f$ for every $\gamma\in\Gamma.$
\item $\Delta_{\kappa}(f)=0$.
\item There exists a polynomial $P_f(\tau)\in\mathbb{C}[q^{-1}]$ such that $f(\tau)-P_f(\tau)=O(e^{-\epsilon y})$ as $y\rightarrow\infty$ for some $\epsilon>0$. Similar conditions hold at other cusps (see Remark \ref{rem:cuspcond}). The polynomial $P_f(\tau)$ is called the principal part of $f$. 
\end{enumerate}
If the third condition in the above definition is replaced by $f(\tau)=O(e^{\epsilon y})$, then $f$ is said to be a harmonic weak Maass form of \emph{manageable growth} which we shall simply call a harmonic Maass form. The space of harmonic weak Maass forms of weight $\kappa$ for $\Gamma$ is denoted by $H_{\kappa}(\Gamma)$ and the space of harmonic weak Maass forms of manageable growth is denoted by $H_{\kappa}^{!}(\Gamma)$.  
\end{defn}
For $\Gamma\in\{\Gamma_0(N),\Gamma_1(N)\}$, any   
$f\in H_{\kappa}^{!}(\Gamma)$ for $\kappa\neq 1$ has a Fourier expansion \cite[Lemma 4.3]{bringmann2017harmonic} of the form  
\begin{equation}
f(\tau)=f(x+iy)=\sum\limits_{n>> -\infty}c_f^{+}(n)q^{n}+ c_f^{-}(0)y^{1-\kappa}+\sum\limits_{\substack{n<< \infty\\n\neq 0}}c_f^{-}(n)\Gamma(1-\kappa,-4\pi ny)q^{n},
\label{eq 1.1}
\end{equation}
where $\Gamma(s,z)$ is the incomplete gamma function defined as 
\begin{equation}
\Gamma(s,z)\coloneqq\int\limits_z^{\infty}e^{-t}t^s\frac{dt}{t}.
\end{equation}
The notation $\sum\limits_{n>> -\infty}$ means $\sum\limits_{n=\alpha_f}^{\infty}$ for some $\alpha_f\in\mathbb{Z}$ and $\sum\limits_{n<<\infty}$ means $\sum\limits_{n=-\infty}^{\beta_f}$ for some $\beta_f\in\mathbb{Z}$. When $\kappa=1$, the term $y^{1-\kappa}$ is replaced by $\log y$ in the Fourier expansion \eqref{eq 1.1}.
We call 
\begin{equation}
f^+(\tau)=\sum\limits_{n>> -\infty}c_f^{+}(n)q^{n}
\end{equation}
the \emph{holomorphic part} of $f$ and 
\begin{equation}
f^-(\tau)=c_f^{-}(0)y^{1-\kappa}+\sum\limits_{\substack{n<< \infty\\n\neq 0}}c_f^{-}(n)\Gamma(1-\kappa,-4\pi ny)q^{n}
\end{equation}
 the \emph{nonholomorphic part} of $f$. If $f\in H_{\kappa}(\Gamma)$ then $c_f^-(0)=0$ and the sum in the nonholomorphic part runs only over negative integers. \par Harmonic weak Maass forms are related to modular forms via the \emph{shadow operator} defined as\footnote{Note that our definition of shadow operator differs from \cite{Dabholkar:2012nd} by a factor of $-(4\pi)^{\kappa-1}$, that is, $\overline{\xi_\kappa^{\text{DMZ}}}=-(4\pi)^{\kappa-1}\xi_\kappa$. This normalization
 follows \cite{bringmann2017harmonic}.}
\begin{equation}\label{e218x}
\xi_{\kappa}\coloneqq 2iy^{\kappa}\overline{\frac{\partial}{\partial\bar{\tau}}},
\end{equation} 
where $\overline{\frac{\partial}{\partial\bar{\tau}}}$ corresponds to complex conjugation after taking the deivative with respect to $\Bar{\tau}$.
The image of the shadow operator is a weakly holomorphic modular form \cite[Theorem 5.10]{bringmann2017harmonic}. Precisely, for $\kappa\neq 1$, we have
\begin{equation}
\xi_{2-\kappa}(f)\in\begin{cases}M_{\kappa}^!(\Gamma)&\text{if $f\in H_{2-\kappa}^{!}(\Gamma)$}\\
S_{\kappa}(\Gamma) &\text{if $f\in H_{2-\kappa}(\Gamma).$}
\end{cases}
\end{equation}
The image of $f$ under $\xi$ is called the \emph{shadow} of $f$. Moreover this map is surjective \cite{bringmann2017harmonic}. 
\begin{defn}
A \emph{pure mock modular form} of weight $\kappa$ is the holomorphic part $f^+$ of a harmonic Maass form of weight $\kappa$ for which $f^-$ is nontrivial. The weakly holomorphic modular form $\xi_{\kappa}(f)$ is called the shadow of the mock modular form $f^+$. The harmonic Maass form $f$ is called the \emph{completion} of the mock modular form $f^+$. 
\end{defn}
\noindent Given a mock modular form $f^+$ of weight $\kappa$ with shadow 
\begin{equation}
g(\tau)=\sum_{n=0}^{\infty}c_g(n)q^n\in M^!_{2-\kappa}(\Gamma),
\end{equation}
one can construct the corresponding harmonic weak Maass form $\widehat{f}=f^++f^-$ using the \textit{nonholomorphic Eichler integral}:  
\begin{equation}
f^-(\tau)=-\frac{\overline{c_g(0)} y^{1-\kappa}}{\kappa-1}-(4\pi)^{\kappa-1}\sum_{n>0} n^{\kappa-1} \overline{c_g(n)} \Gamma\left(1-\kappa, 4 \pi n y\right) q^{-n}.
\label{eq:eichintsum}
\end{equation}
The first term should be replaced by $-\overline{c_g(0)} \log y$ if $\kappa=1$. The nonholomorphic part $f^-$ satisfies the \emph{Eichler differential equation} \cite{bringmann2017harmonic,Dabholkar:2012nd} 
\begin{equation}
2i y^{\kappa} \frac{\partial f^-(\tau)}{\partial \bar{\tau}}=- \overline{g(\tau)} .
\end{equation}
When $\kappa>1$ or $c_{g}(0)=0$, we can define $f^-$ by the integral
\begin{equation}
f^-(\tau)=2^{\kappa-1}i \int_{-\bar{\tau}}^{i\infty}d \omega \frac{\overline{g(-\bar{\omega})}}{(-i(\omega+\tau))^{\kappa}}  .
\label{eq:eichintint}
\end{equation}
Moreover, since $f^+$ is holomorphic, it is clear that 
\begin{equation}
2i y^{\kappa}  \frac{\partial \widehat{f}(\tau)}{\partial \bar{\tau}}=- \overline{g(\tau)} .
\label{eq:diffeqharmass}
\end{equation}
%\textcolor{blue}{Reply to Ashoke sir's comment: Yes, there are harmonic Maass forms of polynomial growth at every cusp which are intermediate between harmonic Maass forms and harmonic weak Maass forms. These harmonic Maass forms are spanned by the lift of Eisenstein series under the shadow map. Zagier calls them strongly holomorphic harmonic Maass forms.}\\\\
\par The first example of a mock modular form is $E_2$ with its completion given by 
\begin{equation}
E_2^*(\tau)=E_2(\tau)-\frac{3}{\pi y},\quad \tau=x+iy.
\end{equation}
Using the transformation \eqref{eq:E2tras}, it is easy to show that $E_2^*$ transforms like a modular form of weight $2$ for $\mathrm{SL}(2,\Z)$. 
The second example that we will be interested in is the generating function of Hurwitz class numbers:
\begin{equation}
    \textbf{H}(\tau):=-\frac{1}{12}+\sum\limits_{n=1}^{\infty}H(n)q^n,
\end{equation}
where $H(n)$ is the Hurwitz class number of discriminant $(-n)$, see for example \cite{cohen2017modular} for the definition of Hurwitz class numbers. It was proved by Zagier \cite{Zagier1975NombresDC} that $\textbf{H}(\tau)$ is a mock modular form of weight 3/2 for $\Gamma_0(4)$ with shadow $-\frac{1}{16}\Theta$. More precisely,
\begin{equation}\label{eq:Htmock3/2}
\mathcal{H}(\tau)\coloneqq -\frac{1}{12}+\sum\limits_{n=1}^{\infty}H(n)q^n+\frac{1}{4\sqrt{\pi}}\sum\limits_{n=1}^{\infty}n\Gamma\left(-\frac{1}{2},4\pi n^2y\right)q^{-n^2}+\frac{1}{8\pi\sqrt{y}}
\end{equation}
is a harmonic Maass form of weight $3/2$ for $\Gamma_0(4)$, see \cite{bhand2022zagiers} for a detailed proof. \\\\These are the simplest kind of mock modular forms. Let us now define mixed mock modular forms. Let $\kappa_1,\kappa_2\in\frac{1}{2}\Z$.
\begin{defn}\label{mixedmmfdef}
A \emph{mixed mock modular form} of mixed weight $(\kappa_1,\kappa_2)$ is a holomorphic function $h$  which admits a completion of the form
\begin{equation}\label{eq:mixmockcomp}
   \widehat{h}(\tau):=h(\tau)+\sum_{j=1}^{r} f_{j}(\tau) g_{j}^*(\tau) 
\end{equation}
where for each $j,$ we have $f_{j} \in M_{\kappa_1}^{!}(\Gamma)$ and $g_{j} \in M_{2-\kappa_2}^{!}(\Gamma)$, where $g_j^*$ is the Eichler integral of $g_j$ as in \eqref{eq:eichintsum}, such that $\widehat{h}(\tau)\in H_{\kappa_1+\kappa_2}^!(\Gamma)$.  A \textit{mixed mock modular form} of weight $\kappa\in\frac{1}{2}\Z$ is a holomorphic function $h$ with completion as in \eqref{eq:mixmockcomp} with $f_{j} \in M_{\kappa_j}^{!}(\Gamma)$ and $g_{j} \in M_{2-\kappa'_j}^{!}(\Gamma)$ with $\kappa_j+\kappa_j'=\kappa$ and $\kappa_j,\kappa_j'\in\frac{1}{2}\Z$ for each $j=1,\dots,r$.   
\end{defn}
\noindent The shadow of a mixed harmonic Maass form $h$ as above is defined as 
\begin{equation}
\xi_{\kappa_2}(h)=\sum_{i=1}^r\overline{f_j(\tau)}\xi_{\kappa_2}(g_j^*(\tau))=\sum_{i=1}^r\overline{f_j(\tau)}g_j(\tau).
\end{equation}
%Again given a mixed modular form and its shadow, we can recover the corresponding mixed harmonic Maass form using the nonholomorphic Eichler integral for each component.   
We will construct examples of mixed mock modular forms in subsequent sections.
\subsection{Weak Jacobi forms}\label{subsec:weakjac}
In this section, we review Jacobi forms. We follow \cite{eichler2013theory} for this exposition. 
Let $\mathrm{GL}^+(2,\mathbb{R})$ be the group of $2\times 2$ real matrices with positive determinant. Let $k,m\in\mathbb{R}$ with $m$ positive. Let $\mathbf{e}(t):=e^{2\pi it}$. Define the weight $k$ and index $m$ slash operator with respect to the group $\mathrm{GL}^+(2,\mathbb{R})\ltimes\mathbb{R}^2$ on functions $\varphi:\mathbb{H}\times\mathbb{C}\longrightarrow\mathbb{C}$ as follows:
 \begin{enumerate}
 \item \emph{Modular transformation:} 
\begin{equation}
\begin{split}
\left(\varphi\underset{k, m}{|}\gamma\right)(\tau, z):=(\text{det}\gamma)^{k/2}(c \tau+d)^{-k} \mathbf{e}\left(-\frac{m}{\text{det}\gamma}\left(\frac{c z^{2}}{c \tau+d}\right)\right) \varphi\left(\frac{a \tau+b}{c \tau+d}, \frac{z}{c \tau+d}\right),\\\gamma=\begin{pmatrix}
a & b \\
c & d
\end{pmatrix}\in \mathrm{GL}^+(2,\mathbb{R})
\end{split}
\end{equation}
\item \emph{Elliptic transformation:}
\begin{equation}
\left(\varphi\underset{m}{|}(\lambda, \mu)\right)(\tau, z):= \mathbf{e}\left(m\left(\lambda^{2} \tau+2 \lambda z+\lambda\mu\right)\right) \varphi(\tau, z+\lambda \tau+\mu),\quad(\lambda, \mu) \in \mathbb{R}^{2}.
\end{equation}
\end{enumerate}
One can easily check that for every $\gamma,\gamma'\in \mathrm{GL}^+(2,\mathbb{R})$ and $X=(\lambda,\mu),X'=(\lambda',\mu')\in\mathbb{R}^2$, we have 
\begin{equation}
\left(\varphi\underset{k, m}{|}\gamma\right)\underset{k, \frac{m}{\text{det}\gamma}}{|}\gamma'=\varphi\underset{k, m}{|}\left(\gamma\gamma'\right),
\label{eq:phigg'mdetg}
\end{equation}
\begin{equation}
(\varphi\underset{m}{|}X)\underset{m}{|} X^{\prime}=\mathbf{e}(m(\lambda\mu'-\lambda'\mu))\varphi\underset{m}{|}\left(X+X^{\prime}\right),
\end{equation}
and 
\begin{equation}
\begin{split}
\left(\varphi\underset{k, m}{|}\gamma\right)\underset{\frac{m}{\text{det}\gamma}}{|}X\gamma=\left(\varphi\underset{m}{|}X\right)\underset{k, m}{|}\gamma.
\end{split}
\label{eq:Xgg'Xmdetg}
\end{equation}
For a subgroup $\Gamma\subseteq\mathrm{SL}(2,\mathbb{Z})$ we can combine the elliptic and modular action into one slash operator of weight $k$ and index $m$ as follows: 
\begin{eqnarray}\label{eq:phi|[gX]}
\varphi\underset{k, m}{|}[\gamma,X]=\left(\varphi\underset{k,m}{|}\gamma\right)\underset{m}{|}X.  
\end{eqnarray}
More explicitly we have
\begin{equation}
\begin{split}
\left(\varphi\underset{k, m}{|}\left[\begin{pmatrix}
a & b \\
c & d
\end{pmatrix},(\lambda,\mu)\right]\right)(\tau,z)=(c\tau+d)^{-k}&\mathbf{e}\left(m\left(-\frac{c (z+\lambda\tau+\mu)^{2}}{c \tau+d}+\lambda^{2} \tau+2 \lambda z+\lambda\mu\right)\right)\\&\times\varphi\left(\frac{a\tau+b}{c\tau+d},\frac{z+\lambda\tau+\mu}{c\tau+d}\right)
\end{split}
\end{equation}
We can put a group structure on $\Gamma\times\mathbb{Z}^2$ as
\begin{equation}
[\gamma,X]\cdot[\gamma',X']=[\gamma\gamma',X'+X\gamma']
\end{equation} 
which turns $\Gamma\times\mathbb{Z}^2$ into the semidirect product group $\Gamma\ltimes\mathbb{Z}^2$. 
Then using \eqref{eq:phi|[gX]}, one can show that the weight $k$ and index $m$ slash operator defines a group action of $\Gamma\ltimes\mathbb{Z}^2$ on complex functions on $\mathbb{H}\times\mathbb{C}$; that is 
\begin{equation}
\left(\varphi\underset{k, m}{|}[\gamma,X]\right)\underset{k, m}{|}[\gamma',X']=\varphi\underset{k, m}{|}[\gamma\gamma',X'+X\gamma'].
\end{equation}
\begin{defn}\label{jacobidef}
Let $k,m\in\mathbb{Z}$ and $\varphi:\mathbb{H}\times\mathbb{C}\longrightarrow\mathbb{C}$ be a function satisfying
\begin{enumerate}
\item\label{it:jacobi_transformation} $\varphi\underset{k, m}{|}[M,X]=\varphi$ for all $[M,X]\in\Gamma\ltimes\mathbb{Z}^2$.
\item For every $M\in\mathrm{SL}(2,\mathbb{Z})$, $\varphi\underset{k,m}{|}M$ has a Fourier expansion of the form:
\begin{equation}
\left(\varphi\underset{k,m}{|}M\right)(\tau,z)=\sum\limits_{n,r\in\mathbb{Z}}c_M(n,r)q^{n/h}\zeta^r
\end{equation}
where $q=\mathbf{e}(\tau), \zeta=\mathbf{e}(z).$
\end{enumerate}
The function 
$\varphi$ is called a holomorphic (cusp or weak) Jacobi form of weight $k$ and index $m$ if it is holomorphic on $\mathrm{H}\times\mathbb{C}$ and 
\begin{equation}
\begin{aligned}
c_M(n,r)=0\quad\text{unless}\quad &(4nm-r^2h)\geq 0\quad&(\text{holomorphic})\\
&(4nm-r^2h)> 0\quad&(\text{cusp})
\\&n\geq 0\quad&(\text{weak})
\\&n\geq n_0~(\text{possibly negative})\quad&(\text{weakly holomorphic}),
\label{eq 1.2}
\end{aligned}
\end{equation}
where $h$ is the width of the cusp $M(i\infty)$ with respect to $\Gamma$.
A function $\varphi:\mathbb{H}\times\mathbb{C}\to\mathbb{C}$ is called a \textit{meromorphic Jacobi form} if it satisfies Condition \ref{it:jacobi_transformation} above but is meromorphic in $z$ and weakly holomorphic in $\tau$ (i.e., could have pole 
at $\tau=i\infty$ for fixed $z$.) 

The space of Jacobi forms (resp. Jacobi cusp forms, resp. weak Jacobi forms) of weight $k$ and index $m$ for $\Gamma$ is denoted by $J_{k,m}(\Gamma)$ (resp. $J_{k,m}^{\text{cusp}}(\Gamma)$, resp. $J_{k,m}^{\mathrm{weak}}(\Gamma)$). 
\end{defn}
For $m\in\mathbb{Z}$, any smooth function $\phi(\tau,z)$, such that $z\to\phi(\tau,z)$ is holomorphic for every $\tau\in\mathbb{H}$, satisfies $\phi\underset{m}{|}(\lambda,\mu)=\phi$ for all $\lambda,\mu\in\mathbb{Z}$ if and only if it has a \emph{theta decomposition} of the form \cite{eichler2013theory}
\begin{equation}
\phi ( \tau , z ) = \sum \limits_ { \ell \in \mathbb { Z } / 2 m \mathbb{Z} } h _ { \ell } ( \tau ) \vartheta _ { m , \ell } ( \tau , z ),
\label{eq 1.4}
\end{equation} 
where $\vartheta_{m,\ell}$ is the index $m$ Jacobi theta function defined by 
\begin{equation}\label{eq:thetamldef}
\vartheta _ { m , \ell } ( \tau , z )\coloneqq \sum \limits_ {\substack{ r\in \mathbb{Z}\\ r \equiv \ell(\textrm{mod}\ 2m) } } q ^ { r ^ { 2 } / 4 m } \zeta ^ { r } = \sum \limits_ { n \in \mathbb { Z } } q ^ { ( \ell + 2 m n ) ^ { 2 } / 4 m } \zeta ^ { \ell + 2 m n }.
\end{equation}
Indeed, from $\phi\underset{m}{|}(0,1)=\phi$, we get $\phi(\tau,z+1)=\phi(\tau,z)$ which gives us a Fourier expansion for $\phi:$
\begin{equation}
    \phi(\tau,z)=\sum_{\ell\in\Z}c_\ell(\tau)\zeta^\ell
\end{equation}
for some smooth functions $c_\ell:\mathbb{H}\longrightarrow\mathbb{C}$. Then the identity $\phi\underset{m}{|}(1,0)=\phi$  implies that 
\begin{equation}
    \sum_{\ell\in\Z}c_\ell(\tau)q^{\ell}\zeta^\ell
    =q^{-m}\zeta^{-2m}\sum_{\ell\in\Z}c_\ell(\tau)\zeta^\ell=q^{-m}\sum_{\ell\in\Z}c_{\ell+2m}(\tau)\zeta^\ell,
\end{equation}
which implies 
\begin{eqnarray}
 c_\ell(\tau)q^{\ell+m}= c_{\ell+2m}(\tau).  
\end{eqnarray}
As a result, $c_\ell(\tau)q^{-\ell^2/4m}$ depends only on $\ell\bmod 2m$:
\begin{eqnarray}
    c_\ell(\tau)q^{-\ell^2/4m}=c_{\ell+2mk}(\tau)q^{-(\ell+2mk)^2/4m},\quad k\in\Z.
\end{eqnarray}
This gives us the expansion \eqref{eq 1.4} with the identification $h_\ell(\tau)=c_\ell(\tau)q^{-\ell^2/4m}$.\par
The theta function $\vartheta_{m,\ell}$ transforms as follows:
\begin{equation}
\begin{split}
&\vartheta_{m,\ell}(\tau+1,z)=e^{2\pi i\ell^2/4m}\vartheta _ { m , \ell } ( \tau , z ),\\& \vartheta _ { m , \ell } \left( -\frac{1}{\tau} , \frac{z}{\tau} \right)=\sqrt{\frac{\tau}{2mi}}\exp\left(\frac{2\pi imz^2}{\tau}\right)\sum_{r~(\text{mod}~2m)}e^{-2\pi ir\ell/2m}\vartheta _ { m , r } ( \tau , z ),\\&\vartheta_{m,\ell}(\tau,z+\lambda\tau+\mu)=e^{-2\pi im(\lambda^2\tau+2\lambda z+\lambda\mu)}\vartheta _ { m , \ell } ( \tau , z ),\quad (\lambda,\mu)\in\Z^2.
\end{split}
\end{equation}
If $\phi$ is holomorphic on $\mathbb{H}\times\mathbb{C}$, then the theta coefficients $h_{\ell}(\tau)$ can be given by the integral formula:
\begin{equation}\label{eq:hlthcoeffint}
h_{\ell}(\tau):=h_{\ell}^{\left(z_{0}\right)}(\tau)=q^{-\ell^{2} / 4 m} \int_{z_{0}}^{z_{0}+1} \varphi(\tau, z) \mathbf{e}(-\ell z) d z,
\end{equation}
where $z_0$ is an arbitrary point of $\mathbb{C}.$ If in particular $\phi\in J_{k,m}^{\mathrm{weak}}(\Gamma)$ with $\Gamma\in\{\Gamma_0(N),\Gamma_1(N)\}$ then each $h_{\ell}(\tau)$ is a weakly holomorphic modular form of weight $k-\frac{1}{2}$ with respect to $\Gamma(4mN)$, see \cite{DAS2015351} for a proof. \par An important class of useful examples of index 1 is provided by the functions \cite{eichler2013theory} 
\begin{equation}\label{eq:phik1defCk}
\varphi_{k,1}(\tau,z)=\sum_{n,r\in\Z}C_k(4n-r^2)q^n\zeta^r,\quad k=-2,0,10,12,
\end{equation}
where $C_k(4n-r^2)$ are some coefficients. More explicitly, one has 
\begin{equation}
\begin{split}
&\varphi_{-2,1}=\frac{(\zeta-1)^2}{\zeta}-2\frac{(\zeta-1)^4}{\zeta^2}q+2\frac{(\zeta-1)^4(\zeta^2-8\zeta+1)}{\zeta^3}q^2+\dots\\
&\varphi_{0,1}=\frac{\zeta^2+10\zeta+1}{\zeta}+2\frac{(\zeta-1)^2(5\zeta^2-22\zeta+5)}{\zeta^2}q+\dots
\end{split}
\end{equation}
$\varphi_{10,1}$ and $\varphi_{12,1}$ is determined by 
\begin{equation}
\varphi_{10,1}=\varphi_{-2,1}\Delta,\quad \varphi_{12,1}=\varphi_{0,1}\Delta.
\end{equation}
In terms of Jacobi theta functions we have 
\begin{equation}\label{eq:ABdef}
\begin{split}
&A(\tau,z):=\varphi_{-2,1}(\tau,z)=\frac{\vartheta_1^2(\tau,z)}{\eta^6(\tau)},\quad \varphi_{10,1(\tau,z)}=\eta^{24}(\tau)\varphi_{-2,1}(\tau,z)=\eta^{18}(\tau)\vartheta_1^2(\tau,z)\\&B(\tau,z):=\varphi_{0,1}(\tau,z)=4\left(\frac{\vartheta_2^2(\tau,z)}{\vartheta_2^2(\tau)}+\frac{\vartheta_3^2(\tau,z)}{\vartheta_3^2(\tau)}+\frac{\vartheta_4^2(\tau,z)}{\vartheta_4^2(\tau)}\right),
\end{split}
\end{equation}
where $\vartheta_i(\tau)=\vartheta_i(\tau,0)$ and $\vartheta_i(\tau,z)$ are Jacobi theta functions defined by 
\begin{equation}
\begin{split}
&\vartheta_1(\tau,z)=\sum_{n\in\mathbb{Z}}(-1)^nq^{\frac{1}{2}(n-\frac{1}{2})^2}\zeta^{n-\frac{1}{2}},\quad \vartheta_2(\tau,z)=\sum_{n\in\mathbb{Z}}q^{\frac{1}{2}(n-\frac{1}{2})^2}\zeta^{n-\frac{1}{2}},\\&\vartheta_3(\tau,z)=\sum_{n\in\mathbb{Z}}q^{\frac{n^2}{2}}\zeta^{n},\quad \vartheta_4(\tau,z)=\sum_{n\in\mathbb{Z}}(-1)^nq^{\frac{n^2}{2}}\zeta^{n}.
\end{split} 
\end{equation} 
The functions $A$ and $B$ are the unique weak Jacobi forms of index 1 and weight $\leq 0$ and generate the ring of weak Jacobi forms of even weight over the ring of modular forms for $\mathrm{SL}(2,\mathbb{Z})$:
\begin{equation}
J_{k,m}^{!}(\mathrm{SL}(2,\mathbb{Z}))=\bigoplus_{j=0}^mM_{k+2j}(\mathrm{SL}(2,\mathbb{Z}))\cdot A^jB^{m-j}.
\end{equation}
On the other hand $\varphi_{10,1}$ and $\varphi_{12,1}$ are Jacobi cusp forms of index 1 and smallest possible weight. 
\subsection{Siegel modular forms}
Siegel modular forms are generalisations of modular forms and are intimately connected to Jacobi forms. We will restrict to the discussion of Siegel modular forms of degree 2. We define the degree 2 symplectic group as 
\begin{equation}
\mathrm{Sp}(2,\mathbb{R})\coloneqq \{g\in \mathrm{GL}(4,\mathbb{R}):gJg^t=J\}
\end{equation}  
where 
\begin{equation}
J:=\begin{pmatrix}
0&-I_2\\I_2&0
\end{pmatrix}
\end{equation}
with $I_2$ the $2\times 2$ identity matrix. 
The Siegel upper half space $\mathbb{H}_2$ of degree two is a generalisation of the upper half space $\mathbb{H}$ and is defined as
\begin{equation}
\mathbb{H}_2=\bigg\{\Omega=\begin{pmatrix}
    \tau&z\\z&\sigma
    \end{pmatrix}:\tau,\sigma\in\mathbb{H};z\in\mathbb{C},\text{det}(\text{Im}(\Omega))>0\bigg\}.
\end{equation} 
The symplectic group acts on $\mathbb{H}_2$ as follows \cite{77c4633f-ad35-333d-8e41-59ec3cf33f0b} 
\begin{equation}
(M,\Omega)\mapsto M\circ \Omega\coloneqq (A\Omega+B)(C\Omega+D)^{-1},\quad M=\begin{pmatrix}
A&B\\C&D
\end{pmatrix}\in \mathrm{Sp}(2,\mathbb{R}).
\label{eq:MonH2}
\end{equation}
Let $\Phi:\mathbb{H}_2\longrightarrow\mathbb{C}$ be a function. Then we define the weight $k\in\mathbb{Z}$ slash operator on $f$ as follows \footnote{Note that the determinant of any matrix in $\mathrm{Sp}(2,\mathbb{R})$ is 1, so we do not have any $(\mathrm{det}(M))^{k/2}$ factor in this case as compared to the action of $\mathrm{GL}^+(2,\mathbb{R})$ case.}:
\begin{equation}
(\Phi\underset{k}{|}M)(\Omega):=\text{det}(C\Omega+D)^{-k}\Phi(M\circ\Omega)
\label{eq:siegslashopprop}
\end{equation}
where $M$ is as in \eqref{eq:MonH2}. One can show that 
\begin{equation}
(\Phi\underset{k}{|}M_1)\underset{k}{|}M_2=(\Phi\underset{k}{|}M_1M_2),\quad  M_1,M_2\in\mathrm{Sp}(2,\mathbb{R}).
\label{eq:siegslashopprop}
\end{equation}
Note that a function $\Phi:\mathbb{H}_2\longrightarrow\mathbb{C}$ can be considered as a complex function of three variables and then one can talk about its analytic properties in the usual way. Most of the times we will write $\Phi(\Omega)=\Phi(\tau,z,\sigma)$ where 
\begin{equation}
\Omega=\begin{pmatrix}
    \tau&z\\z&\sigma
    \end{pmatrix}.
\end{equation} 
\begin{defn}\label{siegeldef}
Let $G$ be a subgroup of $\mathrm{Sp}(2,\mathbb{Z})$. A holomorphic function $\Phi:\mathbb{H}_2\longrightarrow\mathbb{C}$ is called a Siegel modular form of weight $k$ for $G$ if 
\begin{equation}
(\Phi\underset{k}{|}M)(\Omega)=\Phi,\quad \forall~~M\in G.
\end{equation}  
\end{defn}
Depending on $G$, the function $\Phi$ satisfies periodicity in the three variables because of the modular invariance. Let $N_1,N_2,N_3$ be the least positive integers such that
\begin{equation}
\begin{pmatrix}
1&0&N_1&N_2\\
0&1&N_2&N_3\\
0&0&1&0\\
0&0&0&1
\end{pmatrix}\in G.
\end{equation}
Then the modular invariance for these matrices shows that 
\begin{equation}
\Phi(\tau+N_1,z+N_2,\sigma+N_3)=\Phi(\tau,z,\sigma).
\end{equation}
For example when $G=\mathrm{Sp}(2,\mathbb{Z})$, we have $N_1=N_2=N_3=1$. We have a Fourier expansion for $\Phi$:
\begin{equation}
\Phi(\tau,z,\sigma)=\sum_{n,m,r\in\mathbb{Z}}a(n,m,r)q^{n/N_1}\zeta^{r/N_2}p^{m/N_3},\quad q=\mathbf{e}(\tau),\zeta=\mathbf{e}(z),p=\mathbf{e}(\sigma).
\end{equation} 
The sum over indices $n,m,r$ may be restricted depending on the analytic properties of $\Phi$. Rearranging the Fourier expansion, one can write 
\begin{equation}
\Phi(\tau,z,\sigma)=\sum_{m}\varphi_m(\tau,z)p^{m/N_3}.
\label{eq:fourjacexp}
\end{equation} 
Then using the embedding of $\mathrm{SL}(2,\mathbb{Z})$ and $\mathbb{Z}^2$ into $\mathrm{Sp}(2,\mathbb{Z})$ given by 
\begin{equation}
\begin{pmatrix}
a&b\\c&d
\end{pmatrix}\mapsto\begin{pmatrix}
a&0&b&0\\
0&1&0&0\\
c&0&d&0\\
0&0&0&1
\end{pmatrix},\quad (\lambda,\mu)\mapsto\begin{pmatrix}
1&0&0&\lambda\\
-\mu&1&\lambda&0\\
0&0&1&\mu\\
0&0&0&1
\end{pmatrix}, 
\end{equation}
one can show that each $\varphi_m$ transforms like a Jacobi form of weight $k$ and index $m$ with respect to some subgroup of $\mathrm{SL}(2,\mathbb{Z})\ltimes\mathbb{Z}^2$. For this reason the expansion \eqref{eq:fourjacexp} is called the Fourier-Jacobi expansion of $\Phi$.\\\\One of the first examples of a Siegel modular form is the Igusa cusp form defined as 
\begin{equation}
\Phi_{10}(\tau,z,\sigma)=q\zeta p\prod_{(r,s,t)>0}(1-q^s\zeta^t p^r)^{2C_0(4rs-t^2)}
\end{equation}
where $(r,s,t)>0$ means that $r,s,t\in\Z$ with with either $r>0$ or $r=0,s>0$, or $r=s=0,t<0$ and $C_0$ is the coefficient appearing in \eqref{eq:phik1defCk}. $\Phi_{10}$ is a Siegel cusp form of weight 10 for the Siegel modular group Sp$(2,\Z)$. For another class of Siegel modular forms of weight $k$ and a subgroup $\widetilde{G}$ of Sp$(2,\Z)$; see Section \ref{sec:mockjacchl} and Appendix \ref{app:Phiktrans}.    
\section{Results of Dabholkar-Murthy-Zagier}\label{subsec:dmzmockjacform}
In this section, we review the results of \cite{Dabholkar:2012nd} on mock Jacobi forms of $\mathrm{SL}(2,\mathbb{Z})$ and generalize them to congruence subgroup of $\mathrm{SL}(2,\mathbb{Z})$. %and slightly generalize it. 
For those parts of DMZ analysis which can be used without any modification in our analysis of fractional index mock Jacobi forms in Section \ref{rational_index}, we provide detailed proofs of the results. For the other parts, we state the results but postpone the proof to Section \ref{rational_index} where we prove more general versions of the result that can be used for fractional index mock Jacobi forms. \\\\ The notion of a (mixed) mock Jacobi form was introduced in \cite{Dabholkar:2012nd}. We start with the following definition. 
\begin{defn}\label{mockjacobidef}
Let $\phi:\mathbb{H}\times\mathbb{C}\longrightarrow\mathbb{C}$ be a holomorphic function with theta decomposition as in \eqref{eq 1.4}. The function $\phi$ is called a pure (resp. mixed) mock Jacobi form of weight $k$,  index $m$ and level $N$ if the theta coefficients $h_{\ell}$ are pure (resp. mixed) mock modular forms of weight $k-\frac{1}{2}$ for $\Gamma(4mN)$ and if $\widehat{h}_{\ell}$ are the completions of $h_{\ell}$ then the completion $\widehat{\phi}(\tau,z)$ of the mock Jacobi form $\phi$ defined by 
\begin{equation}
\widehat{\phi}(\tau,z):=\sum \limits_ { \ell \in \mathbb { Z } / 2 m \mathbb{Z} } \widehat{h} _ { \ell } ( \tau ) \vartheta _ { m , \ell } ( \tau , z )
\end{equation}
is invariant under the modular action for some congruence subgroup $\Gamma$ of level $N$ of $\mathrm{SL}(2,\mathbb{Z})$.
\end{defn} 
Using Zagier's weight 3/2 mock modular form \eqref{eq:Htmock3/2}, one can construct a mock Jacobi form of weight 2 and index 1 as follows \cite{Dabholkar:2012nd}. Define 
\begin{equation}
\begin{split}
    &\textbf{H}_0(\tau)=\sum_{n=0}^{\infty} H(4 n) q^n\\&\textbf{H}_1(\tau)=\sum_{n=1}^{\infty} H(4 n-1) q^{n-\frac{1}{4}},
\end{split}    
\end{equation}
where $H(0)=-\frac{1}{12}$.
Then 
\begin{equation}
\textbf{H}_0(\tau)+\textbf{H}_1(\tau)=\textbf{H}(\tau / 4).    
\end{equation}
Then the function
\begin{equation}\label{eq:mockjfhurwitz}
\textbf{H}(\tau, z)=\textbf{H}_0(\tau) \vartheta_{1,0}(\tau, z)+\textbf{H}_1(\tau) \vartheta_{1,1}(\tau, z)=\sum_{\substack{n, r \in \mathbb{Z} \\ 4 n-r^2 \geq 0}} H\left(4 n-r^2\right) q^n \zeta^r
\end{equation}
is a mock Jacobi form of weight 2 and index 1 with
shadow a constant multiple of
\begin{equation}
    \vartheta_{1,0}(\tau, 0) \overline{\vartheta_{1,0}(\tau, z)}+\vartheta_{1,1}(\tau, 0) \overline{\vartheta_{1,1}(\tau, z)}.
\end{equation}
Given that mock Jacobi forms are also defined in terms of theta series expansion, one might wonder if they are related to classical Jacobi forms. This relation was the central theme of \cite{Dabholkar:2012nd} which we now review in detail. \par
Let $\varphi(\tau,z)$ be a meromorphic Jacobi form of weight $k$ and index $m$ with respect to $\mathrm{SL}(2,\mathbb{Z})\ltimes\mathbb{Z}^2$. Assume that $\varphi$ has simple or double poles only at $z$ belonging to a discrete subset of $\mathbb{Q}\tau+\mathbb{Q}$. Let $S(\varphi)$ be the set of pairs $s=(\alpha,\beta)\in\mathbb{Q}^2$ such that $z=z_s=\alpha\tau+\beta$ is a pole of $\varphi.$ Since $\varphi$ is now meromorphic, it may not be possible to write down the theta decomposition of $\varphi$ as in \eqref{eq 1.4}. But we can still make sense of the \emph{finite part} of $\varphi$  by defining it as follows:
\begin{equation}
\varphi^F(\tau,z)=\sum \limits_ { \ell \in \mathbb { Z } / 2 m \mathbb{Z} } h _ { \ell } ( \tau ) \vartheta _ { m , \ell } ( \tau , z ),
\label{eq 2.1}
\end{equation}
where the theta coefficients are now given by  
\begin{equation}
h _ { \ell } ( \tau )=q^{-\ell^{2} / 4 m} \int\limits_{-\frac{\ell}{2m}\tau}^{-\frac{\ell}{2m}\tau+1} \varphi(\tau, z) \mathbf{e}(-\ell z) d z.
\label{eq 2.2}
\end{equation}
For the sum in \eqref{eq 2.1} to be well defined, $h_{\ell}$ must be defined uniquely $\bmod ~2m$. This is indeed true as explained in \cite[Section 8.1]{Dabholkar:2012nd}. The main result of \cite{Dabholkar:2012nd} is that the finite part $\varphi^F$ of a meromorphic Jacobi forms $\varphi$ with simple/double poles on a discrete subset of $\mathbb{Q}\tau+\mathbb{Q}$ is mixed mock Jacobi form. The essential idea of \cite{Dabholkar:2012nd} is the construction of the \textit{polar part} $\varphi^P$ of $\varphi$ using Appell-Lerch sums and then showing that $\varphi=\varphi^F+\varphi^P$. Generalising it to Jacobi forms for 
s of $\mathrm{SL}(2,\mathbb{Z})$ is straightforward. In fact, the result holds without any modification in the construction as we explain below.  
\\\\
Let $\varphi(\tau,z)$ be a meromorphic Jacobi form of weight $k$ and index $m$ with respect to $\Gamma\ltimes\mathbb{Z}^2$ with $\Gamma\in\{\Gamma_0(N),\Gamma_1(N)\}$ for some $N\geq 1$. Assume that $\varphi$ has simple or double poles only at $z$ belonging to a discrete subset of $\mathbb{Q}\tau+\mathbb{Q}$. Let $S(\varphi)$ be the set of pairs $s=(\alpha,\beta)\in\mathbb{Q}^2$ such that $z=z_s=\alpha\tau+\beta$ is a pole of $\varphi.$ 
The finite part $\varphi^F$ of $\varphi$ is defined as in \eqref{eq 2.1} and \eqref{eq 2.2} and is well defined as explained in \cite{Dabholkar:2012nd}. We now show that $\varphi^F$ is a mixed mock Jacobi form \textit{i,e.} we describe the completion of $\varphi^F$. First suppose $\varphi$ only has simple poles. For each $s\in S(\varphi)$, put 
\begin{equation}
D_{s}(\tau)=2 \pi i \mathbf{e}\left(m \alpha z_{s}\right) \operatorname{Res}\limits_{z=z_{s}}(\varphi(\tau, z)) \quad\left(s=(\alpha, \beta) \in S(\varphi), z_{s}=\alpha \tau+\beta\right).
\label{eq:1resn=1}
\end{equation}
\begin{prop}
Let $\varphi$ be a meromorphic Jacobi forms with simple poles at $z=z_s=\alpha\tau+\beta$ for $s=(\alpha,\beta)\in S(\varphi)\subset\mathbb{Q}^2$. Then the residues $D_s(\tau)$ defined in \eqref{eq:1resn=1} satisfies the following properties:
\begin{equation}\label{eq:Dstrans}
\begin{split}
& D_{(\alpha+\lambda, \beta+\mu)}(\tau)=\mathbf{e}(m(\mu \alpha-\lambda \beta)) D_{(\alpha, \beta)}(\tau),\quad(\lambda, \mu) \in \mathbb{Z}^{2}\\
& D_{s}\left(\frac{a \tau+b}{c \tau+d}\right)=(c \tau+d)^{k-1} D_{s \gamma}(\tau),\quad\gamma=\begin{pmatrix}
a& b \\ c& d
\end{pmatrix}\in \Gamma,
\end{split}
\end{equation}
where $s\gamma=(a\alpha+c\beta,b\alpha+d\beta).$
\label{prop:Dstrans}
\begin{proof}
We have 
\begin{equation}
\begin{split}
D_{(\alpha+\lambda, \beta+\mu)}(\tau)&=2\pi i\mathbf{e}(m(\alpha+\lambda)((\alpha+\lambda)\tau+(\beta+\mu))\operatorname{Res}\limits_{z=z_{s+(\lambda,\mu)}}(\varphi(\tau, z))\\&=2\pi i\mathbf{e}(m(\lambda^2\tau+2\lambda z_s)) \mathbf{e}\left(m \alpha z_{s}\right)\mathbf{e}(m(\mu \alpha-\lambda \beta))\operatorname{Res}\limits_{z=z_{s+(\lambda,\mu)}}(\varphi(\tau, z)).
\end{split}
\end{equation}
We can now write 
\begin{equation}
\begin{split}
\mathbf{e}(m(\lambda^2\tau+2\lambda z_s))\operatorname{Res}\limits_{z=z_{s+(\lambda,\mu)}}(\varphi(\tau, z))&=\mathbf{e}(m(\lambda^2\tau+2\lambda z_s))\operatorname{Res}_{z=z_{s}}(\varphi(\tau, z+\lambda\tau+\mu))\\&=\operatorname{Res}_{z=z_{s}}(\varphi(\tau, z))
\end{split}
\end{equation}
where we used the elliptic transformation of $\varphi$. This gives us the first identity after using \eqref{eq:1resn=1}. For the second identity, we write 
\begin{equation}
\mathbf{e}\left(m \alpha z_{s}\right)\varphi(\tau, z_s+\varepsilon)=\frac{D_s(\tau)}{2\pi i\varepsilon}+O(1),\quad\text{as}\quad \varepsilon\to 0.
\label{eq:Dsdefeps}
\end{equation}
Let us write $s=(\alpha_s,\beta_s)$ and $z_s=z_s(\tau)$ so that we have the following identities
\begin{equation}
z_s(\gamma \tau)=\frac{z_{s \gamma}(\tau)}{c \tau+d}, \quad \alpha_{s \gamma} z_{s \gamma}(\tau)-\alpha_s z_s(\gamma \tau)=\frac{c z_{s \gamma}(\tau)^2}{c \tau+d}; \quad\left(\gamma \tau:=\frac{a \tau+b}{c \tau+d}\right)
\label{eq:zssgamma}
\end{equation} 
Then replacing $\varepsilon\to \varepsilon/(c\tau+d)$ and $\tau\to\gamma\tau$ in \eqref{eq:Dsdefeps} we get 
\begin{equation}
\begin{split}
\frac{(c\tau+d)D_s(\gamma\tau)}{2\pi i\varepsilon}&\stackrel{.}{=}\mathbf{e}\left(m \alpha_s z_{s}(\gamma\tau)\right)\varphi\left(\gamma\tau, z_s(\gamma\tau)+\frac{\varepsilon}{c\tau+d}\right)
\\&\stackrel{.}{=}\mathbf{e}\left(m \alpha_s z_{s}(\gamma\tau)\right)\varphi\left(\gamma\tau, \frac{z_{s\gamma}(\tau)+\varepsilon}{c\tau+d}\right)\\&\stackrel{.}{=}(c\tau+d)^k\mathbf{e}\left(m \alpha_s z_{s}(\gamma\tau)+m\frac{c(z_{s\gamma}(\tau)+\varepsilon)^2}{c\tau+d}\right)\varphi\left(\tau, z_{s\gamma}(\tau)+\varepsilon\right)\\&\stackrel{.}{=}(c\tau+d)^k\mathbf{e}\left(m \alpha_{s\gamma} z_{s\gamma}(\tau)\right)\varphi\left(\tau, z_{s\gamma}(\tau)+\varepsilon\right)\\&\stackrel{.}{=}(c\tau+d)^k\frac{D_{s\gamma}(\tau)}{2\pi i\varepsilon}
\end{split}
\end{equation} 
where $\stackrel{.}{=}$ means that we have ignored terms which are bounded as $\varepsilon\to 0$ and we have used \eqref{eq:zssgamma} multiple times. This gives us the required result.
\end{proof}
\end{prop}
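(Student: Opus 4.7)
The plan is to derive both identities directly from the two defining symmetries of $\varphi$ (the elliptic and modular transformations) combined with the residue formula \eqref{eq:1resn=1}. The natural tool in both cases is the local expansion
\[
\mathbf{e}(m\alpha\, z_s)\,\varphi(\tau, z_s + \varepsilon) \;=\; \frac{D_s(\tau)}{2\pi i\,\varepsilon} \;+\; O(1)\quad(\varepsilon\to 0),
\]
which is equivalent to the definition of $D_s(\tau)$, together with the transformation laws of $\varphi$. The bookkeeping of the exponential prefactors is the main technical nuisance; everything else is a one-line computation.

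For the first identity, I would note that the pole of $\varphi$ at $z_{s+(\lambda,\mu)}=z_s+\lambda\tau+\mu$ is related to the pole at $z_s$ by the elliptic transformation of $\varphi$. Replacing $z$ by $z-\lambda\tau-\mu$ in the elliptic identity and taking residues at $z=z_{s+(\lambda,\mu)}$ yields
\[
\operatorname*{Res}_{z=z_{s+(\lambda,\mu)}} \varphi(\tau,z) \;=\; \mathbf{e}\!\left(-m(\lambda^2\tau + 2\lambda z_s + \lambda\mu)\right)\operatorname*{Res}_{z=z_s} \varphi(\tau,z).
\]
Plugging this into \eqref{eq:1resn=1} and simplifying the combined exponent $(\alpha+\lambda)z_{s+(\lambda,\mu)} - (\lambda^2\tau + 2\lambda z_s + \lambda\mu)$ using $z_s=\alpha\tau+\beta$, one finds that the $\tau$-dependent and the $\lambda^2$ terms cancel and the remainder collapses to $\alpha z_s + (\mu\alpha-\lambda\beta)$. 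This produces exactly the asserted phase $\mathbf{e}(m(\mu\alpha-\lambda\beta))$.

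For the second identity, I would follow the approach sketched by the authors: substitute $\varepsilon \to \varepsilon/(c\tau+d)$ in the local expansion for $D_s$ evaluated at $\gamma\tau$, so that the left-hand side becomes $(c\tau+d)D_s(\gamma\tau)/(2\pi i\varepsilon)+O(1)$. On the other hand, the modular transformation of $\varphi$ turns the argument $z_s(\gamma\tau)+\varepsilon/(c\tau+d) = (z_{s\gamma}(\tau)+\varepsilon)/(c\tau+d)$ into
\[
(c\tau+d)^k\,\mathbf{e}\!\left(\tfrac{mc(z_{s\gamma}(\tau)+\varepsilon)^2}{c\tau+d}\right)\varphi(\tau,\,z_{s\gamma}(\tau)+\varepsilon),
\]
whose residue at $\varepsilon=0$ involves $D_{s\gamma}(\tau)$. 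The crucial identity $\alpha_{s\gamma}z_{s\gamma}(\tau)-\alpha_s z_s(\gamma\tau)= cz_{s\gamma}(\tau)^2/(c\tau+d)$ from \eqref{eq:zssgamma} lets the exponential $\mathbf{e}(m\alpha_s z_s(\gamma\tau))$ absorb the quadratic factor $\mathbf{e}(mcz_{s\gamma}^2/(c\tau+d))$ and reproduce $\mathbf{e}(m\alpha_{s\gamma}z_{s\gamma}(\tau))$. Matching the two expressions for the same $1/\varepsilon$ pole then gives $D_s(\gamma\tau)=(c\tau+d)^{k-1}D_{s\gamma}(\tau)$.

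The main obstacle is the sign and coefficient bookkeeping in the exponentials: one must carry out the arithmetic carefully to see the cancellations between $\lambda^2\tau$ and $2\lambda z_s$ terms in part one, and between $\alpha_s z_s(\gamma\tau)$ and the modular cocycle $cz_{s\gamma}^2/(c\tau+d)$ in part two. Everything else is a routine manipulation of Laurent expansions, and the simple-pole hypothesis ensures the residue formulation is unambiguous.
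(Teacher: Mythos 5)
Your proposal is correct and follows essentially the same route as the paper: the first identity comes from taking residues of the elliptic transformation law and simplifying the exponential prefactor (your exponent computation $(\alpha+\lambda)z_{s+(\lambda,\mu)}-(\lambda^2\tau+2\lambda z_s+\lambda\mu)=\alpha z_s+\mu\alpha-\lambda\beta$ checks out), and the second comes from the local Laurent expansion with $\varepsilon\to\varepsilon/(c\tau+d)$, the modular transformation of $\varphi$, and the identity $\alpha_{s\gamma}z_{s\gamma}(\tau)-\alpha_s z_s(\gamma\tau)=cz_{s\gamma}(\tau)^2/(c\tau+d)$, exactly as in the paper. No gaps.
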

\noindent To construct the polar part of $\varphi$, the idea is to construct a function which has poles exactly at $s=(\alpha,\beta)\in S(\varphi)$ with residue 1. Then the polar part will just be the linear combination of such function weighted by $D_s(\tau)$. To do this we define the \emph{Appell-Lerch sums.} Following \cite{Dabholkar:2012nd}, we begin by defining a rational function $\mathcal{R}_{c}(\zeta)$ for $c\in\mathbb{R}$ as follows
\begin{equation}
\mathcal{R}_{c}(\zeta)=\begin{cases}
\frac{\zeta^{c}}{2}  \frac{\zeta+1}{\zeta-1} & \text { if } c \in \mathbb{Z} \\
\frac{\zeta^{\lceil c\rceil}}{\zeta-1} & \text { if } c \in \mathbb{R} \backslash \mathbb{Z}.
\end{cases}
\label{eq 2.6}
\end{equation}
Here $\lceil c\rceil$ denotes the smallest integer $\geq c$. The function $\mathcal{R}_{c}(\zeta)$ has a simple pole of residue 1 at $\zeta=1.$ We have the following two expressions \cite[Eq. (8.15), Proposition 8.1]{Dabholkar:2012nd} for $\mathcal{R}_{c}:$
\begin{equation}
\begin{split}
&\mathcal{R}_{c}(\mathbf{e}(z))=-\sum_{\ell \in \mathbb{Z}} \frac{\operatorname{sgn}(\ell-c)+\operatorname{sgn}\left(z_{2}\right)}{2} \mathbf{e}(\ell z), \quad\left( z_{2}=\operatorname{Im}(z) \neq 0\right),\\
&\mathcal{R}_{c}(\mathbf{e}(z))=\text{Av}_{\mathbb{Z}}\left[\frac{\mathbf{e}(c z)}{2 \pi i z}\right],\quad c \in \mathbb{R}, z \in \mathbb{C} \backslash \mathbb{Z}
\end{split}
\label{eq 2.7}
\end{equation}
where $\text{Av}_{\mathbb{Z}}[f(z)]$ is the $\mathbb{Z}$-average of $f$ defined as 
\begin{equation}
\text{Av}_{\mathbb{Z}}[f(z)]:=\sum_{\mu\in\mathbb{Z}}f(z+\mu).
\label{eq 2.5}
\end{equation}
These expressions will be important later.  
We now want to construct a function which has poles at all images of $z_s$ under the elliptic transformation $z\to z+\lambda\tau+\mu$ with $\lambda,\mu\in\mathbb{Z}$. To do this, define the averaging operator of index $m\in\mathbb{Z}$ as follows: 
\begin{equation}
\text{Av}_{\zeta}^{(m)}[F(\zeta)]:=\sum_{r\in\mathbb{Z}} q^{m r^{2}} \zeta^{2 m r} F\left(q^{r} \zeta\right).
\label{eq 2.4}
\end{equation}
Note that the theta functions can be written in terms of these averaging operator:
\begin{equation}
\vartheta_{m,\ell}(\tau,z)=q^{\frac{\ell^2}{4m}}\text{Av}^{(m)}_{\zeta}[\zeta^{\ell}].
\label{eq:thetaavgop}
\end{equation} 
Following \cite{Dabholkar:2012nd}, for $s=(\alpha,\beta)\in \mathbb{Q}^2$, we now define the \emph{universal Appell-Lerch sum} of \textit{index} $m$ as 
\begin{equation}
\begin{split}
\mathcal{A}_{m}^{s}(\tau, z)&=\mathbf{e}\left(-m \alpha z_{s}\right) \mathrm{Av}_{\zeta}^{(m)}\left[\mathcal{R}_{-2 m \alpha}\left(\zeta / \zeta_{s}\right)\right], \quad\zeta_{s}=\mathbf{e}\left(z_{s}\right)=\mathbf{e}(\beta) q^{\alpha}\\&=\mathbf{e}\left(-m \alpha z_{s}\right)\sum_{r\in\mathbb{Z}} q^{m r^{2}} \zeta^{2 m r} \mathcal{R}_{-2 m \alpha}\left(q^{r}\zeta / \zeta_{s}\right).
\end{split}
\label{eq 2.8}
\end{equation}
If we change $s=(\alpha,\beta)$ by an element of $\mathbb{Z}^2$ then we have 
\begin{equation}
\begin{aligned}
\mathcal{A}_m^{(\alpha+\lambda, \beta+\mu)}(\tau, z)&=\mathbf{e}\left(-m(\alpha+\lambda)((\alpha+\lambda) \tau+\beta+\mu)\right) \\
&\times \sum_{r \in \mathbb{Z}} q^{m r^2} \zeta^{2 m r} \mathcal{R}_{-2 m(\alpha+\lambda)}\left(\frac{q^r \zeta}{\zeta_{s+(\lambda, \mu)}}\right) \\
&=\mathbf{e}\left(-m\left(\alpha z_s+2 \lambda z_s-\lambda \beta+\alpha \mu+\lambda^2 \tau\right)\right) \\&\times\sum_{r \in \mathbb{Z}} q^{m r^2} \zeta^{2 m r}\left(\frac{q^r \zeta}{q^\lambda \zeta_s}\right)^{-2 m \lambda} \mathcal{R}_{-2 m\alpha}\left(\frac{q^r \zeta}{q^\lambda \zeta_s}\right) \\
&=\zeta_s^{-2 m \lambda} q^{-m \lambda^2} \mathbf{e}(-m(\alpha z_s+\alpha \mu-\lambda \beta)) \\
&\times \sum_{r \in \mathbb{Z}} q^{m r^2} \zeta^{2 m(r-\lambda)} \zeta_s^{2 m \lambda} q^{-2 m \lambda r} q^{2 m \lambda^2} \mathcal{R}_{-2 m \alpha}\left(\frac{q^{r-\lambda}\zeta}{\zeta_s}\right) \\
&=\mathbf{e}(-m(\alpha z_s+\alpha \mu-\lambda \beta)) \sum_{r \in \mathbb{Z}} q^{m(r-\lambda)^2} \zeta^{2 m(r-\lambda)} \mathcal{R}_{-2 m \alpha}\left(\frac{q^{r-\lambda}\zeta}{\zeta_s}\right) \\
&=\mathbf{e}(-m(\alpha \mu-\lambda \beta)) \mathcal{A}_m^{(\alpha, \beta)}(\tau, z),
\end{aligned}
\label{eq:s+lmAm}
\end{equation}
where we used the periodicity: $\mathcal{R}_{c+n}(\zeta)=\zeta^n\mathcal{R}_c(\zeta)$ for $n\in\mathbb{Z}$.
%A short calculation shows that \begin{equation}\mathcal{A}_{m}^{(\alpha+\lambda, \beta+\mu)}(\tau, z) =\mathbf{e}(-m(\mu \alpha-\lambda \beta))\mathcal{A}_{m}^{(\alpha, \beta)}(\tau,z),\quad \lambda, \mu \in \mathbb{Z} ,
%\end{equation}
We also have elliptic invariance of the Appell-Lerch sum:
\begin{equation}\label{eq:ellipinvAsm}
\begin{split}
\mathcal{A}_{m}^{s}(\tau, z+\lambda\tau+\mu)&=\mathbf{e}\left(-m \alpha z_{s}\right)\sum_{r\in\mathbb{Z}} q^{m r^{2}} \zeta^{2 m r}\zeta^{2mr\lambda} \mathcal{R}_{-2 m \alpha}\left(q^{r}\zeta q^\lambda / \zeta_{s}\right)\\&=q^{-m\lambda^2}\zeta^{-2m\lambda}\mathbf{e}\left(-m \alpha z_{s}\right)\sum_{r\in\mathbb{Z}} q^{m (r^{2}+\lambda^2+2r\lambda)} \zeta^{2 m (r+\lambda)} \mathcal{R}_{-2 m \alpha}\left(q^{r}\zeta q^\lambda / \zeta_{s}\right)\\&=\mathbf{e}(-m(\lambda^2\tau+2\lambda z))\mathbf{e}\left(-m \alpha z_{s}\right)\sum_{r\in\mathbb{Z}} q^{m (r+\lambda)^2} \zeta^{2 m (r+\lambda)} \mathcal{R}_{-2 m \alpha}\left(q^{r+\lambda}\zeta / \zeta_{s}\right)\\&=\mathbf{e}(-m(\lambda^2\tau+2\lambda z))\mathcal{A}_{m}^{s}(\tau, z).    
\end{split}    
\end{equation}
\par The \emph{polar part} of $\varphi$ is defined by 
\begin{equation}
\varphi^{P}(\tau, z):=\sum_{s \in S(\varphi) / \mathbb{Z}^{2}} D_{s}(\tau) \mathcal{A}_{m}^{s}(\tau, z).
\end{equation}
Here $S(\varphi)/\mathbb{Z}^2$ is the set of orbits under the action $((\lambda,\mu),s)\mapsto s+(\lambda,\mu)$. This sum is a finite sum and well defined due to \eqref{eq:Dstrans} and \eqref{eq:s+lmAm}. Since the residues and pole structure of $\varphi$ and $\varphi^P$ is the same, so $\varphi-\varphi^P$ is holomorphic. In fact, one can show that (see Theorem \ref{thm 2.12} for proof in a more general case)
\begin{equation}
\varphi^F(\tau,z)=\varphi(\tau,z)-\varphi^P(\tau,z).
\label{eq 2.9}
\end{equation}
We now describe the completion of $\mathcal{A}^s_m(\tau,z)$ to get the completion of $\varphi^F$. For $m \in \mathbb{N},$ and $\ell \in \mathbb{Z} / 2 m \mathbb{Z}$ with $s=(\alpha, \beta) \in \mathbb{Q}^{2}$ we define the unary theta series
\begin{equation}
\Theta_{m, \ell}^{s}(\tau)=\mathbf{e}(-m \alpha \beta) \sum_{r \in \mathbb{Z}+\alpha+\ell / 2 m} r ~\mathbf{e}(2 m \beta r) q^{m r^{2}}
\label{eq 2.10}
\end{equation}
of weight $3 / 2$ and its preimage under the shadow operator given by the nonholomorphic Eichler integral \eqref{eq:eichintsum}
\begin{equation}
\Theta_{m, \ell}^{s *}(\tau)=\frac{\mathbf{e}(m \alpha \beta)}{2\sqrt{\pi}} \sum_{r \in \mathbb{Z}+\alpha+\ell / 2 m} \operatorname{sgn}(r) \mathbf{e}(-2 m \beta r)\Gamma\left(\frac{1}{2},4r^2\pi my\right) q^{-m r^{2}},
\label{eq 2.11}
\end{equation}
where, as before, $y=\text{Im}(\tau).$ We have removed a factor of $-1/\sqrt{m}$ from the Eichler integral for convenience. 
One can check that 
\begin{equation}
 \xi_{\frac{1}{2}}(\Theta_{m, \ell}^{s *}(\tau))=-\sqrt{m}\Theta_{m, \ell}^{s}(\tau)~,  
\end{equation}
where $\xi_\kappa$ has been defined in (\ref{e218x}).
Observe that, for $(\lambda,\mu)\in\mathbb{Z}^2$,
\begin{equation}
\begin{aligned}\Theta^{(\alpha+\lambda, \beta+\mu)}_{m,\ell}(\tau)&=\mathbf{e}(-m(\alpha+\lambda)(\beta+\mu))\sum_{r \in \mathbb{Z}+(\alpha+\lambda)+\frac{\ell}{2 m}}r~\mathbf{e}\left(2 m (\beta+\mu)r\right)q^{m r^2}
\\&=\mathbf{e}(-m(\alpha \mu+\beta \lambda)) \mathbf{e}(-m \alpha \beta) \mathbf{e}\left(2 m \mu\left(\alpha+\frac{\ell}{2 m}\right)\right)\sum_{r \in \mathbb{Z}+\alpha+\frac{\ell}{2 m}}r~ \mathbf{e}(2 m \beta r) q^{m r^2} \\
&=\mathbf{e}(m(\alpha \mu-\beta \lambda)) \Theta_{m, \ell}^{(\alpha, \beta)}(\tau) . 
\end{aligned}
\label{eq:s+lmTheta}
\end{equation}
Similarly we have 
\begin{equation}
\Theta_{m, \ell}^{(\alpha+\lambda, \beta+\mu) *}(\tau)=\mathbf{e}(-m(\alpha \mu-\beta \lambda)) \Theta_{m, \ell}^{(\alpha, \beta)*}(\tau) .
\label{eq:s+lmTheta*}
\end{equation}
The following result is from DMZ {\cite[Proposition 8.2]{Dabholkar:2012nd}}.
\begin{prop}
For $m \in \mathbb{N}$ and $s \in \mathbb{Q}^{2}$ the completion $\widehat{\mathcal{A}}_{m}^{s}$ of $\mathcal{A}_{m}^{s}$ defined by
$$
\widehat{\mathcal{A}}_{m}^{s}(\tau, z):=\mathcal{A}_{m}^{s}(\tau, z)+\sum_{\ell(\bmod 2 m)} \Theta_{m, \ell}^{s *}(\tau) \vartheta_{m, \ell}(\tau, z)
$$
satisfies
$$
\begin{aligned}
\widehat{\mathcal{A}}_{m}^{(\alpha+\lambda, \beta+\mu)}(\tau, z) &=\mathbf{e}(-m(\mu \alpha-\lambda \beta)) \widehat{\mathcal{A}}_{m}^{(\alpha, \beta)}(\tau,z) &\lambda, \mu \in \mathbb{Z} \\
\widehat{\mathcal{A}}_{m}^{s}(\tau, z+\lambda \tau+\mu) &=\mathbf{e}\left(-m\left(\lambda^{2} \tau+2 \lambda z\right)\right) \widehat{\mathcal{A}}_{m}^{s}(\tau,z) &\lambda, \mu \in \mathbb{Z} \\
\widehat{\mathcal{A}}_{m}^{s}\left(\frac{a \tau+b}{c \tau+d}, \frac{z}{c \tau+d}\right) &=(c \tau+d) \mathbf{e}\left(\frac{m c z^{2}}{c \tau+d}\right) \widehat{\mathcal{A}}_{m}^{s \gamma}(\tau, z) &\gamma=\begin{pmatrix}
a& b \\
c &d
\end{pmatrix}\in \mathrm{SL}(2, \mathbb{Z}).
\end{aligned}
$$
\label{prop 2.2}
\begin{proof}
The first two transformation properties follow from \eqref{eq:s+lmAm}, \eqref{eq:s+lmTheta}, \eqref{eq:s+lmTheta*} and \eqref{eq:ellipinvAsm}. For the modularity transformation, we start by defining two functions: for $m\in\mathbb{N},\tau\in\mathbb{H},z,u\in\mathbb{C}$ define
\begin{equation}\label{eq:fu(m)def}
f_u^{(m)}(z ; \tau)=\operatorname{Av}_{\zeta}^{(m)}\left[\frac{1}{1-\zeta / \mathbf{e}(u)}\right], \quad \tilde{f}_u^{(m)}(z ; \tau)=f_u^{(m)}(z ; \tau)-\frac{1}{2} \sum_{\ell(\bmod 2 m)} R_{m, \ell}(u ; \tau) \vartheta_{m, \ell}(\tau, z)
\end{equation}
where
\begin{equation}
R_{m, \ell}(u ; \tau)=\sum_{r \in \ell+2 m \mathbb{Z}}\left\{\operatorname{sgn}\left(r+\frac{1}{2}\right)-\text{erf}\left(\sqrt{\pi} \frac{ry+2 m\text{Im}(u)}{\sqrt{my}}\right)\right\} q^{-r^2 / 4 m} \mathbf{e}(-r u),
\end{equation}
where erf($x$) is the error function defined as 
\begin{equation}
\text{erf}(x)=\frac{2}{\sqrt{\pi}}\int_0^xe^{-t^2}dt.
\end{equation} 
It is related to the incomplete gamma function as 
\begin{equation}
\text{erf}(x)=\frac{\text{sgn}(x)}{\sqrt{\pi}}\left(\sqrt{\pi}-\Gamma\left(\frac{1}{2},x^2\right)\right).
\label{eq:relerfincgamma}
\end{equation}
We then have the following modular transformation property \cite[Proposition 3.5]{zwegers2008mock} for $\tilde{f}_u^{(m)}$:
\begin{equation}
\tilde{f}_{u /(c \tau+d)}^{(m)}\left(\frac{z}{c \tau+d} ; \frac{a \tau+b}{c \tau+d}\right)=(c \tau+d) \mathbf{e}\left(\frac{m c\left(z^2-u^2\right)}{c \tau+d}\right) \tilde{f}_u^{(m)}(z ; \tau),\quad\begin{pmatrix}
a & b \\ c & d
\end{pmatrix} \in\mathrm{SL}(2,\mathbb{Z}).
\label{eq:modtranscurlfm}
\end{equation}
Using \eqref{eq:relerfincgamma} and replacing $u=z_s=\alpha\tau+\beta$ we can write 
\begin{equation}
\begin{split}
\frac{1}{2} R_{m, \ell}\left(z_s ; \tau\right)&=\sum_{r\in \ell+2 m \mathbb{Z}} \frac{\operatorname{sgn}\left(r+\frac{1}{2}\right)-\operatorname{sgn}(r+2 m \alpha)}{2} q^{-r^2 / 4 m} \zeta_s^{-r}\\&+\frac{1}{2\sqrt{\pi}}\sum_{r\in \ell+2 m \mathbb{Z}} \operatorname{sgn}(r+2 m \alpha)\Gamma\left(\frac{1}{2},\frac{(ry+2m\alpha y)^2\pi}{my}\right)q^{-r^2 / 4 m} \zeta_s^{-r}\\&=\sum_{r\equiv\ell(\bmod 2 m)} \frac{\operatorname{sgn}\left(r+\frac{1}{2}\right)-\operatorname{sgn}(r+2 m \alpha)}{2} q^{-r^2 / 4 m} \zeta_s^{-r}\\&+\frac{1}{2\sqrt{\pi}}\sum_{r\in \ell/2m+\mathbb{Z}+\alpha} \operatorname{sgn}(r)\Gamma\left(\frac{1}{2},4\pi mr^2y\right)q^{-m(r-\alpha)^2} \zeta_s^{-2m(r-\alpha)}\\&=\sum_{r\equiv\ell(\bmod 2 m)} \frac{\operatorname{sgn}\left(r+\frac{1}{2}\right)-\operatorname{sgn}(r+2 m \alpha)}{2} q^{-r^2 / 4 m} \zeta_s^{-r}+\mathbf{e}\left(m \alpha z_s\right) \Theta_{m, \ell}^{s *}(\tau)
\end{split}
\label{eq:RThetarel}
\end{equation}
where in the last step we used \eqref{eq 2.11} and the identity
\begin{equation}
q^{-m(r-\alpha)^2}\zeta_s^{-2m(r-\alpha)}=q^{-mr^2}\mathbf{e}(-m\alpha\beta)\mathbf{e}(\alpha m(\alpha\tau+\beta)).
\end{equation}
Next, note that 
\begin{equation}
\begin{split}
\sum_{r\in\mathbb{Z}}\frac{\operatorname{sgn}\left(r+\frac{1}{2}\right)-\operatorname{sgn}(r+2 m \alpha)}{2}\zeta^r&=\sum_{r\in\mathbb{Z}}\frac{\operatorname{sgn}\left(r+\frac{1}{2}\right)+\text{sgn}(\text{Im}(z))-\text{sgn}(\text{Im}(z))-\operatorname{sgn}(r+2 m \alpha)}{2}\zeta^r\\&=-\frac{1}{\zeta-1}-\sum_{r\in\mathbb{Z}}\frac{\text{sgn}(\text{Im}(z))+\operatorname{sgn}(r+2 m \alpha)}{2}\zeta^r\\&=-\frac{1}{\zeta-1}+\mathcal{R}_{-2m\alpha}(\zeta)
\end{split}
\label{eq:curRfrel}
\end{equation}
where we used \eqref{eq 2.6}. Replacing $\zeta$ by $\zeta/\zeta_s$ in the above relation \eqref{eq:curRfrel} and applying the averaging operator $\text{Av}^{(m)}_{\zeta}$ we get 
\begin{equation}
\begin{split}
\mathbf{e}(m\alpha z_s)\mathcal{A}^s_{m}(\tau,z)&=-\text{Av}^{(m)}_{\zeta}\left[\frac{1}{1-\zeta/\zeta_s}\right]+\text{Av}^{(m)}_\zeta\left[\sum_{r\in\mathbb{Z}}\frac{\operatorname{sgn}\left(r+\frac{1}{2}\right)-\operatorname{sgn}(r+2 m \alpha)}{2}\zeta^r\zeta_s^{-r}\right]\\&=-f^{(m)}_{z_s}(z;\tau)+\sum_{r\in\mathbb{Z}}\frac{\operatorname{sgn}\left(r+\frac{1}{2}\right)-\operatorname{sgn}(r+2 m \alpha)}{2}q^{-r^2/4m}\zeta_s^{-r}\vartheta_{m,r}(\tau,z)\\&=-f^{(m)}_{z_s}(z;\tau)+\sum_{\substack{\ell(\bmod 2m)\\r\equiv\ell (\bmod 2m)}}\frac{\operatorname{sgn}\left(r+\frac{1}{2}\right)-\operatorname{sgn}(r+2 m \alpha)}{2}q^{-r^2/4m}\zeta_s^{-r}\vartheta_{m,\ell}(\tau,z)
\end{split}
\label{eq:curlAfrel}
\end{equation} 
where we used \eqref{eq:thetaavgop}. Using \eqref{eq:curlAfrel}, we obtain 
\begin{equation}\label{eq:eAhatsmftilde}
\begin{split}
\mathbf{e}(m\alpha z_s)\widehat{\mathcal{A}}^s_{m}(\tau,z)&=-f^{(m)}_{z_s}(z;\tau)+\sum_{\substack{\ell(\bmod 2m)\\r\equiv\ell (\bmod 2m)}}\frac{\operatorname{sgn}\left(r+\frac{1}{2}\right)-\operatorname{sgn}(r+2 m \alpha)}{2}q^{-r^2/4m}\zeta_s^{-r}\vartheta_{m,\ell}(\tau,z)\\&+\mathbf{e}(m\alpha z_s)\sum_{\ell(\bmod 2m)}\Theta_{m, \ell}^{s *}(\tau)\vartheta_{m,\ell}(\tau,z)\\&=-\tilde{f}^{(m)}_{z_s}(z;\tau)
\end{split}
\end{equation}
where we used \eqref{eq:fu(m)def} and \eqref{eq:RThetarel}. Using  \eqref{eq:zssgamma} and \eqref{eq:modtranscurlfm}, we get 
\begin{equation}\label{eq:Ahatsmmodtrans}
\begin{split}
\widehat{\mathcal{A}}^s_{m}\left(\frac{a\tau+b}{c\tau+d},\frac{cz}{c\tau+d}\right)&=-\mathbf{e}\left(-m\alpha_sz_{s}(\gamma\tau)\right)\tilde{f}^{(m)}_{z_s(\gamma\tau)}\left(\frac{cz}{c\tau+d};\frac{a\tau+b}{c\tau+d}\right)\\&=-\mathbf{e}\left(-m\left(\alpha_{s\gamma}z_{s\gamma}(\tau)-\frac{cz_{s\gamma}(\tau)^2}{c\tau+d}\right)\right)\tilde{f}^{(m)}_{\frac{z_{s\gamma}(\tau)}{(c\tau+d)}}\left(\frac{cz}{c\tau+d};\frac{a\tau+b}{c\tau+d}\right)\\&=\mathbf{e}\left(m\frac{cz_{s\gamma}(\tau)^2}{c\tau+d}\right)(c\tau+d)\mathbf{e}\left(\frac{m c\left(z^2-z_{s\gamma}^2\right)}{c \tau+d}\right)\widehat{\mathcal{A}}^{s\gamma}_{m}(\tau,z)\\&=(c \tau+d) \mathbf{e}\left(\frac{m c z^{2}}{c \tau+d}\right) \widehat{\mathcal{A}}_{m}^{s \gamma}(\tau, z).
\end{split}
\end{equation}
\end{proof}
\end{prop}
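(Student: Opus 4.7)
The first two identities in the proposition should be routine consequences of transformation laws already established. Equation \eqref{eq:s+lmAm} gives the shift behaviour of $\mathcal{A}_m^s$ under $s \mapsto s+(\lambda,\mu)$ as multiplication by $\mathbf{e}(-m(\alpha\mu-\lambda\beta))$, and \eqref{eq:s+lmTheta*} gives exactly the same phase for $\Theta_{m,\ell}^{s*}$; combining them (and noting that $\vartheta_{m,\ell}$ does not depend on $s$) yields the first identity. For the second identity, I would combine \eqref{eq:ellipinvAsm} with the standard elliptic transformation of $\vartheta_{m,\ell}$, both of which produce exactly the factor $\mathbf{e}(-m(\lambda^2\tau+2\lambda z))$.

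The modular transformation is the substantive content, and my plan is to reduce it to a known result of Zwegers \cite{zwegers2008mock}. I would introduce Zwegers' function $f_u^{(m)}(z;\tau) = \operatorname{Av}_\zeta^{(m)}[1/(1-\zeta/\mathbf{e}(u))]$ together with its nonholomorphic completion $\tilde{f}_u^{(m)}$, which differs from $f_u^{(m)}$ by a theta series weighted by an error-function sum $R_{m,\ell}(u;\tau)$. Zwegers proved that $\tilde{f}_u^{(m)}$ transforms like a Jacobi form of weight $1$ and index $m$ under $\mathrm{SL}(2,\mathbb{Z})$ acting simultaneously on $z$ and $u$. The crucial step is to establish the identification $\mathbf{e}(m\alpha z_s)\widehat{\mathcal{A}}_m^s(\tau,z) = -\tilde{f}_{z_s}^{(m)}(z;\tau)$; once this is in hand, Zwegers' transformation law combined with the identities \eqref{eq:zssgamma} will yield the modular transformation of $\widehat{\mathcal{A}}_m^s$.

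To prove the identification, I would first split $\mathcal{R}_{-2m\alpha}(\zeta/\zeta_s)$ into a pure pole piece $-1/(\zeta/\zeta_s-1)$ plus a sign-series correction, using \eqref{eq 2.7} together with the elementary identity $\mathcal{R}_c(\zeta) = -1/(\zeta-1) + \sum_{r}\tfrac{1}{2}[\operatorname{sgn}(r+\tfrac{1}{2}) - \operatorname{sgn}(r+c)]\zeta^r$. Applying $\operatorname{Av}_\zeta^{(m)}$ and reorganising through \eqref{eq:thetaavgop} expresses $\mathbf{e}(m\alpha z_s)\mathcal{A}_m^s$ as $-f_{z_s}^{(m)}$ plus a theta-expanded correction. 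In parallel, using the identity relating $\operatorname{erf}$ to the incomplete gamma function, I would rewrite $\tfrac{1}{2}R_{m,\ell}(z_s;\tau)$ as that same correction plus $\mathbf{e}(m\alpha z_s)\Theta_{m,\ell}^{s*}(\tau)$ (matching \eqref{eq 2.11}). Subtracting makes the corrections cancel and yields the desired identity.

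The main obstacle I anticipate is the phase bookkeeping in the final modular step: the substitution $u \to z_{s\gamma}(\tau)/(c\tau+d)$ in Zwegers' law produces a Gaussian factor that must combine with the phases $\mathbf{e}(-m\alpha_s z_s(\gamma\tau))$ and $\mathbf{e}(m\alpha_{s\gamma}z_{s\gamma}(\tau))$, via the second identity of \eqref{eq:zssgamma}, to leave precisely the automorphy factor $(c\tau+d)\,\mathbf{e}(mcz^2/(c\tau+d))$. This is delicate but purely computational, and the matching of $\operatorname{sgn}(r+2m\alpha)$ with the argument of the error function in $R_{m,\ell}$ is the one place where I would expect to lose time verifying signs.
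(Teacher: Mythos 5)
Your proposal follows essentially the same route as the paper: the first two identities from the shift and elliptic laws \eqref{eq:s+lmAm}, \eqref{eq:s+lmTheta*}, \eqref{eq:ellipinvAsm}, and the modular transformation via the identification $\mathbf{e}(m\alpha z_s)\widehat{\mathcal{A}}_m^s(\tau,z)=-\tilde f_{z_s}^{(m)}(z;\tau)$ obtained by splitting $\mathcal{R}_{-2m\alpha}$ into a pole piece plus a sign-series correction and matching it against $R_{m,\ell}$ through the erf–incomplete-gamma relation, followed by Zwegers' transformation law and \eqref{eq:zssgamma}. The plan is correct and matches the paper's proof step for step.
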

\noindent This proposition implies that the Appell-Lerch sum $\mathcal{A}^s_m(\tau,z)$ is a meromorphic mock Jacobi form of weight $1$ with shadow $\sum_{\ell\bmod~2m}\overline{\Theta^s_{m,\ell}(\tau)}\vartheta_{m,\ell}(\tau,z)$ and completion $\widehat{\mathcal{A}}^s_m(\tau,z)$ for some congruence subgroup of $\mathrm{SL}(2,\mathbb{Z})$ containing elements $\gamma$ such that  $s\gamma=s$ upto shift by integers.\par
The mock modularity of $\varphi^F$ is described in the theorem below (see Theorem \ref{thm 2.9} for proof in a more general case)
\begin{thm}
\label{mockjacobi_congruence}
Let $\varphi$ be a meromorphic Jacobi form of weight $k$, index $m$ with respect to $\Gamma$ and  having simple poles on the set $S(\varphi)\subset\mathbb{Q}^2$ as above. Let $h_{\ell}$ and $\varphi^{F}$ be as in \eqref{eq 2.2} and \eqref{eq 2.1} respectively. Then each $h_{\ell}$ is a mixed mock modular form of mixed weight $\left(k-1,\frac{1}{2}\right)$ with shadow $-\sqrt{m}\sum_{s \in S(\varphi) / \mathbb{Z}^{2}} D_{s}(\tau) \overline{\Theta_{m, \ell}^{s}(\tau)},$ and the function $\varphi^{F}$ is a mixed mock Jacobi form. More precisely, for each $\ell \in \mathbb{Z} / 2 m \mathbb{Z}$, the completion of $h_{\ell}$, defined by
$$
\widehat{h}_{\ell}(\tau):=h_{\ell}(\tau)-\sum_{s \in S(\varphi) / \mathbb{Z}^{2}} D_{s}(\tau) \Theta_{m, \ell}^{s *}(\tau)
$$
with $\Theta_{m, \ell}^{s *}$ as in \eqref{eq 2.11}, transforms like a modular form of weight $k-\frac{1}{2}$ with respect to $\Gamma(4mN)$  and the completion of $\varphi^{F}$ defined by
$$
\widehat{\varphi}^{F}(\tau, z):=\sum_{\ell(\bmod 2 m)} \widehat{h}_{\ell}(\tau) \vartheta_{m, \ell}(\tau, z)
$$
transforms like a Jacobi form of weight $k$ and index $m$ with respect to $\Gamma\ltimes\mathbb{Z}^2$.
\label{thm 2.3}
\end{thm}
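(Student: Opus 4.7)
The plan is to reduce the theorem to the modular properties of the completed Appell--Lerch sums already established in Proposition \ref{prop 2.2}. A direct algebraic manipulation gives the key identity
\begin{equation*}
\widehat{\varphi}^{F}(\tau,z) \;=\; \varphi(\tau,z) \;-\; \sum_{s\in S(\varphi)/\mathbb{Z}^{2}} D_{s}(\tau)\,\widehat{\mathcal{A}}_{m}^{s}(\tau,z),
\end{equation*}
obtained by substituting $\widehat{\mathcal{A}}_{m}^{s}=\mathcal{A}_{m}^{s}+\sum_{\ell}\Theta_{m,\ell}^{s*}\vartheta_{m,\ell}$ into $\varphi^{F}=\varphi-\sum_{s} D_{s}\mathcal{A}_{m}^{s}$ and collecting the theta sums as $\sum_{\ell}\widehat{h}_{\ell}\vartheta_{m,\ell}$. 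This rewriting is convenient because every piece on the right now has an explicit transformation law under $\Gamma\ltimes\mathbb{Z}^{2}$.

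Next I would check that the sum over $s\in S(\varphi)/\mathbb{Z}^{2}$ is well defined: combining the first identity of Proposition \ref{prop:Dstrans} with the first identity of Proposition \ref{prop 2.2}, the phases $\mathbf{e}(m(\mu\alpha-\lambda\beta))$ and $\mathbf{e}(-m(\mu\alpha-\lambda\beta))$ cancel so that the product $D_{s}(\tau)\widehat{\mathcal{A}}_{m}^{s}(\tau,z)$ depends only on the class of $s$ in $\mathbb{Q}^{2}/\mathbb{Z}^{2}$. The elliptic invariance of the full sum under $(\lambda,\mu)\in\mathbb{Z}^{2}$ then follows from the second identity of Proposition \ref{prop 2.2}, matching the elliptic transformation of $\varphi$ itself. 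For modularity, I would use that $\Gamma$ preserves $S(\varphi)$ (since $\varphi$ is $\Gamma$-invariant, its pole set in $\mathbb{Q}\tau+\mathbb{Q}$ is $\Gamma$-stable, and the action $s\mapsto s\gamma$ defined in Proposition \ref{prop:Dstrans} matches the transformation of $z_{s}$ under $\gamma$ as in \eqref{eq:zssgamma}). Combining the modular transformations of $D_{s}$ and $\widehat{\mathcal{A}}_{m}^{s}$, the factors of $(c\tau+d)^{k-1}$ and $(c\tau+d)\,\mathbf{e}(mcz^{2}/(c\tau+d))$ multiply to give the Jacobi automorphy factor of weight $k$ and index $m$; re-indexing the finite sum by $s\mapsto s\gamma^{-1}$ then shows that $\sum_{s} D_{s}\widehat{\mathcal{A}}_{m}^{s}$ transforms as a Jacobi form of weight $k$, index $m$ for $\Gamma\ltimes\mathbb{Z}^{2}$. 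Subtracting from $\varphi$ yields the same transformation for $\widehat{\varphi}^{F}$.

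Once full Jacobi invariance of $\widehat{\varphi}^{F}$ is in hand, the elliptic invariance alone forces a theta decomposition $\widehat{\varphi}^{F}(\tau,z)=\sum_{\ell(\bmod 2m)}\widetilde{h}_{\ell}(\tau)\vartheta_{m,\ell}(\tau,z)$, and uniqueness of such an expansion (recalled before \eqref{eq 1.4}) identifies $\widetilde{h}_{\ell}$ with the $\widehat{h}_{\ell}$ defined in the statement. The transformation of each $\widehat{h}_{\ell}$ under $\Gamma(4mN)$ then follows by the standard inversion of the theta decomposition via the vector-valued weight $1/2$ transformation of $\{\vartheta_{m,\ell}\}_{\ell\bmod 2m}$ under $\mathrm{SL}(2,\mathbb{Z})$, combined with the fact that the theta multiplier system trivializes on $\Gamma(4m)$; the argument is identical to that of \cite{DAS2015351}. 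Finally, the shadow claim follows by applying $\xi_{k-1/2}$ to $\widehat{h}_{\ell}$: since $D_{s}$ is holomorphic, only the $\Theta_{m,\ell}^{s*}$ factors contribute, and each gives $-\sqrt{m}\,\Theta_{m,\ell}^{s}$ by the identity recorded after \eqref{eq 2.11}.

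The main technical hurdle is the bookkeeping of the re-indexing $s\mapsto s\gamma^{-1}$ in the modular sum: one must verify that the right action of $\Gamma$ on $\mathbb{Q}^{2}$ indeed permutes the $\mathbb{Z}^{2}$-orbits of $S(\varphi)$, and that all the weight factors, theta multipliers, and index-$m$ quadratic phases combine correctly. Apart from this, the proof is a transparent assembly of Propositions \ref{prop:Dstrans} and \ref{prop 2.2} together with the uniqueness of the theta expansion, essentially reproducing the DMZ argument with $\mathrm{SL}(2,\mathbb{Z})$ replaced by $\Gamma\in\{\Gamma_{0}(N),\Gamma_{1}(N)\}$ throughout.
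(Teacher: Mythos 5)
Your proposal is correct and follows essentially the same route as the paper: the paper defers the proof of this theorem to its rational-index generalization (Theorem \ref{thm 2.10}), whose proof is exactly your argument --- the identity $\widehat{\varphi}^{F}=\varphi-\sum_{s}D_{s}\widehat{\mathcal{A}}_{m}^{s}$ obtained from $\varphi=\varphi^{F}+\varphi^{P}$, Jacobi-invariance of the completed polar part via Propositions \ref{prop:Dstrans} and \ref{prop 2.2}, and then reading off the modularity of $\widehat{h}_{\ell}$ from the theta decomposition. You merely spell out some bookkeeping (well-definedness of the orbit sum and the re-indexing $s\mapsto s\gamma^{-1}$) that the paper leaves as ``easy to see.''
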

Now suppose $\varphi$ has double poles on $S(\varphi)$. We need to define second order residues as well as Appell-Lerch sum, To this end, for $s\in S(\varphi)$ define the residues $D_s(\tau)$ and $E_s(\tau)$ by
\begin{equation}
\mathbf{e}(m \alpha z_{s}) \varphi\left(\tau, z_{s}+\varepsilon\right)=\frac{E_{s}(\tau)}{(2 \pi i \varepsilon)^{2}}+\frac{D_{s}(\tau)-2 m \alpha E_{s}(\tau)}{2 \pi i \varepsilon}+O(1) ~~\text{as}~~\varepsilon\to 0.
\label{eq 2.12}
\end{equation}
Similar arguments as in the simple pole case show that Proposition \ref{prop:Dstrans} is still true and the function $E_s$ also satisfies
\begin{equation}
E_{(\alpha+\lambda, \beta+\mu)}(\tau)=\mathbf{e}(m(\mu \alpha-\lambda \beta)) E_{(\alpha, \beta)}(\tau),\quad(\lambda, \mu) \in \mathbb{Z}^{2}
\label{eq:Ess+lm}
\end{equation}
The modular transformation of the residues is given in Proposition \ref{prop:DsEstrans} below.
\begin{prop}
The functions $E_{s}(\tau)$ and $D_{s}(\tau)$ defined by \eqref{eq 2.12} satisfy 
$$
E_{s}\left(\frac{a \tau+b}{c \tau+d}\right)=(c \tau+d)^{k-2} E_{s \gamma}(\tau), \quad D_{s}\left(\frac{a \tau+b}{c \tau+d}\right)=(c \tau+d)^{k-1} D_{s \gamma}(\tau),\quad \begin{pmatrix}a& b \\ c &d\end{pmatrix} \in \Gamma.
$$
\label{prop:DsEstrans}
\begin{proof}
The proof is similar to Proposition \ref{prop:Dstrans} once we rewrite \eqref{eq 2.12} as 
\begin{equation}
\mathbf{e}(m \alpha (z_{s}+2\varepsilon)) \varphi\left(\tau, z_{s}+\varepsilon\right)=\frac{E_{s}(\tau)}{(2 \pi i \varepsilon)^{2}}+\frac{D_{s}(\tau)}{2 \pi i \varepsilon}+O(1).
\end{equation}
Indeed, using similar calculations as in proof of Proposition \ref{prop:Dstrans}, one can show that 
\begin{equation}
(c\tau+d)^2\frac{E_{s}(\gamma\tau)}{(2 \pi i \varepsilon)^{2}}+(c\tau+d)\frac{D_{s}(\gamma\tau)}{2 \pi i \varepsilon}\stackrel{.}{=}(c\tau+d)^k\left[\frac{E_{s\gamma}(\tau)}{(2 \pi i \varepsilon)^{2}}+\frac{D_{s\gamma}(\tau)}{2 \pi i \varepsilon}\right],
\end{equation}
from which the result follows.
\end{proof}
\end{prop}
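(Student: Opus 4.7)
The plan is to extend the argument of Proposition \ref{prop:Dstrans} to the second-order Laurent coefficient. A useful first step is to rewrite the defining relation \eqref{eq 2.12} in a cleaner form that decouples $E_s$ from $D_s$. Expanding $\mathbf{e}(m\alpha(z_s+2\varepsilon))=\mathbf{e}(m\alpha z_s)\bigl(1+4\pi i m\alpha\,\varepsilon+O(\varepsilon^2)\bigr)$ and multiplying through, the cross-term exactly cancels the $-2m\alpha E_s$ in the residue, giving
\begin{equation}
\mathbf{e}(m\alpha(z_s+2\varepsilon))\,\varphi(\tau,z_s+\varepsilon)
=\frac{E_s(\tau)}{(2\pi i\varepsilon)^2}+\frac{D_s(\tau)}{2\pi i\varepsilon}+O(1).
\end{equation}
This is the form I would work with throughout, mirroring how \eqref{eq:Dsdefeps} was used in the simple-pole case.

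Next, I would evaluate this identity at $(\gamma\tau,\,z_s(\gamma\tau)+\varepsilon')$ with $\varepsilon'=\varepsilon/(c\tau+d)$. Because $z_s(\gamma\tau)+\varepsilon'=(z_{s\gamma}(\tau)+\varepsilon)/(c\tau+d)$, the modular transformation rule for $\varphi$ of weight $k$ and index $m$ converts the left-hand side into
\begin{equation}
(c\tau+d)^k\,\mathbf{e}\!\left(\frac{mc(z_{s\gamma}(\tau)+\varepsilon)^2}{c\tau+d}\right)\varphi(\tau,z_{s\gamma}(\tau)+\varepsilon),
\end{equation}
multiplied by $\mathbf{e}(m\alpha_s(z_s(\gamma\tau)+2\varepsilon'))$. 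The identity $\alpha_{s\gamma}z_{s\gamma}(\tau)-\alpha_s z_s(\gamma\tau)=cz_{s\gamma}(\tau)^2/(c\tau+d)$ from \eqref{eq:zssgamma} collapses the leading phases into $\mathbf{e}(m\alpha_{s\gamma}(z_{s\gamma}(\tau)+2\varepsilon))$ on the nose, but an extra oscillatory factor $\mathbf{e}\bigl(mc(2z_{s\gamma}(\tau)\varepsilon+\varepsilon^2)/(c\tau+d)\bigr)$ remains; expanding this to first order in $\varepsilon$ produces a shift at the $\varepsilon^{-1}$ level when multiplied against the $\varepsilon^{-2}$ pole.

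I would then compare coefficients in the Laurent expansion in $\varepsilon$. The $(2\pi i\varepsilon)^{-2}$ coefficient gives
\begin{equation}
(c\tau+d)^{2}E_s(\gamma\tau)=(c\tau+d)^{k}E_{s\gamma}(\tau),
\end{equation}
which is the claim for $E_s$. The $(2\pi i\varepsilon)^{-1}$ coefficient yields an equation for $(c\tau+d)D_s(\gamma\tau)$ that picks up three $E_{s\gamma}$-contributions: one from the expansion of the oscillatory factor above, one from $\mathbf{e}(2m\alpha_{s\gamma}\varepsilon)$ on the right, and one from $\mathbf{e}(2m\alpha_s\varepsilon')$ on the left. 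After substituting the $E_s$ identity, these combine into the factor $\alpha_s-\alpha_{s\gamma}(c\tau+d)+cz_{s\gamma}(\tau)$; a short calculation using $s\gamma=(a\alpha_s+c\beta_s,b\alpha_s+d\beta_s)$ reduces this to $\alpha_s(1-(ad-bc))=0$, leaving exactly $D_s(\gamma\tau)=(c\tau+d)^{k-1}D_{s\gamma}(\tau)$.

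The main obstacle, and the reason this is not a purely verbatim copy of Proposition \ref{prop:Dstrans}, is the bookkeeping at the double-pole level: the exponential prefactors $\mathbf{e}(m\alpha(z_s+2\varepsilon))$ and $\mathbf{e}(mcz^2/(c\tau+d))$ both contribute not only at order $\varepsilon^0$ but also at order $\varepsilon^1$, and each of those produces an $\varepsilon^{-1}$ correction when multiplied by the $\varepsilon^{-2}$ pole. Keeping track of all three corrections and verifying their cancellation via $ad-bc=1$ is the only nonroutine step; once that cancellation is seen, everything else follows the simple-pole template.
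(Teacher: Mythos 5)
Your proposal is correct and takes essentially the same route as the paper: you rewrite \eqref{eq 2.12} in the decoupled form $\mathbf{e}(m\alpha(z_s+2\varepsilon))\varphi(\tau,z_s+\varepsilon)=E_s(\tau)/(2\pi i\varepsilon)^2+D_s(\tau)/(2\pi i\varepsilon)+O(1)$ and then run the substitution $\tau\to\gamma\tau$, $\varepsilon\to\varepsilon/(c\tau+d)$ exactly as in Proposition \ref{prop:Dstrans}. The only difference is that you carry out explicitly the first-order phase bookkeeping (the cancellation $\alpha_s+cz_{s\gamma}(\tau)-\alpha_{s\gamma}(c\tau+d)=\alpha_s(1-(ad-bc))=0$) that the paper leaves to the reader, and that verification is correct.
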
 
To define a second-order Appell-Lerch sum, we first need a second order rational function with a double pole at $z_s$. We make use of the second expression in \eqref{eq 2.7} to define such a function. For $m\in\mathbb{N}$ and $c\in\mathbb{R}$, put
\begin{equation}
\mathcal{R}_{c}^{(2)}(\mathbf{e}(z))=\operatorname{Av}_{\mathbb{Z}}\left[\frac{\mathbf{e}(c z)}{(2 \pi i z)^{2}}\right]=\frac{1}{(2 \pi i)^{2}} \sum_{n \in \mathbb{Z}} \frac{\mathbf{e}(c(z-n))}{(z-n)^{2}}.
\label{eq 2.13}
\end{equation}
Note that 
\begin{equation}\label{eq:Rcd/dz}
\mathcal{R}_{c}^{(2)}(\mathbf{e}(z))=-\frac{1}{2\pi i}\frac{\partial}{\partial z}\mathcal{R}_{c}(\mathbf{e}(z))+c\mathcal{R}_{c}(\mathbf{e}(z)).
\end{equation}
Using the first expression in \eqref{eq 2.7} and \eqref{eq:Rcd/dz}, we get 
\begin{equation}
\mathcal{R}_{c}^{(2)}(\zeta)=\sum_{\ell \in \mathbb{Z}} \frac{|\ell-c|+\operatorname{sgn}\left(\text{Im}(z)\right)(\ell-c)}{2} \zeta^{\ell}.
\end{equation}  
With this, for $s\in\mathbb{Q}^2$, define the second order Appell-Lerch sum 
\begin{equation}
\begin{split}
\mathcal{B}_{m}^{s}(\tau, z)&=\mathbf{e}\left(-m \alpha z_{s}\right) \operatorname{Av}_{\zeta}^{(m)}\left[\mathcal{R}_{-2 m \alpha}^{(2)}\left(\zeta / \zeta_{s}\right)\right],
\\&=\mathbf{e}\left(-m \alpha z_{s}\right)\sum_{\lambda\in\mathbb{Z}} q^{m \lambda^{2}} \zeta^{2 m \lambda} \mathcal{R}^{(2)}_{-2 m \alpha}\left(q^{\lambda}\zeta / \zeta_{s}\right)
\end{split}
\label{eq 2.14}
\end{equation} 
and define the polar part by 
\begin{equation}
\varphi^{P}(\tau, z)=\sum_{s \in S(\varphi) / \mathbb{Z}^{2}}\left(D_{s}(\tau) \mathcal{A}_{m}^{s}(\tau, z)+E_{s}(\tau) \mathcal{B}_{m}^{s}(\tau, z)\right).
\label{eq 2.15}
\end{equation}
As before, we can prove that \cite{Dabholkar:2012nd}
\begin{equation}
\varphi^F=\varphi-\varphi^P.
\end{equation}
We prove this in a more general setting in Section \ref{rational_index}.
In this case as well, $\varphi^F$ turns out to be a mixed mock Jacobi form but of more complicated type as we now describe. Define the unary theta series 
\begin{equation}
\Xi_{m, \ell}^{s}(\tau)=\mathbf{e}(-m \alpha \beta) \sum_{\lambda \in \mathbb{Z}+\alpha+\ell / 2 m} \mathbf{e}(2 m \beta \lambda) q^{m \lambda^{2}}
\label{eq 2.16}
\end{equation}
of weight $1/2$ and its preimage under the shadow operator again given by the nonholomorphic Eichler integral \eqref{eq:eichintsum}
\begin{equation}
\Xi_{m, \ell}^{s *}(\tau)=\frac{\mathbf{e}(m \alpha \beta)}{2\sqrt{\pi}} \sum_{\lambda \in \mathbb{Z}+\alpha+\ell / 2 m}|\lambda| \mathbf{e}(-2 m \beta \lambda) \Gamma\left(-\frac{1}{2},4\lambda^2\pi my\right) q^{-m \lambda^{2}},
\label{eq 2.17}
\end{equation}
where $y=\text{Im}(\tau)$. 
Again, we have removed a constant factor from the Eichler integral for convenience. One can check that 
\begin{equation}
 \xi_{\frac{3}{2}}(\Xi_{m, \ell}^{s*}(\tau))=-\frac{1}{4\pi \sqrt{m}}\Xi_{m, \ell}^{s}(\tau)~.   
\end{equation}
The completion of $\mathcal{B}_m^s(\tau,z)$ is then given by 
\begin{equation}
\widehat{\mathcal{B}}_{m}^{s}(\tau, z):=\mathcal{B}_{m}^{s}(\tau, z)+m \sum_{\ell(\bmod 2 m)} \Xi_{m, \ell}^{s *}(\tau) \vartheta_{m, \ell}(\tau, z).
\label{eq 2.18}
\end{equation}
\begin{prop}
For $m \in \mathbb{N}$ and $s \in \mathbb{Q}^{2}$ the completion $\widehat{\mathcal{B}}_{m}^{s}$ of $\mathcal{B}_{m}^{s}$ defined by \eqref{eq 2.18} satisfies
\begin{equation}
\begin{split}
&\widehat{\mathcal{B}}_{m}^{(\alpha+\lambda, \beta+\mu)}(\tau, z)=\mathbf{e}(-m(\mu \alpha-\lambda \beta+\lambda \mu)) \widehat{\mathcal{B}}_{m}^{(\alpha, \beta)}(\tau,z),\quad\lambda, \mu \in \mathbb{Z}\\
&\widehat{\mathcal{B}}_{m}^{s}(\tau, z+\lambda \tau+\mu)=\mathbf{e}\left(-m\left(\lambda^{2} \tau+2 \lambda z\right)\right) \widehat{\mathcal{B}}_{m}^{s}(\tau,z) \quad\lambda, \mu \in \mathbb{Z} \\
&\widehat{\mathcal{B}}_{m}^{s}\left(\frac{a \tau+b}{c \tau+d}, \frac{z}{c \tau+d}\right)=(c \tau+d)^{2} \mathbf{e}\left(\frac{m c z^{2}}{c \tau+d}\right) \widehat{\mathcal{B}}_{m}^{s \gamma}(\tau, z) \quad\gamma=\begin{pmatrix}
a &b \\
c &d
\end{pmatrix}
\in \mathrm{S L}(2, \mathbb{Z})
\end{split}
\end{equation}
\label{prop 2.5}
\begin{proof}
The proof of first two properties is analogous to the proof of Proposition \ref{prop 2.2}. For the modular transformation, we define the functions
\begin{equation}
\begin{split}
&g^{(m)}_u(z;\tau):=\left(\frac{1}{2\pi i}\frac{\partial}{\partial u}-\frac{2m\text{Im}(u)}{y}\right)f^{(m)}_u(z;\tau),\\&\tilde{g}^{(m)}_u(z;\tau):=\left(\frac{1}{2\pi i}\frac{\partial}{\partial u}-\frac{2m\text{Im}(u)}{y}\right)\tilde{f}^{(m)}_u(z;\tau)
\end{split}
\end{equation}
where as usual $y=\text{Im}(\tau)$. Let us rewrite 
\begin{equation}
\hat{f}^{(m)}(u,z;\tau)\equiv \tilde{f}^{(m)}_{u}(z;\tau).
\end{equation}
Then the modular transformation \eqref{eq:modtranscurlfm} can be written as 
\begin{equation}
\hat{f}^{(m)}\left(\frac{u}{c \tau+d},\frac{z}{c \tau+d} ; \frac{a \tau+b}{c \tau+d}\right)=(c \tau+d)~ \mathbf{e}\left(\frac{m c\left(z^2-u^2\right)}{c \tau+d}\right) \hat{f}^{(m)}(u,z ; \tau)
\label{eq:modtranshatfm}
\end{equation} 
for every $\left(\begin{smallmatrix}
a & b \\ c & d
\end{smallmatrix}\right) \in\mathrm{SL}(2,\mathbb{Z})$.
Then using chain rule we have 
\begin{equation}
\begin{split}
\frac{1}{2\pi i}\frac{\partial}{\partial u}\hat{f}^{(m)}\left(\frac{u}{c\tau+d},\frac{z}{c\tau+d};\frac{a\tau+b}{c\tau+d}\right)%&=\frac{1}{2\pi i}\frac{\partial \hat{f}^{(m)}}{\partial u}\left(\frac{u}{c\tau+d},\frac{z}{c\tau+d};\frac{a\tau+b}{c\tau+d}\right)\frac{\partial}{\partial u}\left(\frac{u}{c\tau+d}\right)
=\frac{1}{2\pi i}\frac{1}{c\tau+d}\frac{\partial\hat{f}^{(m)}}{\partial u}\left(\frac{u}{c\tau+d},\frac{z}{c\tau+d};\frac{a\tau+b}{c\tau+d}\right),
\end{split}
\label{eq:partufhat}
\end{equation}
where on the right hand side $\frac{\partial \hat{f}^{(m)}}{\partial u}$ denotes the derivative of $\hat{f}^{(m)}$ with respect to the first argument. 
One the other hand, using \eqref{eq:modtranshatfm}, the left hand side of \eqref{eq:partufhat} becomes 
\begin{equation}
-2mcu~\mathbf{e}\left(\frac{m c\left(z^2-u^2\right)}{c \tau+d}\right) \hat{f}^{(m)}(u,z ; \tau)+(c\tau+d)~\mathbf{e}\left(\frac{m c\left(z^2-u^2\right)}{c \tau+d}\right)\frac{1}{2\pi i} \frac{\partial\hat{f}^{(m)}}{\partial u}(u,z ; \tau).
\end{equation}
We thus have 
\begin{equation}
\begin{split}
&\frac{1}{2\pi i}\frac{1}{c\tau+d}\frac{\partial\hat{f}^{(m)}}{\partial u}\left(\frac{u}{c\tau+d},\frac{z}{c\tau+d};\frac{a\tau+b}{c\tau+d}\right)\\&=-2mcu~\mathbf{e}\left(\frac{m c\left(z^2-u^2\right)}{c \tau+d}\right) \hat{f}^{(m)}(u,z ; \tau)+(c\tau+d)~\mathbf{e}\left(\frac{m c\left(z^2-u^2\right)}{c \tau+d}\right)\frac{1}{2\pi i} \frac{\partial\hat{f}^{(m)}}{\partial u}(u,z ; \tau).    
\end{split}    
\end{equation}
Under $u\to u/(c\tau+d),\tau\to \gamma\tau$, the second term in the definition of $\tilde{g}^{(m)}_u$ changes as 
\begin{equation}
\begin{split}
-2m\frac{\text{Im}(u/(c\tau+d))}{\text{Im}(\gamma\tau)}&\hat{f}^{(m)}\left(\frac{u}{c \tau+d},\frac{z}{c \tau+d} ; \frac{a \tau+b}{c \tau+d}\right)\\&=-2m\frac{\text{Im}(u(c\bar{\tau}+d))}{y}(c \tau+d)~ \mathbf{e}\left(\frac{m c\left(z^2-u^2\right)}{c \tau+d}\right) \hat{f}^{(m)}(u,z ; \tau),
\end{split}
\end{equation}
where we used the fact that \begin{equation}\text{Im}(\gamma\tau)=\frac{\text{Im}(\tau)}{|c\tau+d|^2}.\end{equation}
Thus we have 
\begin{equation}\label{eq:gtildetrans}
\begin{split}
\tilde{g}^{(m)}_{\frac{u}{c \tau+d}}\left(\frac{z}{c \tau+d} ; \frac{a \tau+b}{c \tau+d}\right)&=(c\tau+d)^2\mathbf{e}\left(\frac{m c\left(z^2-u^2\right)}{c \tau+d}\right)\left[\frac{1}{2\pi i} \frac{\partial\hat{f}^{(m)}}{\partial u}(u,z ; \tau)\right.\\&\left.\hspace{3cm}-2m\left(\frac{cu}{c\tau+d}+\frac{\text{Im}(u(c\bar{\tau}+d))}{y(c\tau+d)}\right)\hat{f}^{(m)}(u,z ; \tau)\right]\\&=(c\tau+d)^2\mathbf{e}\left(\frac{m c\left(z^2-u^2\right)}{c \tau+d}\right)\tilde{g}^{(m)}_u(z;\tau).
\end{split}
\end{equation}
Using similar manipulations as in Proposition \ref{prop 2.2}, one can show that 
\begin{equation}\label{eq:Bhatgtil}
\widehat{\mathcal{B}}^s_m(\tau,z)=\mathbf{e}(-m\alpha z_s)\tilde{g}^{(m)}_{z_s}(z;\tau).
\end{equation}
Equations \eqref{eq:gtildetrans} and \eqref{eq:Bhatgtil} are exact analogs of \eqref{eq:modtranscurlfm} and \eqref{eq:eAhatsmftilde} respectively. Then the proof follows exactly as in \eqref{eq:Ahatsmmodtrans}.
\end{proof}
\end{prop}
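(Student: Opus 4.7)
The plan is to mirror the structure of Proposition \ref{prop 2.2}, exploiting the fact that the second-order Appell--Lerch sum $\mathcal{B}_m^s$ is essentially a derivative of $\mathcal{A}_m^s$. The starting observation is the operator identity \eqref{eq:Rcd/dz}, which says $\mathcal{R}^{(2)}_c(\mathbf{e}(z)) = -\frac{1}{2\pi i}\partial_z\mathcal{R}_c(\mathbf{e}(z)) + c\,\mathcal{R}_c(\mathbf{e}(z))$. Thus every transformation law for $\mathcal{A}_m^s$ ought to yield a corresponding law for $\mathcal{B}_m^s$ after a careful differentiation that accounts for the nonholomorphic completion.

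For the first two properties I would work directly from the definition \eqref{eq 2.14}. The shift $s \mapsto s + (\lambda,\mu)$ is handled by repeating the calculation in \eqref{eq:s+lmAm} verbatim, using the periodicity $\mathcal{R}^{(2)}_{c+n}(\zeta)=\zeta^n \mathcal{R}^{(2)}_c(\zeta)$ (immediate from \eqref{eq 2.13}). The extra phase $\mathbf{e}(-m\lambda\mu)$ compared with Proposition \ref{prop 2.2} comes from the double-pole structure of $\mathcal{R}^{(2)}$ --- specifically, from expanding $(q^{-\lambda}\zeta/\zeta_s)^{-2m\lambda}$ together with the quadratic phase that is already present. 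To promote this to the completion, I would combine it with the analog of \eqref{eq:s+lmTheta*} for $\Xi_{m,\ell}^{s*}$, derived by rewriting the summation index $\lambda \in \mathbb{Z}+\alpha+\ell/(2m)$ under the shift and collecting phases. The elliptic property is obtained by the same manipulation as \eqref{eq:ellipinvAsm}, since both $\mathcal{R}^{(2)}$ and the added theta series $\sum_\ell \Xi^{s*}_{m,\ell}\vartheta_{m,\ell}$ are elliptic of the right index.

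The main obstacle is the modular transformation, and this is where I would do real work. The strategy is to introduce a differentiated companion to $\tilde{f}^{(m)}_u$ from \eqref{eq:fu(m)def}. Since $\partial_u$ alone fails to preserve modularity because of the phase $\mathbf{e}(-mcu^2/(c\tau+d))$ in \eqref{eq:modtranscurlfm}, one must add a nonholomorphic correction. A natural guess, motivated by the factor $\mathrm{Im}(u)/y$ that controls how $u$ behaves under $\tau\mapsto(a\tau+b)/(c\tau+d)$, is the operator
\begin{equation}
\mathcal{D}_u := \frac{1}{2\pi i}\frac{\partial}{\partial u} - \frac{2m\,\mathrm{Im}(u)}{y},
\end{equation}
and I would define $\tilde{g}^{(m)}_u(z;\tau) := \mathcal{D}_u \tilde{f}^{(m)}_u(z;\tau)$. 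Applying $\mathcal{D}_u$ to both sides of \eqref{eq:modtranscurlfm} and using $\mathrm{Im}(\gamma\tau)=y/|c\tau+d|^2$ together with $\mathrm{Im}(u/(c\tau+d)) = \mathrm{Im}(u(c\bar\tau+d))/|c\tau+d|^2$, one checks that the chain-rule derivative combined with the correction exactly cancels the anomalous $-2mcu/(c\tau+d)$ term, producing a clean weight-$2$ transformation of $\tilde{g}^{(m)}_u$ that carries the Jacobi-form factor $\mathbf{e}(mcz^2/(c\tau+d))$.

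The final step is to identify $\widehat{\mathcal{B}}^s_m(\tau,z)$ with $\mathbf{e}(-m\alpha z_s)\tilde{g}^{(m)}_{z_s}(z;\tau)$. The holomorphic piece $\mathcal{B}^s_m$ should match $\mathcal{D}_{z_s}$ applied to the analog of \eqref{eq:curlAfrel} (the part giving $\mathcal{A}^s_m$), by virtue of the operator identity \eqref{eq:Rcd/dz}; the differentiation with respect to $u=z_s$ brings down the factor $-2m\alpha$ needed to convert the first-order Appell--Lerch sum into the second-order one. The nonholomorphic completion $m\sum_\ell \Xi^{s*}_{m,\ell}\vartheta_{m,\ell}$ should then appear precisely as $\mathcal{D}_{z_s}$ acting on $\sum_\ell\Theta^{s*}_{m,\ell}\vartheta_{m,\ell}$, after using that $\partial_u \mathbf{e}(-2m\beta r)$ produces the index $\lambda$ in \eqref{eq 2.17} and that differentiating the incomplete gamma function $\Gamma(1/2,\cdot)$ produces $\Gamma(-1/2,\cdot)$ up to expected factors. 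Once this identification is in place, the modular transformation of $\widehat{\mathcal{B}}^s_m$ is inherited from the transformation of $\tilde{g}^{(m)}_u$ together with the identities \eqref{eq:zssgamma}, exactly as in \eqref{eq:Ahatsmmodtrans}.
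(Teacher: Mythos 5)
Your proposal is correct and follows essentially the same route as the paper: the same differential operator $\tfrac{1}{2\pi i}\partial_u-\tfrac{2m\,\mathrm{Im}(u)}{y}$ applied to $\tilde f^{(m)}_u$, the same identification $\widehat{\mathcal{B}}^s_m(\tau,z)=\mathbf{e}(-m\alpha z_s)\tilde g^{(m)}_{z_s}(z;\tau)$, and the same inheritance of the weight-$2$ modular law from \eqref{eq:gtildetrans} via \eqref{eq:zssgamma}, with the first two properties handled exactly as in Proposition \ref{prop 2.2}.
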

We describe the mock modular of $\varphi^F$ in this case below and prove a more general version in Section \ref{rational_index}.
\begin{thm}
Let $\varphi$ be as above, with double poles at $z=z_{s}\left(s \in S(\varphi) \subset \mathbb{Q}^{2}\right)$. Then the finite part $\varphi^{F}$ as defined by \eqref{eq 2.1} is a mixed mock Jacobi form. More precisely, the coefficients $h_{\ell}(\tau)$ as defined in \eqref{eq 2.2} are mixed mock modular forms of weight $k-\frac{1}{2}$, with completion given by
$$
\widehat{h}_{\ell}(\tau)=h_{\ell}(\tau)-\sum_{s \in S(\varphi) / \mathbb{Z}^{2}}\left(D_{s}(\tau) \Theta_{m, \ell}^{s *}(\tau)+E_{s}(\tau) \Xi_{m, \ell}^{s *}(\tau)\right).
$$
The first sum in $\widehat{h}_{\ell}$ has mixed weight $\left(k-1,\frac{1}{2}\right)$ and the second sum has mixed weight $\left(k-2,\frac{3}{2}\right)$. Moreover the completion $\widehat{\varphi}^{F}$ as in Theorem \ref{thm 2.3} transforms like a Jacobi form of weight $k$ and index $m$ with respect to $\Gamma\ltimes\mathbb{Z}^2.$
\label{thm 2.6}
\end{thm}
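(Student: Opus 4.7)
The plan is to define the completion of $\varphi^{F}$ directly as $\widehat{\varphi}^{F}(\tau,z):=\varphi(\tau,z)-\sum_{s\in S(\varphi)/\mathbb{Z}^2}\bigl(D_{s}(\tau)\widehat{\mathcal{A}}_{m}^{s}(\tau,z)+E_{s}(\tau)\widehat{\mathcal{B}}_{m}^{s}(\tau,z)\bigr)$, using the completions from Propositions \ref{prop 2.2} and \ref{prop 2.5}. From this single definition I read off both the Jacobi-form transformation of $\widehat{\varphi}^{F}$ and, after extracting theta coefficients, the formula for $\widehat{h}_\ell$ claimed in the theorem.

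The first main step is to verify that $\widehat{\varphi}^{F}$ transforms as a weight $k$, index $m$ Jacobi form for $\Gamma\ltimes\mathbb{Z}^2$. Under $\gamma=\left(\begin{smallmatrix}a&b\\c&d\end{smallmatrix}\right)\in\Gamma$, the three pieces combine: $\varphi$ picks up $(c\tau+d)^{k}\,\mathbf{e}\!\bigl(\tfrac{mcz^2}{c\tau+d}\bigr)$; by Proposition \ref{prop:DsEstrans} the residues satisfy $D_s(\gamma\tau)=(c\tau+d)^{k-1}D_{s\gamma}(\tau)$ and $E_s(\gamma\tau)=(c\tau+d)^{k-2}E_{s\gamma}(\tau)$; and by Propositions \ref{prop 2.2}, \ref{prop 2.5} the completed Appell-Lerch sums $\widehat{\mathcal{A}}_m^s, \widehat{\mathcal{B}}_m^s$ pick up $(c\tau+d)$ and $(c\tau+d)^2$ together with the exponential factor $\mathbf{e}\!\bigl(\tfrac{mcz^2}{c\tau+d}\bigr)$, and transform by $s\mapsto s\gamma$. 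Since $s\mapsto s\gamma$ is a bijection on $S(\varphi)/\mathbb{Z}^2$, the sum is preserved. The elliptic transformation is verified analogously: the shift phases $\mathbf{e}(\pm m(\mu\alpha-\lambda\beta))$ coming from $D_{(\alpha+\lambda,\beta+\mu)}, E_{(\alpha+\lambda,\beta+\mu)}$ (see \eqref{eq:Dstrans} and \eqref{eq:Ess+lm}) exactly cancel those from $\widehat{\mathcal{A}}_m^s, \widehat{\mathcal{B}}_m^s$, leaving the overall index $m$ factor from $\varphi$.

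The second step extracts the theta decomposition. Since $\widehat{\mathcal{A}}_m^s-\mathcal{A}_m^s=\sum_\ell\Theta_{m,\ell}^{s*}\vartheta_{m,\ell}$ and $\widehat{\mathcal{B}}_m^s-\mathcal{B}_m^s=m\sum_\ell\Xi_{m,\ell}^{s*}\vartheta_{m,\ell}$ are smooth in $z$, while $\sum_s(D_s\mathcal{A}_m^s+E_s\mathcal{B}_m^s)=\varphi^P$ has exactly the same polar structure (with the same Laurent coefficients \eqref{eq 2.12}) as $\varphi$, the function $\widehat{\varphi}^F$ is holomorphic in $z$. Using the decomposition $\varphi=\varphi^F+\varphi^P$ (see \eqref{eq 2.9} and the extension to double poles, which I would import from Section \ref{rational_index}), together with the theta decomposition \eqref{eq 2.1} of $\varphi^F$, it follows that $\widehat{\varphi}^F=\sum_\ell\widehat{h}_\ell\,\vartheta_{m,\ell}$ where $\widehat{h}_\ell$ is obtained from $h_\ell$ by subtracting the theta coefficients of $\sum_s(D_s(\widehat{\mathcal{A}}_m^s-\mathcal{A}_m^s)+E_s(\widehat{\mathcal{B}}_m^s-\mathcal{B}_m^s))$, giving the stated formula (with the factor $m$ absorbed in $\Xi_{m,\ell}^{s*}$ as a normalisation convention). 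The mixed-weight decomposition is then immediate: $D_s$ has weight $k-1$ and $\Theta_{m,\ell}^{s*}$ has weight $1/2$ (shadow weight $3/2$), yielding mixed weight $(k-1,\tfrac{1}{2})$; while $E_s$ has weight $k-2$ and $\Xi_{m,\ell}^{s*}$ has weight $3/2$ (shadow weight $1/2$), yielding mixed weight $(k-2,\tfrac{3}{2})$.

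The main obstacle is establishing the identity $\varphi-\varphi^P=\varphi^F$ in the double-pole case: one must verify via residue calculus, starting from the Laurent expansion \eqref{eq 2.12}, that the explicit combination $D_s\mathcal{A}_m^s+E_s\mathcal{B}_m^s$ reproduces both the order-two and order-one singularities of $\varphi$ at $z_s$, with the cross-term $-2m\alpha E_s$ in the simple-pole coefficient accounting precisely for the derivative-type piece in \eqref{eq:Rcd/dz}. Tracking the factor of $m$ in the nonholomorphic part of $\widehat{\mathcal{B}}_m^s$, which originates from the operator $\tfrac{1}{2\pi i}\tfrac{\partial}{\partial u}-\tfrac{2m\,\mathrm{Im}(u)}{y}$ in Proposition \ref{prop 2.5}, is the only subtle bookkeeping; once this is in place, everything else is a reindexing of the $S(\varphi)/\mathbb{Z}^2$ sums.
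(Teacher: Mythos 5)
Your proposal is correct and follows essentially the same route as the paper: the paper (deferring the detailed argument to the more general Theorem \ref{thm 2.12}) likewise establishes $\varphi=\varphi^F+\varphi^P$ by the residue/contour computation with the $D_s$ and $E_s$ contributions, and then proves modularity via the identity $\widehat{\varphi}^F=\varphi-\widehat{\varphi}^P$ using Propositions \ref{prop 2.2}, \ref{prop 2.5} and \ref{prop:DsEstrans}, exactly as you do (you merely read the identity in the opposite direction, defining $\widehat{\varphi}^F$ first and extracting $\widehat{h}_\ell$ afterwards). The factor of $m$ you flag in front of $\Xi_{m,\ell}^{s*}$ is a normalisation issue already present in the paper's own statement relative to \eqref{eq 2.18}, so your handling of it is consistent with the source.
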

\section{Generalisation to certain rational index meromorphic Jacobi forms} 
\label{rational_index}
In this section, we will prove the mock modularity of finite parts of certain meromorphic Jacobi forms of rational index, see \cite{ojm/1502092828,Gritsenko:1999fk} for some related discussion on Jacobi forms of non-integral index. \subsection{The Space $J_{k,\frac{m}{N}}^N(\Gamma)$} 
For $m\in\mathbb{Z}$ and $N\geq 1$, consider the space of functions which transform as a Jacobi form of fractional index $m/N$ with respect to $\Gamma\ltimes (N\mathbb{Z}\times\mathbb{Z})$. Denote this space by $J_{k,\frac{m}{N}}^N(\Gamma).$ 
\par If a function $\varphi$ is invariant under the elliptic action of $N\mathbb{Z}\times\mathbb{Z}$ with index $m/N$, then its theta decomposition has the form 
\begin{equation}\label{eq:thetadecN}
\varphi ( \tau , z ) = \sum \limits_ { \ell \in \mathbb { Z } / 2 m \mathbb{Z} } h _ { \ell } ( \tau ) \widetilde{\vartheta} _ { \frac{m}{N} , \ell } (\tau , z ),
\end{equation}
where 
\begin{equation}
 \widetilde{\vartheta}_{\frac{m}{N} , \ell } ( \tau , z )=\vartheta _ { m , \ell } ( N\tau , z ),
\label{eq:thetatildeN}   
\end{equation}
Indeed, as in Section \ref{subsec:weakjac}, using $\varphi\underset{\frac{m}{N}}{|}(0,1)=\varphi$ we get $\varphi(\tau,z+1)=\varphi(\tau,z)$ and hence a Fourier expansion 
\begin{equation}
    \varphi(\tau,z)=\sum_{\ell\in\Z}c_\ell(\tau)\zeta^\ell.
\end{equation}
The invariance $\varphi\underset{\frac{m}{N}}{|}(N,0)=\varphi$ gives 
\begin{eqnarray}
 c_\ell(\tau)q^{N(\ell+m)}= c_{\ell+2m}(\tau),  
\end{eqnarray}
which implies 
\begin{eqnarray}
    c_\ell(\tau)q^{-N\ell^2/4m}=c_{\ell+2mk}(\tau)q^{-N(\ell+2mk)^2/4m},\quad k\in\Z.
\end{eqnarray}
This gives us the expansion \eqref{eq:thetadecN} with the identification $h_\ell(\tau)=c_\ell(\tau)q^{-N\ell^2/4m}$. 
Again, if $\varphi$ is holomorphic on $\mathbb{H}\times\mathbb{C}$, then the theta coefficients are given by 
\begin{equation}\label{eq 2.23}
h_{\ell}(\tau):=h_{\ell}^{\left(z_{0}\right)}(\tau)=q^{-\ell^{2}N / 4 m} \int_{z_{0}}^{z_{0}+1} \varphi(\tau, z) \mathbf{e}(-\ell z) d z,
\end{equation}
where $\mathbf{e}(t)=e^{2\pi it}$.
Moreover if $\varphi$ transforms as a Jacobi form of weight $k$ and index $m/N$ with respect to $\Gamma\ltimes (N\Z\times\Z)$, then each $h_{\ell}(\tau)$ is a weakly holomorphic modular form of weight $k-\frac{1}{2}$ with respect to $\Gamma(4mN),$ and the proof is analogous to integer index case \cite{DAS2015351}. 
Thus every element of $J_{k,\frac{m}{N}}^N(\Gamma)$ has a theta series expansion of the above form. 
\par It is not difficult to construct examples of functions in this space. For example, a natural procedure to obtain modular forms for congruence subgroups is by rescaling the variable and by using the Fricke involution. Indeed, if $f\in M_k^{!}(\mathrm{SL}(2,\mathbb{Z}))$, then $f(N\tau)\in M_k^{!}(\Gamma_0(N))$ for $k\in\mathbb{Z}$ and $N\geq 1$. Similarly, if $f(\tau)\in M_k^{!}(\Gamma)$ with $\Gamma\in\{\Gamma_0(N),\Gamma_1(N)\}$ and $k\in\mathbb{Z}$ then for
\begin{equation}
\widetilde{f}(\tau):=f\underset{k}{|}\omega(N)=N^{-k/2}\tau^{-k}f\left(-\frac{1}{N\tau}\right), 
\end{equation} 
we have that $\widetilde{f}\in M_k^{!}(\Gamma)$. Here $\omega(N)=\left(\begin{smallmatrix}0&-1\\N&0\end{smallmatrix}\right)$ is called the \textit{Fricke involution}. These simple operations on Jacobi forms give us functions in the space $J_{k,\frac{m}{N}}^N(\Gamma).$ Suppose $\varphi(\tau,z)$ transforms like a Jacobi form of weight $k$ and index $m$ with respect to $\Gamma\ltimes\mathbb{Z}^2$ where $\Gamma\subseteq\mathrm{SL}(2,\mathbb{Z})$. We define two functions $\varphi_1,\varphi_2$ as follows: 
\begin{equation}
\begin{split}
&\varphi_1(\tau,z)=\varphi(N\tau,z),\quad\text{if}\quad \Gamma\in\{\Gamma^0(N),\Gamma^1(N)\},\\
&\varphi_2(\tau,z)=\varphi\underset{k,m}{|}\omega(N)=N^{-k/2}\tau^{-k}\mathbf{e}\left(-mz^2/N\tau\right)\varphi\left(-\frac{1}{N\tau},\frac{z}{N\tau}\right)\\ & \hskip 3in \text{if}\quad \Gamma\in\{\Gamma_0(N),\Gamma_1(N)\}.
\end{split}
\label{eq:ratindjacform}
\end{equation}
Note that 
\begin{equation}
\varphi_1=N^{-k/2}\left(\varphi\underset{k,m}{|}r_N\right),\quad r_N=\begin{pmatrix}
N&0\\0&1
\end{pmatrix}.
\label{eq 1.8}
\end{equation} 
We have the following proposition.
\begin{prop}
Let $\varphi_i(\tau,z),~i=1,2$ be as above. Then we have 
\begin{equation}
\varphi_1\underset{k,\frac{m}{N}}{|}\gamma=\varphi_1,\quad \gamma\in\begin{cases}\Gamma_0(N)&\text{if}~\Gamma=\Gamma^0(N)\\\Gamma_1(N)&\text{if}~\Gamma=\Gamma^1(N),\end{cases}\quad \varphi_2\underset{k,\frac{m}{N}}{|}\gamma=\varphi_2,\quad\gamma\in\Gamma\in\{\Gamma_0(N),\Gamma_1(N)\}
\end{equation}
and 
\begin{equation}
\varphi_i\underset{\frac{m}{N}}{|}X=\varphi_i,\quad X\in N\mathbb{Z}\times\mathbb{Z}.
\end{equation}
\label{prop 1.7}
\begin{proof}
We first prove the modular transformation. From \eqref{eq:phigg'mdetg} we have 
\begin{equation}
\varphi_1\underset{k,\frac{m}{N}}{|}\gamma=\varphi\underset{k,m}{|}(r_N\gamma)=\left(\varphi\underset{k,m}{|}\widehat{\gamma}\right)\underset{k,m}{|}r_N=\varphi\underset{k,m}{|}r_N=\varphi_1,
\end{equation}
where we used the fact that $r_N\gamma=\widehat{\gamma}r_N$ with
\begin{equation}
\gamma=\begin{pmatrix}
a&b\\cN&d
\end{pmatrix},\quad \widehat{\gamma}=\begin{pmatrix}
a&bN\\c&d
\end{pmatrix}.
\end{equation}
Similarly
\begin{equation}
\varphi_2\underset{k,\frac{m}{N}}{|}\gamma=\varphi\underset{k,\frac{m}{N}}{|}(\omega(N)\gamma)=\varphi\underset{k,m}{|}(\widetilde{\gamma}\omega(N))=\left(\varphi\underset{k,m}{|}\widetilde{\gamma}\right)\underset{k,m}{|}\omega(N)=\varphi\underset{k,m}{|}\omega(N)=\varphi_2,
\end{equation}
where 
\begin{equation}
\gamma=\begin{pmatrix}
a&b\\cN&d
\end{pmatrix},\quad \widetilde{\gamma}=\begin{pmatrix}
d&-c\\-bN&a
\end{pmatrix}.
\end{equation}
To prove the elliptic transformation, we use \eqref{eq:Xgg'Xmdetg}. 
The desired identity follows once we observe that if we define
\begin{equation}
\begin{split}
&\widehat{X}_N=Xr_N,\quad \widetilde{X}_N=X\omega(N),
\end{split}
\end{equation}
then for every $X=(\lambda,\mu)\in\mathbb{Z}\times\mathbb{Z},$
\begin{equation}
\widehat{X}_{N}=(N\lambda,\mu)\in N\mathbb{Z}\times\mathbb{Z},\quad\widetilde{X}_N=(N\mu,-\lambda)\in N\mathbb{Z}\times\mathbb{Z}.    
\end{equation}
\end{proof}
\end{prop}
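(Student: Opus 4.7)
The plan is to reduce both statements to the cocycle/associativity relations \eqref{eq:phigg'mdetg}, \eqref{eq:Xgg'Xmdetg} satisfied by the slash operators, so that the only substantive input becomes a pair of conjugation identities in $\mathrm{GL}^+(2,\mathbb{R})$ relating the congruence subgroups appearing in the hypothesis and conclusion. Writing $\varphi_1 = N^{-k/2}(\varphi\underset{k,m}{|}r_N)$ as in \eqref{eq 1.8} and $\varphi_2 = \varphi\underset{k,m}{|}\omega(N)$ converts the problem into understanding what happens when $r_N$ or $\omega(N)$ is moved past an element of the target group.

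First I would treat the modular part. For $\gamma\in\Gamma_0(N)$ I would find an element $\widehat{\gamma}\in\mathrm{SL}(2,\mathbb{Z})$ with $r_N\gamma=\widehat{\gamma}\,r_N$; a direct computation yields $\widehat{\gamma}=\left(\begin{smallmatrix}a & Nb\\ c & d\end{smallmatrix}\right)$ when $\gamma=\left(\begin{smallmatrix}a & b\\ cN & d\end{smallmatrix}\right)$, so $\widehat{\gamma}\in\Gamma^0(N)$, and when $\gamma\in\Gamma_1(N)$ the diagonal entries are preserved so $\widehat{\gamma}\in\Gamma^1(N)$. Applying \eqref{eq:phigg'mdetg} with $\det r_N=N$ then gives $\varphi_1\underset{k,m/N}{|}\gamma = N^{-k/2}(\varphi\underset{k,m}{|}\widehat{\gamma})\underset{k,m}{|}r_N$, and the invariance hypothesis on $\varphi$ collapses this to $\varphi_1$. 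The analogous step for $\varphi_2$ is to solve $\omega(N)\gamma=\widetilde{\gamma}\,\omega(N)$ for $\gamma$ in $\Gamma_0(N)$ or $\Gamma_1(N)$; the explicit solution $\widetilde{\gamma}=\left(\begin{smallmatrix}d & -c\\ -Nb & a\end{smallmatrix}\right)$ again lies in the same congruence subgroup (the $(2,1)$-entry is divisible by $N$, and the diagonal is simply swapped, which preserves the $\Gamma_1$-type condition), so the identical argument using \eqref{eq:phigg'mdetg} with $\det\omega(N)=N$ yields $\varphi_2\underset{k,m/N}{|}\gamma=\varphi_2$.

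For the elliptic invariance I would apply \eqref{eq:Xgg'Xmdetg} with $\gamma$ equal to $r_N$ or $\omega(N)$. The only nontrivial bookkeeping is to compute the transformed lattice element: for $X=(\lambda,\mu)\in\mathbb{Z}^2$ one has $X r_N=(N\lambda,\mu)$ and $X\omega(N)=(N\mu,-\lambda)$, and both of these sweep out precisely $N\mathbb{Z}\times\mathbb{Z}$ as $(\lambda,\mu)$ ranges over $\mathbb{Z}^2$. Thus \eqref{eq:Xgg'Xmdetg} converts the action of an arbitrary element of $N\mathbb{Z}\times\mathbb{Z}$ on $\varphi_i$ at index $m/N$ into the action of some $X\in\mathbb{Z}^2$ on $\varphi$ at index $m$ followed by the rescaling matrix; the hypothesis $\varphi\underset{m}{|}X=\varphi$ for $X\in\mathbb{Z}^2$ then finishes the proof.

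The only real content, and the place I would have to be careful, is the verification that the conjugated matrices $\widehat{\gamma}$ and $\widetilde{\gamma}$ actually land in the claimed subgroups, in particular checking that the $\Gamma_1$-type condition $a\equiv d\equiv 1\pmod N$ is respected under these conjugations; everything else is formal manipulation of the slash operator.
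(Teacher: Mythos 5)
Your proposal is correct and follows essentially the same route as the paper: write $\varphi_1=N^{-k/2}\bigl(\varphi\underset{k,m}{|}r_N\bigr)$ and $\varphi_2=\varphi\underset{k,m}{|}\omega(N)$, use the cocycle relations \eqref{eq:phigg'mdetg} and \eqref{eq:Xgg'Xmdetg} to move $r_N$ or $\omega(N)$ past the group element, and check that the conjugated matrices $\widehat{\gamma},\widetilde{\gamma}$ land in the subgroup for which $\varphi$ is invariant and that $Xr_N$, $X\omega(N)$ sweep out $N\mathbb{Z}\times\mathbb{Z}$. The conjugation identities and the resulting lattice elements you compute agree with those in the paper's proof, so there is nothing to add.
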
 
One can use Proposition \ref{prop 1.7} to construct a mock Jacobi form of higher level starting from a mock Jacobi form for $\mathrm{SL}(2,\mathbb{Z})\ltimes\mathbb{Z}^2$. Indeed, if  
\begin{equation}
    \phi(\tau,z)=\sum \limits_ { \ell \in \mathbb { Z } / 2 m \mathbb{Z} } h _ { \ell } ( \tau ) \vartheta _ { m , \ell } ( \tau , z )
\end{equation}
is a mock Jacobi form of index $m$ for $\mathrm{SL}(2,\mathbb{Z})$, then $h _ { \ell } ( \tau )$ admits a completion $\widehat{h} _ { \ell } ( \tau )$ such that $\widehat{\phi}(\tau,z)$ given by 
\begin{eqnarray}\label{eq:phihatcomp}
    \widehat{\phi}(\tau,z)=\sum \limits_ { \ell \in \mathbb { Z } / 2 m \mathbb{Z} } \widehat{h} _ { \ell } ( \tau ) \vartheta _ { m , \ell } ( \tau , z )
\end{eqnarray}
transforms as a Jacobi form of index $m$ for $\mathrm{SL}(2,\mathbb{Z})\ltimes\mathbb{Z}^2$. Then from the proof of Proposition \ref{prop 1.7}, we see that \footnote{Note that if $\widehat{\phi}$ transforms as a Jacobi form for $\mathrm{SL}(2,\mathbb{Z})\ltimes\mathbb{Z}^2$, then it transforms as a Jacobi form for $\Gamma\ltimes\mathbb{Z}^2$ for any $\Gamma\subset\mathrm{SL}(2,\Z)$.} $\widehat{\phi}(N\tau,z)$ transforms as a Jacobi form of index $m/N$ for $\Gamma_0(N)\ltimes(N\mathbb{Z}\times\mathbb{Z})$. This means that $\widehat{h} _ { \ell } ( N\tau )$ transforms as a modular form for $\Gamma(4mN)$ since $\widehat{h} _ { \ell } ( \tau )$ transforms as a modular form for $\Gamma(4m)$. Thus $h _ { \ell } ( N\tau )$ is a mock modular form and $\phi(N\tau,z)$ is a mock Jacobi form of index $m/N$ for $\Gamma_0(N)\ltimes(N\mathbb{Z}\times\mathbb{Z})$.  
In particular, $\textbf{H}(N\tau,z)$ is a mock Jacobi form of weight 2 and index $1/N$ for $\Gamma_0(N)\ltimes(N\mathbb{Z}\times\mathbb{Z})$.\\\\
Observe that any holomorphic function $\psi\in J_{k,\frac{m}{N}}^N(\Gamma)$ with $\Gamma\in\{\Gamma_0(N),\Gamma_1(N)\}$ has  a theta decomposition of the form given in \eqref{eq:thetadecN}. %Explicitly, we have\begin{equation}\psi(\tau,z)=\sum \limits_ { \ell \in \mathbb { Z } / 2 m \mathbb{Z} } h _ { \ell } ( \tau ) \widetilde{\vartheta}_{\frac{m}{N} , \ell } ( \tau , z ),\label{eq 2.21}\end{equation}where \begin{equation}\widetilde{\vartheta}_{\frac{m}{N} , \ell } ( \tau , z )=\vartheta _ { m , \ell } ( N\tau , z ),\label{eq:thetatildeN}\end{equation}and the theta coefficients $h_{\ell}$ are given by \begin{equation}h_{\ell}(\tau):=h_{\ell}^{\left(z_{0}\right)}(\tau)=q^{-\frac{\ell^{2}N}{4m}} \int_{z_{0}}^{z_{0}+1} \psi(\tau, z) \mathbf{e}(-\ell z) d z,\label{eq 2.23}\end{equation}where $z_0$ is an arbitrary point of $\mathbb{C}.$ 
Now suppose that $\psi$ has simple or double poles at $z=z_s$ with $s\in S(\psi)\subset\mathbb{Q}^2$ as in Section \ref{subsec:dmzmockjacform}. Then $\psi$ does not admit a theta decomposition as in \eqref{eq:thetadecN}. Moreover $h_{\ell}$ as in \eqref{eq 2.23} is not unique $\bmod ~2m$ for every $z_0\in\mathbb{C}$ and hence the sum in \eqref{eq:thetadecN} may not be well defined. To get over this problem \cite{Dabholkar:2012nd}, we will make a choice for $z_0$ such that $h_{\ell}$ defined in \eqref{eq 2.23} is uniquely defined $\bmod ~2m$. To do this, observe that
\begin{equation}\label{eq:hellNmod2n}
\begin{split}
h_{\ell}^{(z_0+N\tau)}&=q^{-\frac{\ell^{2}N}{4 m}} \int_{z_{0}}^{z_{0}+1} \psi(\tau, z+N \tau) \mathbf{e}(-\ell(z+N \tau)) d z\\&=q^{-\frac{\ell^{2} N}{4 m}} \int_{z_{0}}^{z_{0}+1} \psi(\tau, z) \mathbf{e}\left(-\frac{m}{N}\left(N^{2} \tau+2 N z\right)\right) \mathbf{e}(-\ell z) \mathbf{e}\left(-\ell N\tau\right)d z\\&=q^{-\left(\frac{ \ell^{2}N}{4 m}+m N+\ell N\right)} \int_{z_{0}}^{z_0+1} \psi(\tau, z) \mathbf{e}(-(\ell+2 m) z) d z\\&=q^{-\frac{N}{4 m}\left(\ell^{2}+4 m^{2}+4 m \ell\right)} \int_{z_{0}}^{z_{0}+1} \psi(\tau, z) \mathbf{e}(-(\ell+2 m) z) d z\\&=q^{-\frac{N}{4m}(\ell+2 m)^{2}}\int_{z_{0}}^{z_{0}+1} \psi(\tau, z) \mathbf{e}(-(\ell+2 m) z) d z\\&=h_{\ell+2m}^{(z_0)}.
\end{split}
\end{equation} 
So changing $z_0$ by $z_{0}+N \tau$ changes $\ell$ by $\ell+2 m$. So we choose $z_0=-\frac{\ell N}{2m}\tau$ so that 
\begin{equation}
h_{\ell}(\tau):=h_{\ell}^{\left(-\frac{\ell N}{2 m}\tau\right)}(\tau)
\label{eq 2.23'}
\end{equation}
is unique $\bmod ~2 m$ and then we can consider
\begin{equation}
\psi^{F}(\tau, z):=\sum_{\ell(\bmod~ 2 m)} h_{\ell}(\tau) \widetilde{\vartheta}_{\frac{m}{N}, \ell}(\tau, z).
\label{eq 2.24}
\end{equation}
We will show that $\psi^F$ is a mixed mock Jacobi form of weight $k$, index $\frac{m}{N}$ with respect to $\Gamma\ltimes(N\mathbb{Z}\times\mathbb{Z}).$ 
\subsection{Universal Appell-Lerch sum of index $m/N$}
\begin{comment}
Due to the change in the elliptic group, we now define a modified averaging operator of index $\frac{m}{N}$ as follows:
\begin{equation}
\text{Av}_{\zeta}^{(\frac{m}{N})}[F(\zeta)]=\sum_{\lambda\in\mathbb{Z}}q^{m N \lambda^{2}} \zeta^{2 m \lambda} F\left(q^{N \lambda} \zeta\right).
\label{eq 2.25}
\end{equation}
For $s=(\alpha,\beta)\in\mathbb{Q}^2$, we define the first order index $\frac{m}{N}$ Appell-Lerch sum by 
\begin{equation}
\begin{split}
\widetilde{\mathcal{A}}_{\frac{m}{N}}^{s}(\tau, z)&=\mathbf{e}\left(-\frac{m}{N} \alpha z_{s}\right) \text{Av}_{\zeta}^{(\frac{m}{N})}\left[\mathcal{R}_{-2 \frac{m}{N} \alpha}\left(\zeta/\zeta_{s}\right)\right]\\&=\mathbf{e}\left(-\frac{m}{N} \alpha z_{s}\right)\sum_{\lambda\in\mathbb{Z}} q^{mN \lambda^{2}} \zeta^{2 m \lambda} \mathcal{R}_{-2 \frac{m}{N} \alpha}\left(q^{N\lambda}\zeta / \zeta_{s}\right),
\end{split}
\label{eq 2.26}
\end{equation}
while the second order index $\frac{m}{N}$ Appell-Lerch sum is given by 
\begin{equation}
\begin{split}
\widetilde{\mathcal{B}}_{\frac{m}{N}}^{s}(\tau, z)&=\mathbf{e}\left(-\frac{m}{N} \alpha z_{s}\right) \operatorname{Av}_{\zeta}^{(\frac{m}{N})}\left[\mathcal{R}_{-2 \frac{m}{N} \alpha}^{(2)}\left(\zeta / \zeta_{s}\right)\right]
\\&=\mathbf{e}\left(-\frac{m}{N} \alpha z_{s}\right)\sum_{\lambda\in\mathbb{Z}} q^{mN \lambda^{2}} \zeta^{2 m \lambda} \mathcal{R}^{(2)}_{-2 \frac{m}{N} \alpha}\left(q^{N\lambda}\zeta / \zeta_{s}\right).
\end{split}
\label{eq:secordapplerN}
\end{equation}
\end{comment}
Let us define the index $m/N$ universal Appell-Lerch sums. Write $s_N=(\alpha/N,\beta)$ %then comparing with \eqref{eq 2.8} and \eqref{eq 2.15}, we have 
and define
\begin{equation}
\begin{split}
&\widetilde{\mathcal{A}}_{\frac{m}{N}}^{s}(\tau, z)=\mathcal{A}_{m}^{s_N}(N\tau, z)\\&\widetilde{\mathcal{B}}_{\frac{m}{N}}^{s}(\tau, z)=\mathcal{B}_{m}^{s_N}(N\tau, z).
\end{split}
\label{eq 2.27}
\end{equation}
%In view of Proposition \ref{prop 1.7}, this relation indicates that the completion of these Appell-Lerch sums (defined below) will have correct transformation property only for the subgroup $\Gamma\ltimes (N\mathbb{Z}\times\mathbb{Z})$. 
To describe the completion of these Appell-Lerch sums we need the index $m/N$ unary theta series and their completions defined as
%\begin{equation}\begin{split}
%&\widetilde{\Theta}_{\frac{m}{N}, \ell}^{s}(\tau):=\mathbf{e}\left(-\frac{m}{N} \alpha \beta\right) \sum_{\lambda \in \mathbb{Z}+\frac{\alpha}{N}+\frac{\ell}{2 m}} \lambda \mathbf{e}(2 m \beta \lambda) q^{mN \lambda^{2}},\\& \widetilde{\Theta}_{\frac{m}{N}, \ell}^{s *}(\tau)=\frac{1}{2}\mathbf{e}\left(\frac{m}{N} \alpha \beta\right) \sum_{\lambda \in \mathbb{Z}+\frac{\alpha}{N}+\frac{\ell}{2 m}} \operatorname{sgn}(\lambda) \mathbf{e}(-2 m \beta \lambda) \sqrt{\pi}\Gamma\left(\frac{1}{2},4\lambda^2\pi mNy\right) q^{-m N\lambda^{2}};\\&\widetilde{\vartheta}_{\frac{m}{N}, \ell}^{s}(\tau)=\mathbf{e}\left(-\frac{m}{N} \alpha \beta\right) \sum_{\lambda \in \mathbb{Z}+\frac{\alpha}{N}+\frac{\ell}{2 m}} \mathbf{e}(2 m \beta \lambda) q^{mN \lambda^{2}},\\&\widetilde{\vartheta}_{\frac{m}{N}, \ell}^{s *}(\tau)=\frac{\overline{\widetilde{\vartheta}_{\frac{m}{N}, \ell}^{s}(\tau)}}{2 \pi \sqrt{mN y}}-\mathbf{e}\left(\frac{m}{N} \alpha \beta\right) \sum_{\lambda \in \mathbb{Z}+\frac{\alpha}{N}+\frac{\ell}{2 m}}|\lambda| \mathbf{e}(-2 m \beta \lambda) \sqrt{\pi}\Gamma\left(\frac{1}{2},4\lambda^2\pi mNy\right) q^{-mN \lambda^{2}}.\end{split}\label{eq 2.28}\end{equation}
\begin{equation}
\begin{split}
&\widetilde{\Theta}_{\frac{m}{N}, \ell}^{s}(\tau)=\Theta_{m,\ell}^{s_N}(N\tau),\quad\widetilde{\Theta}_{\frac{m}{N}, \ell}^{s*}(\tau)=\Theta_{m,\ell}^{s_N*}(N\tau);\\&\widetilde{\Xi}_{\frac{m}{N}, \ell}^{s}(\tau)=\Xi_{m, \ell}^{s_N}(N\tau),\quad\widetilde{\Xi}_{\frac{m}{N}, \ell}^{s*}(\tau)=\Xi_{m, \ell}^{s_N*}(N\tau).
\end{split}
\label{eq 2.29}
\end{equation} 
Now define the completions of the Appell-Lerch sums
\begin{equation}
\begin{split}
&\widehat{\widetilde{\mathcal{A}}^{s}}_{\frac{m}{N}}(\tau,z):=\widetilde{\mathcal{A}}_{\frac{m}{N}}^{s}(\tau, z)+\sum_{\ell(\bmod 2 m)} \widetilde{\Theta}_{\frac{m}{N}, \ell}^{s *}(\tau) \widetilde{\vartheta}_{\frac{m}{N}, \ell}(\tau, z);\\&\widehat{\widetilde{\mathcal{B}}^{s}}_{\frac{m}{N}}(\tau,z):=\widetilde{\mathcal{B}}_{\frac{m}{N}}^{s}(\tau, z)+m\sum_{\ell(\bmod 2 m)} \widetilde{\Xi}_{\frac{m}{N}, \ell}^{s *}(\tau) \widetilde{\vartheta}_{\frac{m}{N}, \ell}(\tau, z).
\end{split}
\label{eq 2.30}
\end{equation}
Using \eqref{eq 2.27} and \eqref{eq 2.29}, we have
\begin{equation}
\begin{split}
&\widehat{\widetilde{\mathcal{A}}^{s}}_{\frac{m}{N}}(\tau,z)=\widehat{\mathcal{A}}^{s_N}_{m}(N\tau,z);\\&\widehat{\widetilde{\mathcal{B}}^{s}}_{\frac{m}{N}}(\tau,z)=\widehat{\mathcal{B}}^{s_N}_{m}(N\tau,z).
\end{split}
\label{eq 2.31}
\end{equation}
\begin{prop}
Let $s\in\mathbb{Q}^2$ and $\widetilde{\mathcal{A}}_{\frac{m}{N}}^{s}(\tau, z)$ and $\widetilde{\mathcal{B}}_{\frac{m}{N}}^{s}(\tau, z)$ be the Appell-Lerch sums as in \eqref{eq 2.27}. Then their completions $\widehat{\widetilde{\mathcal{A}}^{s}}_{\frac{m}{N}}(\tau,z)$ and $\widehat{\widetilde{\mathcal{B}}^{s}}_{\frac{m}{N}}(\tau,z)$ as in \eqref{eq 2.30} satisfy 
\begin{equation}
\begin{split}
&\widehat{\widetilde{\mathcal{A}}}^{(\alpha+\lambda, \beta+\mu)}_{\frac{m}{N}}(\tau, z)=\mathbf{e}\left(-\frac{m}{N}(\mu \alpha-\lambda \beta)\right) \widehat{\widetilde{\mathcal{A}}}_{\frac{m}{N}}^{(\alpha, \beta)}(\tau,z),\quad(\lambda, \mu) \in N\mathbb{Z}\times\mathbb{Z},\\
&\widehat{\widetilde{\mathcal{A}}^{s}}_{\frac{m}{N}}(\tau, z+\lambda \tau+\mu)=\mathbf{e}\left(-\frac{m}{N}\left(\lambda^{2} \tau+2 \lambda z\right)\right) \widehat{\widetilde{\mathcal{A}}^{s}}_{\frac{m}{N}}(\tau,z), \quad(\lambda, \mu) \in N\mathbb{Z}\times\mathbb{Z}, \\
&\widehat{\widetilde{\mathcal{A}}^{s}}_{\frac{m}{N}}\left(\frac{a \tau+b}{c \tau+d}, \frac{z}{c \tau+d}\right)=(c \tau+d) \mathbf{e}\left(\frac{m}{N}\frac{c z^{2}}{c \tau+d}\right) \widehat{\widetilde{\mathcal{A}}}^{s \gamma}_{\frac{m}{N}}(\tau, z), \quad\gamma=\begin{pmatrix}
a &b \\
c &d
\end{pmatrix}
\in \Gamma,
\end{split}
\end{equation} 
and similarly for $\widehat{\widetilde{\mathcal{B}}^{s}}_{\frac{m}{N}}(\tau,z)$ except that the last transformation becomes 
\begin{equation}
\widehat{\widetilde{\mathcal{B}}^{s}}_{\frac{m}{N}}\left(\frac{a \tau+b}{c \tau+d}, \frac{z}{c \tau+d}\right)=(c \tau+d)^2 \mathbf{e}\left(\frac{m}{N}\frac{c z^{2}}{c \tau+d}\right) \widehat{\widetilde{\mathcal{B}}}^{s \gamma}_{\frac{m}{N}}(\tau, z), \quad\gamma=\begin{pmatrix}
a &b \\
c &d
\end{pmatrix}
\in \Gamma.
\end{equation}
\label{prop 2.7}
\begin{proof}
We first prove the modular transformation. Define the following slash operator on the completions of the original Appell-Lerch sums: for $\gamma=\left(\begin{smallmatrix}a&b\\c&d\end{smallmatrix}\right)\in\mathrm{GL}^+(2,\mathbb{R})$,
\begin{equation}
\begin{split}
\left(\widehat{\mathcal{A}}_m^s\underset{1,m}{|}\gamma\right)(\tau,z)=(\text{det}\gamma)^{1/2}(c\tau+d)^{-1}\mathbf{e}\left(-\frac{m}{\text{det}\gamma}\frac{cz^2}{c\tau+d}\right)\widehat{\mathcal{A}}_m^{s\gamma^{-1}}\left(\frac{a\tau+b}{c\tau+d},\frac{z}{c\tau+d}\right);\\
\left(\widehat{\mathcal{B}}_m^s\underset{2,m}{|}\gamma\right)(\tau,z)=(\text{det}\gamma)(c\tau+d)^{-2}\mathbf{e}\left(-\frac{m}{\text{det}\gamma}\frac{cz^2}{c\tau+d}\right)\widehat{\mathcal{B}}_m^{s\gamma^{-1}}\left(\frac{a\tau+b}{c\tau+d},\frac{z}{c\tau+d}\right).
\end{split}
\end{equation} 
Note that we need these modified slash operators since the superscript $s$ transforms nontrivially under the modular action. Proposition \ref{prop 2.2} and Proposition \ref{prop 2.5} imply that $\widehat{\mathcal{A}}_m^s$ and $\widehat{\mathcal{B}}_m^s$ are invariant under the above slash operators for every $s\in\mathbb{Q}^2$ and $\gamma\in\mathrm{SL}(2,\Z)$. Moreover, \eqref{eq 2.31} implies that 
\begin{equation}
\begin{split}
&\widehat{\widetilde{\mathcal{A}}^{s}}_{\frac{m}{N}}(\tau,z)=N^{-1/2}\left(\mathcal{A}^{s}_{m}\underset{1,m}{|}r_N\right)(\tau,z);\\&\widehat{\widetilde{\mathcal{B}}^{s}}_{\frac{m}{N}}(\tau,z)=N^{-1}\left(\mathcal{B}^{s}_{m}\underset{2,m}{|}r_N\right)(\tau,z),
\end{split}
\label{eq 2.32}
\end{equation}
where $r_N$ is as in \eqref{eq 1.8}. Since $s\gamma_1^{-1}\gamma_2^{-1}=s(\gamma_2\gamma_1)^{-1},$ Proposition \ref{prop 1.7} implies the elliptic and modular invariance with respect to $N\mathbb{Z}\times\mathbb{Z}$ and $\Gamma.$ For the first transformation property, note that using \eqref{eq 2.31} and Proposition \ref{prop 2.2} we have 
\begin{equation}
\begin{split}
\widehat{\widetilde{\mathcal{A}}}^{(\alpha+\lambda, \beta+\mu)}_{\frac{m}{N}}(\tau, z)=\widehat{\mathcal{A}}^{\left(\frac{\alpha+\lambda}{N}, \beta+\mu\right)}_{m}(N\tau, z)&=\mathbf{e}\left(-m\left(\mu \frac{\alpha}{N}-\frac{\lambda}{N} \beta\right)\right) \widehat{\mathcal{A}}_{m}^{\left(\frac{\alpha}{N}, \beta\right)}(N\tau,z)\\&=\mathbf{e}\left(-\frac{m}{N}\left(\mu\alpha-\lambda \beta\right)\right)\widehat{\widetilde{\mathcal{A}}}^{(\alpha, \beta)}_{\frac{m}{N}}(\tau, z).
\end{split}
\end{equation}
%where we used the fact that $\lambda\in N\mathbb{Z}$ so that $\lambda/N\in\mathbb{Z}$. 
\end{proof}
\end{prop}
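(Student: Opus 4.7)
The plan is to derive all three transformation properties directly from identities \eqref{eq 2.31}, which express the tilde (index $m/N$) Appell--Lerch sums and their completions in terms of the original (index $m$) versions evaluated at $N\tau$. Since Propositions \ref{prop 2.2} and \ref{prop 2.5} already establish the full $\mathrm{SL}(2,\mathbb{Z})\ltimes\mathbb{Z}^2$ transformation laws of $\widehat{\mathcal{A}}^s_m$ and $\widehat{\mathcal{B}}^s_m$, the task reduces to carefully tracking the substitution $\tau\mapsto N\tau$ together with the map $s\mapsto s_N=(\alpha/N,\beta)$ and the index rescaling $m\mapsto m/N$.

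For the shift by $(\lambda,\mu)\in N\mathbb{Z}\times\mathbb{Z}$, I apply \eqref{eq 2.31} at parameter $(\alpha+\lambda,\beta+\mu)$; since $\lambda\in N\mathbb{Z}$, one has $(\lambda/N,\mu)\in\mathbb{Z}^2$, and Proposition \ref{prop 2.2} produces the phase $\mathbf{e}(-m(\mu\alpha/N-(\lambda/N)\beta))=\mathbf{e}(-(m/N)(\mu\alpha-\lambda\beta))$, matching the claim. For the elliptic shift $z\mapsto z+\lambda\tau+\mu$, I rewrite $\lambda\tau+\mu=(\lambda/N)(N\tau)+\mu$ with $\lambda/N\in\mathbb{Z}$ and apply the $z$-elliptic law of Proposition \ref{prop 2.2} at the point $N\tau$ with shift vector $(\lambda/N,\mu)$; the resulting phase becomes $\mathbf{e}(-m((\lambda/N)^2 N\tau+2(\lambda/N)z))=\mathbf{e}(-(m/N)(\lambda^2\tau+2\lambda z))$, as required. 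An identical argument treats $\widehat{\widetilde{\mathcal{B}}^s}_{m/N}$, using Proposition \ref{prop 2.5} in place of Proposition \ref{prop 2.2}.

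For the modular transformation under $\Gamma$, the superscript $s$ transforms nontrivially, so I introduce a modified slash operator acting on the family $\{\widehat{\mathcal{A}}_m^s\}_{s\in\mathbb{Q}^2}$ by
\begin{equation*}
\bigl(\widehat{\mathcal{A}}_m^s\underset{1,m}{|}\gamma\bigr)(\tau,z)=(\det\gamma)^{1/2}(c\tau+d)^{-1}\mathbf{e}\!\left(-\frac{m}{\det\gamma}\frac{cz^2}{c\tau+d}\right)\widehat{\mathcal{A}}_m^{s\gamma^{-1}}\!\left(\frac{a\tau+b}{c\tau+d},\frac{z}{c\tau+d}\right),
\end{equation*}
and analogously for $\widehat{\mathcal{B}}_m^s$ at weight $2$. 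Propositions \ref{prop 2.2} and \ref{prop 2.5} are exactly the statement that these families are $\mathrm{SL}(2,\mathbb{Z})$-invariant under these slash actions (with the $s$-index tracking the modular action). Using \eqref{eq 2.31}, I then write
\begin{equation*}
\widehat{\widetilde{\mathcal{A}}^s}_{m/N}(\tau,z)=N^{-1/2}\bigl(\widehat{\mathcal{A}}_m^s\underset{1,m}{|}r_N\bigr)(\tau,z),\qquad r_N=\begin{pmatrix}N&0\\0&1\end{pmatrix},
\end{equation*}
and mimic the computation in Proposition \ref{prop 1.7}: for $\gamma\in\Gamma$ the relation $r_N\gamma=\widehat{\gamma}\,r_N$ with $\widehat{\gamma}\in\mathrm{SL}(2,\mathbb{Z})$, together with the composition law \eqref{eq:phigg'mdetg} for the modified slash operator, lets me push $\widehat{\gamma}$ to the inside and annihilate it by $\mathrm{SL}(2,\mathbb{Z})$-invariance, leaving precisely the claimed transformation of weight $1$ (resp. $2$) and index $m/N$ under $\Gamma$.

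The main obstacle is the bookkeeping in this final step: verifying that the composition law \eqref{eq:phigg'mdetg} genuinely extends to the $s$-twisted slash operator with the index rescaling $m\mapsto m/\det\gamma$, and checking that the superscript $s\gamma^{-1}$ behaves correctly under the conjugation $r_N\gamma=\widehat{\gamma}r_N$ so that the final superscript reads $s\gamma$ rather than some shifted variant such as $s_N\widehat{\gamma}$. Once the slash-operator formalism is pinned down, all three transformations drop out mechanically from the $\mathrm{SL}(2,\mathbb{Z})$-case.
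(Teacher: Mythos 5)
Your proposal is correct and follows essentially the same route as the paper: both reduce the statement to the integral-index Propositions \ref{prop 2.2} and \ref{prop 2.5} via the identification \eqref{eq 2.31}, introduce the identical $s$-twisted slash operators of weight $1$ and $2$, and obtain the $\Gamma$-modularity from the conjugation $r_N\gamma=\widehat{\gamma}\,r_N$ as in Proposition \ref{prop 1.7} (your version of \eqref{eq 2.32} with the hat on $\widehat{\mathcal{A}}_m^s$ is in fact the intended reading of the paper's formula). The bookkeeping you flag at the end is resolved exactly by the relation $s\gamma_1^{-1}\gamma_2^{-1}=s(\gamma_2\gamma_1)^{-1}$, which makes the twisted slash a right action, and by $s\,r_N^{-1}=s_N$, so no gap remains.
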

\noindent Now we come back to $\psi$. Suppose $\psi$ has only simple poles at $z=z_s$ for $s\in S(\psi).$ Proceeding as usual, we define the first order residues by 
\begin{equation}
D_{s}(\tau)=2 \pi i \mathbf{e}\left(\frac{m}{N} \alpha z_{s}\right) \operatorname{Res}_{z=z_{s}}(\psi(\tau, z)) \quad\left(s=(\alpha, \beta) \in S(\psi), z_{s}=\alpha \tau+\beta\right).
\label{eq 2.33}
\end{equation}
The transformations for $\psi$ gives us the following proposition:
\begin{prop}
Let $\psi\in J_{k,\frac{m}{N}}^N(\Gamma)$ be a meromorphic function with simple poles at $z=z_s=\alpha\tau+\beta$ for $s=(\alpha,\beta)\in S(\psi)\subset\mathbb{Q}^2$. Then the residues $D_s(\tau)$ defined in \eqref{eq 2.33} satisfies the following properties:
\begin{equation}
\begin{split}
& D_{(\alpha+\lambda, \beta+\mu)}(\tau)=\mathbf{e}\left(\frac{m}{N}(\mu \alpha-\lambda \beta)\right) D_{(\alpha, \beta)}(\tau),\quad(\lambda, \mu) \in N\mathbb{Z}\times\mathbb{Z}\\
& D_{s}\left(\frac{a \tau+b}{c \tau+d}\right)=(c \tau+d)^{k-1} D_{s \gamma}(\tau),\quad\gamma=\begin{pmatrix}
a& b \\ c& d
\end{pmatrix}\in \Gamma,
\end{split}
\end{equation}
where $s\gamma=(a\alpha+c\beta,b\alpha+d\beta).$
\label{prop 2.8}
\begin{proof}
A slight modification of Proposition \ref{prop:Dstrans}.
\end{proof}
\end{prop}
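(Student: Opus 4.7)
The plan is to transcribe the proof of Proposition \ref{prop:Dstrans} line by line, replacing the index $m$ by the fractional index $m/N$ everywhere it appears and being careful about the two places where the new restrictions on $(\lambda,\mu)$ enter. Since $\psi\in J_{k,m/N}^{N}(\Gamma)$, the elliptic and modular transformation laws take the form
\begin{align*}
    \psi(\tau,z+\lambda\tau+\mu)&=\mathbf{e}\!\left(-\tfrac{m}{N}(\lambda^{2}\tau+2\lambda z+\lambda\mu)\right)\psi(\tau,z),\quad(\lambda,\mu)\in N\mathbb{Z}\times\mathbb{Z},\\
    \psi\!\left(\tfrac{a\tau+b}{c\tau+d},\tfrac{z}{c\tau+d}\right)&=(c\tau+d)^{k}\mathbf{e}\!\left(\tfrac{m}{N}\tfrac{cz^{2}}{c\tau+d}\right)\psi(\tau,z),\quad\gamma=\begin{pmatrix}a&b\\c&d\end{pmatrix}\in\Gamma,
\end{align*}
and these are the only two properties of $\psi$ used in the analogous integer-index argument.

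For the first identity, I would substitute $s=(\alpha+\lambda,\beta+\mu)$ into the definition \eqref{eq 2.33}, factor the exponent $\tfrac{m}{N}(\alpha+\lambda)((\alpha+\lambda)\tau+(\beta+\mu))$ in exactly the same way as in Proposition \ref{prop:Dstrans}, and then rewrite $\operatorname{Res}_{z=z_{s+(\lambda,\mu)}}\psi(\tau,z)=\operatorname{Res}_{z=z_{s}}\psi(\tau,z+\lambda\tau+\mu)$ via the elliptic law above. All algebraic manipulations go through verbatim. The only new check is that the residual factor $\mathbf{e}(-\tfrac{m}{N}\lambda\mu)$ that arises when one shifts the residue equals $1$: writing $\lambda=N\lambda'$ with $\lambda'\in\mathbb{Z}$, this factor becomes $\mathbf{e}(-m\lambda'\mu)=1$. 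This is precisely why the elliptic lattice must be $N\mathbb{Z}\times\mathbb{Z}$ rather than $\mathbb{Z}^{2}$ in order for the argument to close at fractional index $m/N$.

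For the modular identity, I would again mimic Proposition \ref{prop:Dstrans}, starting from the Laurent expansion $\mathbf{e}(\tfrac{m}{N}\alpha z_{s})\psi(\tau,z_{s}+\varepsilon)=D_{s}(\tau)/(2\pi i\varepsilon)+O(1)$, substituting $\varepsilon\mapsto\varepsilon/(c\tau+d)$ and $\tau\mapsto\gamma\tau$, and applying the modular transformation of $\psi$ displayed above. The identities \eqref{eq:zssgamma} for $z_{s}(\gamma\tau)$ and $\alpha_{s\gamma}z_{s\gamma}(\tau)-\alpha_{s}z_{s}(\gamma\tau)$ combine $\alpha_{s}z_{s}(\gamma\tau)+cz_{s\gamma}(\tau)^{2}/(c\tau+d)$ into $\alpha_{s\gamma}z_{s\gamma}(\tau)$, so that the prefactor $\mathbf{e}(\tfrac{m}{N}\alpha_{s}z_{s}(\gamma\tau))\cdot(c\tau+d)^{k}\mathbf{e}(\tfrac{m}{N}c(z_{s\gamma}+\varepsilon)^{2}/(c\tau+d))$ reduces, as $\varepsilon\to 0$, to $(c\tau+d)^{k}\mathbf{e}(\tfrac{m}{N}\alpha_{s\gamma}z_{s\gamma}(\tau))$ times terms of order $\varepsilon^{0}$. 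Extracting the coefficient of $1/\varepsilon$ on both sides gives $(c\tau+d)D_{s}(\gamma\tau)=(c\tau+d)^{k}D_{s\gamma}(\tau)$, which is the claimed weight-$(k-1)$ transformation.

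Since the entire proof is a mechanical re-indexing of Proposition \ref{prop:Dstrans}, I do not expect any genuine obstacle, which is why the authors dispatch the statement in a single line. The only non-automatic point is the compatibility check $\mathbf{e}(-\tfrac{m}{N}\lambda\mu)=1$ for $(\lambda,\mu)\in N\mathbb{Z}\times\mathbb{Z}$ in the elliptic step; once that is noted, every other line carries over from the integer-index proof with $m$ replaced by $m/N$.
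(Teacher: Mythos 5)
Your proposal is correct and takes exactly the route the paper intends: the paper's proof is the one-line remark that this is ``a slight modification of Proposition \ref{prop:Dstrans}'', and your transcription — replacing $m$ by $m/N$, restricting $(\lambda,\mu)$ to $N\mathbb{Z}\times\mathbb{Z}$, and noting that the only genuinely new point is the check $\mathbf{e}\bigl(-\tfrac{m}{N}\lambda\mu\bigr)=1$ for $\lambda\in N\mathbb{Z}$ — is precisely that modification carried out.
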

\noindent We now define the polar part of $\psi$ as 
\begin{equation}
\psi^P(\tau,z):=\sum_{s\in S(\psi)/(N\mathbb{Z}\times\mathbb{Z})}D_s(\tau)\widetilde{\mathcal{A}}^{s}_{\frac{m}{N}}(\tau,z).
\label{eq 2.34}
\end{equation} 
This sum is well defined because $D_s(\tau)$ and $\widetilde{\mathcal{A}}^{s}_{\frac{m}{N}}$ transform with opposite factors under $s\mapsto s+(\lambda,\mu)$ with $(\lambda,\mu)\in N\mathbb{Z}\times\mathbb{Z}$. 
\begin{thm}\label{thm 2.9}
Let $\psi\in J_{k,\frac{m}{N}}^N(\Gamma)$ be a meromorphic function with simple poles at $z=z_s=\alpha\tau+\beta$ for $s=(\alpha,\beta)\in S(\psi)\subset\mathbb{Q}^2$. Let $\psi^F(\tau,z)$ and $\psi^P(\tau,z)$ be as in \eqref{eq 2.24} and \eqref{eq 2.34} respectively. Then 
\begin{equation}
\psi(\tau,z)=\psi^F(\tau,z)+\psi^P(\tau,z).
\label{eq:psip+f}
\end{equation}
\begin{proof}
Since the functions $\psi, \psi^{F}$ and $\psi^{P}$ are meromorphic, it is enough to prove the decomposition \eqref{eq:psip+f} on a horizontal line in the $z$-plane. Let us take a point $P=A \tau+B \in \mathbb{C}$ with $(A, B) \in \mathbb{R}^{2} \backslash S(\psi).$ Since $S(\psi)\subseteq \mathbb{Q}\tau+\mathbb{Q}$ for any $\tau\in\mathbb{H}$, it is easy to see that the horizontal line $\operatorname{Im}(z)=\operatorname{Im}(P)=A\cdot\text{Im}(\tau)$ does not intersect $S(\psi)$. We will prove \eqref{eq:psip+f} on this horizontal line. The Fourier expansion of $\psi$ on this line has the form 
$$
\psi(\tau, z)=\sum_{\ell \in \mathbb{Z}} q^{\frac{\ell^{2}N}{4 m}} h_{\ell}^{(P)}(\tau) \zeta^{\ell}
$$
where the coefficients $h_{\ell}^{(P)}$ are defined by \eqref{eq 2.23}.  Then using \eqref{eq 2.24} we have
\begin{equation}
\begin{split}
\psi(\tau, z)-\psi^{F}(\tau, z)&=\sum_{\ell \in \mathbb{Z}}\left(h_{\ell}^{(P)}(\tau)-h_{\ell}(\tau)\right) q^{\frac{\ell^{2}N}{4 m}} \zeta^{\ell} \quad(\operatorname{Im}(z)=\operatorname{Im}(P))\\&=\sum_{\ell \in \mathbb{Z}}\left(\int_{P}^{P+1}dz'\psi(\tau,z')\mathbf{e}(-\ell z')-\int_{-\frac{\ell N}{2m}\tau}^{-\frac{\ell N}{2m}\tau+1}dz'\psi(\tau,z')\mathbf{e}(-\ell z')\right)\zeta^{\ell}.
\label{eq 2.35}
\end{split}
\end{equation}
Since $\psi(\tau,z+1)=\psi(\tau,z)$ the above integral can be written as a contour integral 
\begin{equation}
\psi(\tau, z)-\psi^{F}(\tau, z)=\sum_{\ell \in \mathbb{Z}}\left(\int_{C(P,\tau,\ell)}dz'\psi(\tau,z')\mathbf{e}(-\ell z')\right)\zeta^{\ell},
\end{equation} 
where the contour is shown in Figure \ref{fig:contrec}. 
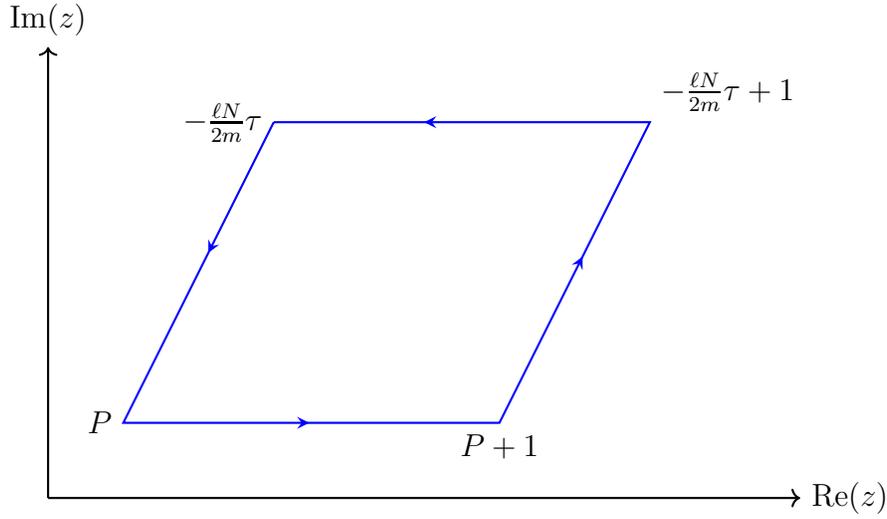
\begin{figure}[!h]
\centering
\begin{tikzpicture}[decoration={markings,
    mark=at position 2cm   with {\arrowreversed[line width=1pt]{stealth}},
    mark=at position 7cm   with {\arrowreversed[line width=1pt]{stealth}},
    mark=at position 12cm with {\arrowreversed[line width=1pt]{stealth}},
    mark=at position 17cm with {\arrowreversed[line width=1pt]{stealth}}
  }]
  \draw[thick, ->] (0,0) -- (10,0)node[right, black] {$\text{Re}(z)$} coordinate (xaxis);

  \draw[thick, ->] (0,0) -- (0,6)node[above, black] {$\text{Im}(z)$} coordinate (yaxis);

  \node[above] at (xaxis) {};

  \node[right]  at (yaxis) {};

  \path[draw,blue, line width=0.8pt, postaction=decorate] 
        (3,5)  node[left, black] {$-\frac{\ell N}{2m}\tau$}
    --  (8,5)  node[above right, black] {$-\frac{\ell N}{2m}\tau+1$} 
    --  (6,1)  node[midway, right, black] {} node[below, black] {$P+1$}
    --  (1,1)  node[midway, below, black] {} node[left, black] {$P$} 
    --  (3,5)  node[midway, left, black] {};

\end{tikzpicture}
\caption{Contour $C(P,\tau,\ell)$ for the integral}
\label{fig:contrec}
\end{figure}
Cauchy's residue theorem then implies that $q^{\ell^{2}N/ 4 m}\left(h_{\ell}^{(P)}(\tau)-h_{\ell}(\tau)\right)$ is $2 \pi i$ times the sum of the residues of poles of $\psi(\tau, z) \mathbf{e}(-\ell z)$ in the parallelogram shown in Figure \ref{fig:contrec} above. If there are poles on the upper horizontal line, then we deform the contour by a semicircle around the pole in the usual way and add an extra factor of $1/2$ to the residues of those poles \footnote{This procedure looks somewhat ad hoc, but we show in Appendix \ref{sec:poleattcont} that in the cases of interest, the residues at the poles on the upper horizontal line vanish and therefore there is no ambiguity involving the choice of integration contour.}. We thus have
\begin{equation}
\begin{split}
&\psi(\tau, z)-\psi^{F}(\tau, z)\\&=\sum_{\ell\in\Z}q^{\frac{\ell^{2}N}{4 m}}\zeta^\ell\left(h_{\ell}^{(P)}(\tau)-h_{\ell}(\tau)\right) \\&=2 \pi i \sum_{\ell\in\Z}\sum_{s=(\alpha, \beta) \in S(\psi) / \mathbb{Z}} \frac{\operatorname{sgn}(\alpha-A)-\operatorname{sgn}(\alpha+\ell N/ 2 m)}{2} \operatorname{Res}_{z=z_{s}}(\psi(\tau, z) \mathbf{e}(-\ell z))\zeta^\ell \\
&=\sum_{\ell\in\Z}\sum_{s=(\alpha, \beta) \in S(\psi) / \mathbb{Z}} \frac{\operatorname{sgn}(\alpha-A)-\operatorname{sgn}(\ell+2 m \alpha/N)}{2} D_{s}(\tau) \mathbf{e}\left(-(\ell+m \alpha/N) z_{s}\right)\zeta^\ell,
\end{split}
\end{equation}
where ``$(\alpha, \beta) \in S(\psi) / \mathbb{Z}$" means that we consider all $\alpha$, but $\beta$ only modulo 1 , which is the same by periodicity as considering only the $(\alpha, \beta)$ with $B \leq \beta<B+1.$ Note that $\text{sgn}(\ell+2m\alpha/N)$ is defined to be zero if $\ell+2m\alpha/N=0$, \textit{i,e.} if $s$ is on the upper horizontal line and 1 or $-1$ otherwise. Plugging this into \eqref{eq 2.35}, using \eqref{eq 2.7} and using the fact that on $\text{Im}(z)=\text{Im}(P)=A$
\begin{equation}
\operatorname{sgn}(\alpha-A)=-\operatorname{sgn}\left(\operatorname{Im}\left(z-z_{s}\right)\right),
\end{equation} 
we get
\begin{equation}
\begin{split}
&\psi (\tau, z)-\psi^{F}(\tau, z) \\&=-\sum_{s=(\alpha, \beta) \in S(\psi) / \mathbb{Z}} \mathbf{e}\left(-\frac{m}{N} \alpha z_{s}\right) D_{s}(\tau) \sum_{\ell \in \mathbb{Z}} \frac{\operatorname{sgn}\left(\operatorname{Im}\left(z-z_{s}\right)\right)+\operatorname{sgn}(\ell+2 m \alpha/N)}{2}\left(\frac{\zeta}{\zeta_{s}}\right)^{\ell} \\
&=\sum_{s=(\alpha, \beta) \in S(\psi) / \mathbb{Z}} \mathbf{e}\left(-\frac{m}{N} \alpha z_{s}\right) D_{s}(\tau) \mathcal{R}_{-2 \frac{m}{N} \alpha}\left(\zeta / \zeta_{s}\right) \\
&=\sum_{s=(\alpha, \beta) \in S(\psi) / (N\mathbb{Z}\times\mathbb{Z})} \sum_{\lambda \in \mathbb{Z}} \mathbf{e}\left(-\frac{m}{N}(\alpha-N\lambda)\left(z_{s}-N\lambda \tau\right)\right) D_{(\alpha-N\lambda, \beta)}(\tau)\\&\hspace{9cm}\times\mathcal{R}_{-2 \frac{m}{N}(\alpha-N\lambda)}\left(q^{N\lambda} \zeta / \zeta_{s}\right) \\
&=\sum_{s=(\alpha, \beta) \in S(\psi) / (N\mathbb{Z}\times\mathbb{Z})} D_{s}(\tau) \mathbf{e}\left(-\frac{m}{N} \alpha z_{s}\right) \sum_{\lambda \in \mathbb{Z}} q^{mN \lambda^{2}} \zeta^{2 m\lambda} \mathcal{R}_{-2 \frac{m}{N} \alpha}\left(q^{N\lambda} \zeta / \zeta_{s}\right)\\&=\sum_{s=(\alpha, \beta) \in S(\psi) / (N\mathbb{Z}\times\mathbb{Z})} D_{s}(\tau)\mathcal{A}^{s_N}_{m}(N\tau,z)\\&=\sum_{s\in S(\psi)/(N\mathbb{Z}\times\mathbb{Z})}D_s(\tau)\widetilde{\mathcal{A}}^{s}_{\frac{m}{N}}(\tau,z)\\&=\psi^P(\tau,z),
\end{split}
\end{equation}
where we have used the periodicity property of $D_{s}(\tau)$ in Proposition \ref{prop 2.8} together with the obvious periodicity property $\mathcal{R}_{c+n}(\zeta)=\zeta^{n} \mathcal{R}_{c}(\zeta)$ of $\mathcal{R}_{c}(\zeta).$
\end{proof}
\end{thm}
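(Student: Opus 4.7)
My plan is to follow the DMZ strategy in its integer-index form, transposing each step to the rational index $m/N$ by replacing the elliptic lattice $\Z^2$ with $N\Z\times\Z$ and using the relation $\widetilde{\mathcal{A}}^s_{m/N}(\tau,z)=\mathcal{A}^{s_N}_m(N\tau,z)$ from \eqref{eq 2.27}. Since $\psi,\psi^F,\psi^P$ are all meromorphic in $z$, it suffices to establish the identity on a single horizontal line $\operatorname{Im}(z)=A\cdot\operatorname{Im}(\tau)$ chosen so that $(A,B)\in\R^2\setminus S(\psi)$; because $S(\psi)\subset \Q\tau+\Q$ this line avoids all poles of $\psi$.

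On this line $\psi(\tau,z)$ admits an ordinary Fourier expansion $\sum_{\ell\in\Z} q^{\ell^2 N/4m}\,h_\ell^{(P)}(\tau)\,\zeta^\ell$ with $h_\ell^{(P)}$ defined by the integral from $P$ to $P+1$ as in \eqref{eq 2.23}. The first step is then to write
\begin{equation*}
\psi(\tau,z)-\psi^F(\tau,z)=\sum_{\ell\in\Z}\bigl(h_\ell^{(P)}(\tau)-h_\ell(\tau)\bigr)q^{\ell^2 N/4m}\zeta^\ell,
\end{equation*}
where $h_\ell(\tau)=h_\ell^{(-\ell N\tau/2m)}(\tau)$. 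Using the $z\mapsto z+1$ periodicity of $\psi$ (equivalently, $\psi\underset{m/N}{|}(0,1)=\psi$), the difference $h_\ell^{(P)}-h_\ell$ can be identified with a contour integral of $\psi(\tau,z)\mathbf{e}(-\ell z)$ around the parallelogram with vertices $P$, $P+1$, $-\ell N\tau/2m+1$, $-\ell N\tau/2m$. Cauchy's residue theorem evaluates this contour integral in terms of $\operatorname{Res}_{z=z_s}\psi(\tau,z)\mathbf{e}(-\ell z)$ for those $s=(\alpha,\beta)\in S(\psi)/\Z$ lying inside the parallelogram; the indicator of ``inside'' is $\tfrac12\bigl(\operatorname{sgn}(\alpha-A)-\operatorname{sgn}(\alpha+\ell N/2m)\bigr)$, exactly as in the DMZ proof. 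Poles that happen to land on the upper horizontal boundary are handled by indenting the contour and counting half-residues; by the remark already made in the excerpt, the cases that matter for us are addressed in the appendix, so there is no ambiguity.

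Substituting the definition \eqref{eq 2.33} of $D_s(\tau)$, the difference becomes
\begin{equation*}
\psi-\psi^F = \sum_{\ell\in\Z}\sum_{s\in S(\psi)/\Z}\frac{\operatorname{sgn}(\alpha-A)-\operatorname{sgn}(\ell+2m\alpha/N)}{2}\,D_s(\tau)\,\mathbf{e}(-(\ell+m\alpha/N)z_s)\,\zeta^\ell.
\end{equation*}
Interchanging the $\ell$ and $s$ sums and observing that $\operatorname{sgn}(\alpha-A)=-\operatorname{sgn}(\operatorname{Im}(z-z_s))$ on our horizontal line, the inner $\ell$-sum collapses by the first formula in \eqref{eq 2.7} to $-\mathbf{e}(-m\alpha z_s/N)\,\mathcal R_{-2m\alpha/N}(\zeta/\zeta_s)$ multiplied by $\mathbf{e}(-m\alpha z_s/N)$. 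Next I would pass from summing over $s\in S(\psi)/\Z$ to summing over $s\in S(\psi)/(N\Z\times\Z)$ by writing $\alpha=\alpha'-N\lambda$, $\lambda\in\Z$, and using Proposition \ref{prop 2.8} to move the shift through $D_s$, together with the identity $\mathcal R_{c+n}(\zeta)=\zeta^n\mathcal R_c(\zeta)$ to absorb the shift inside $\mathcal R$. After these manipulations the $\lambda$-sum reconstructs precisely the averaging $\sum_\lambda q^{mN\lambda^2}\zeta^{2m\lambda}\mathcal R_{-2m\alpha/N}(q^{N\lambda}\zeta/\zeta_s)$, which by definition equals $\mathcal A_m^{s_N}(N\tau,z)=\widetilde{\mathcal A}^s_{m/N}(\tau,z)$, yielding $\psi^P(\tau,z)$.

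The step I expect to be the only delicate one is the reindexing $\alpha\mapsto \alpha-N\lambda$: in DMZ the lattice is $\Z^2$ and the shift is by $\lambda\in\Z$, while here the elliptic group is $N\Z\times\Z$ so the shift must be by multiples of $N$, and one has to check that the phases produced by the periodicity of $D_s$ from Proposition \ref{prop 2.8} exactly match the phases introduced by replacing $\mathcal R_{-2m\alpha/N}$ with $\mathcal R_{-2m(\alpha-N\lambda)/N}$ via the $\zeta^n$-shift identity. Everything else is a mechanical transcription of the DMZ argument with $m$ replaced by $m/N$ and $\mathcal A^s_m$ replaced by $\widetilde{\mathcal A}^s_{m/N}$.
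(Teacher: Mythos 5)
Your proposal is correct and follows essentially the same route as the paper's own proof: reduction to a horizontal line avoiding $S(\psi)$, the contour integral over the parallelogram with half-residues on the boundary, collapsing the $\ell$-sum via the first identity in \eqref{eq 2.7}, and the reindexing $\alpha\mapsto\alpha-N\lambda$ using Proposition \ref{prop 2.8} together with $\mathcal{R}_{c+n}(\zeta)=\zeta^n\mathcal{R}_c(\zeta)$ to reassemble $\widetilde{\mathcal{A}}^{s}_{m/N}=\mathcal{A}^{s_N}_m(N\tau,z)$. The step you flag as delicate is indeed exactly where the paper does the matching of phases, and it goes through as you describe.
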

\noindent We now have the following theorem  using Proposition \ref{prop 2.7} and Theorem \ref{thm 2.9}.
\begin{thm}
\label{thm_rational_index}
Let $\psi\in J_{k,\frac{m}{N}}^N(\Gamma)$ be a meromorphic function with simple poles on the set $S(\psi)\subset\mathbb{Q}^2$ as above. Let $h_{\ell},\psi^{F}$ and $D_s(\tau)$ be as in \eqref{eq 2.23'}, \eqref{eq 2.24} and \eqref{eq 2.33} respectively. Then each $h_{\ell}$ is a mixed mock modular form of mixed weight $\left(k-1,\frac{1}{2}\right)$ with shadow $-\sqrt{\frac{m}{N}}\sum_{s \in S(\psi) / (N\mathbb{Z}\times\mathbb{Z})} D_{s}(\tau) \overline{\widetilde{\Theta}_{\frac{m}{N}, \ell}^{s}(\tau)},$ and the function $\psi^{F}$ is a mixed mock Jacobi form. More precisely, for each $\ell \in \mathbb{Z} / 2 m \mathbb{Z}$ the completion of $h_{\ell}$ defined by
$$
\widehat{h}_{\ell}(\tau):=h_{\ell}(\tau)-\sum_{s \in S(\psi) / (N\mathbb{Z}\times\mathbb{Z})} D_{s}(\tau) \widetilde{\Theta}_{\frac{m}{N}, \ell}^{s *}(\tau)
$$
with $\widetilde{\Theta}_{\frac{m}{N}, \ell}^{s *}$ as defined in \eqref{eq 2.29}, transforms like a modular form of weight $k-\frac{1}{2}$ with respect to $\Gamma(4mN)$  and the completion of $\psi^{F}$ defined by
$$
\widehat{\psi}^{F}(\tau, z):=\sum_{\ell(\bmod 2 m)} \widehat{h}_{\ell}(\tau) \widetilde{\vartheta}_{\frac{m}{N}, \ell}(\tau, z)
$$
transforms like a Jacobi form of weight $k$ and index $\frac{m}{N}$ with respect to $\Gamma\ltimes(N\mathbb{Z}\times\mathbb{Z})$.
\label{thm 2.10}
\begin{proof}
Let us start by showing that $\widehat{h}_{\ell}(\tau)$ transforms like a modular form of weight $k-\frac{1}{2}$. Observe that 
\begin{equation}
\begin{split}
\widehat{\psi}^{F}(\tau, z)&=\sum_{\ell(\bmod 2 m)} \widehat{h}_{\ell}(\tau) \widetilde{\vartheta}_{\frac{m}{N}, \ell}(\tau, z)\\&=\psi^F(\tau,z)-\sum_{s \in S(\psi) / (N\mathbb{Z}\times\mathbb{Z})} D_{s}(\tau) \sum_{\ell(\bmod 2 m)}\widetilde{\Theta}_{\frac{m}{N}, \ell}^{s *}(\tau)\widetilde{\vartheta}_{\frac{m}{N}, \ell}(\tau, z)\\&=\psi^F(\tau,z)-\sum_{s \in S(\psi) / (N\mathbb{Z}\times\mathbb{Z})} D_{s}(\tau) \widehat{\widetilde{\mathcal{A}}^{s}}_{\frac{m}{N}}(\tau,z)+\sum_{s \in S(\psi) / (N\mathbb{Z}\times\mathbb{Z})} D_{s}(\tau)\widetilde{\mathcal{A}}_{\frac{m}{N}}^{s}(\tau, z)\\&=\psi^F(\tau,z)+\psi^P(\tau,z)-\widehat{\psi}^{P}(\tau, z)\\&=\psi(\tau,z)-\widehat{\psi}^{P}(\tau, z),
\end{split}
\end{equation}
where we defined 
\begin{equation}
\widehat{\psi}^{P}(\tau, z):=\sum_{s \in S(\psi) / (N\mathbb{Z}\times\mathbb{Z})} D_{s}(\tau) \widehat{\widetilde{\mathcal{A}}^{s}}_{\frac{m}{N}}(\tau,z)
\end{equation}
and used Theorem \ref{thm 2.9}.
Using Proposition \ref{prop 2.7} and \ref{prop 2.8}, it is easy to see that $\widehat{\psi}^{P}(\tau, z)$ transforms like a Jacobi form of weight $k$ and index $m/N$ for $\Gamma\ltimes (N\mathbb{Z}\times\mathbb{Z})$. It follows that $\widehat{\psi}^{F}(\tau, z)$ transforms like a Jacobi form of weight $k$ and index $m/N$ for $\Gamma\ltimes (N\mathbb{Z}\times\mathbb{Z})$. Since $\widehat{h}_{\ell}$ are the theta coefficients of $\widehat{\psi}^F(\tau,z)$, which transforms like a  Jacobi form  of weight $k$ and index $m/N$ for $\Gamma\ltimes (N\mathbb{Z}\times\mathbb{Z})$, it transforms like a modular form of weight $k-\frac{1}{2}$ with respect to $\Gamma(4mN)$. This implies that $h_{\ell}$ are mock modular forms and $\psi^F$ is a mock Jacobi form. The mixed weight follows from the fact that $D_s$ has weight $k-1$ and $\widetilde{\Theta}_{\frac{m}{N}, \ell}^{s *}$ has weight $1/2$.
\end{proof}
\end{thm}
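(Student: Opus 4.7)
The plan is to reduce the claim to properties already established for the Appell--Lerch data of index $m/N$, by treating $\widehat{\psi}^F$ as the ``modular correction'' of $\psi$ obtained by replacing the meromorphic polar piece $\psi^P$ with its completion. Concretely, starting from the decomposition $\psi=\psi^F+\psi^P$ of Theorem~\ref{thm 2.9}, I would introduce
\[
\widehat{\psi}^P(\tau,z):=\sum_{s\in S(\psi)/(N\mathbb{Z}\times\mathbb{Z})}D_s(\tau)\,\widehat{\widetilde{\mathcal{A}}^{s}}_{\frac{m}{N}}(\tau,z),
\]
and observe that this sum is well defined: under $s\mapsto s+(\lambda,\mu)$ with $(\lambda,\mu)\in N\mathbb{Z}\times\mathbb{Z}$, Proposition~\ref{prop 2.8} and Proposition~\ref{prop 2.7} show that $D_s$ and $\widehat{\widetilde{\mathcal{A}}^s}_{m/N}$ acquire opposite multiplicative factors $\mathbf{e}(\pm\tfrac{m}{N}(\mu\alpha-\lambda\beta))$, so the product is periodic in the summation index.

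The next step is to verify that $\widehat{\psi}^P$ transforms like a Jacobi form of weight $k$ and index $m/N$ for $\Gamma\ltimes(N\mathbb{Z}\times\mathbb{Z})$. This follows by combining the modular covariance in Proposition~\ref{prop 2.8}, namely $D_s|_\gamma$ picks up $(c\tau+d)^{k-1}$ and shifts $s\mapsto s\gamma$, with the third transformation law in Proposition~\ref{prop 2.7}, which compensates with $(c\tau+d)\,\mathbf{e}(\tfrac{m}{N}\tfrac{cz^2}{c\tau+d})$ and the matching shift $s\gamma\mapsto s\gamma\cdot\gamma^{-1}$ under the change of summation variable. Together with the elliptic transformation these give the desired Jacobi form covariance of $\widehat{\psi}^P$.

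Now I define
\[
\widehat{\psi}^F:=\sum_{\ell(\bmod 2m)}\widehat{h}_\ell(\tau)\,\widetilde{\vartheta}_{\frac{m}{N},\ell}(\tau,z),
\]
with $\widehat{h}_\ell$ as in the statement, and check that this equals $\psi-\widehat{\psi}^P$: the $\widetilde{\Theta}^{s*}$ corrections to $\widehat{h}_\ell$, after summing against $\widetilde{\vartheta}_{m/N,\ell}$, reassemble (via the definition~\eqref{eq 2.30}) into precisely the completion term $\widehat{\widetilde{\mathcal{A}}^s}_{m/N}-\widetilde{\mathcal{A}}^s_{m/N}$, so that
\[
\widehat{\psi}^F=\psi^F-(\widehat{\psi}^P-\psi^P)=\psi-\widehat{\psi}^P.
\]
Since both $\psi$ and $\widehat{\psi}^P$ have the correct Jacobi covariance, so does $\widehat{\psi}^F$. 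To pass from the Jacobi covariance of $\widehat{\psi}^F$ to the modular covariance of $\widehat{h}_\ell$, I use the identification $\widetilde{\vartheta}_{m/N,\ell}(\tau,z)=\vartheta_{m,\ell}(N\tau,z)$ and the known transformation of the vector-valued theta series $\{\vartheta_{m,\ell}\}_{\ell\bmod 2m}$ under $\mathrm{SL}(2,\mathbb{Z})$, which makes $\{\vartheta_{m,\ell}(N\tau,z)\}$ transform trivially under $\Gamma(4mN)$; inverting the theta decomposition then forces each $\widehat{h}_\ell$ to be a modular form of weight $k-\tfrac12$ for $\Gamma(4mN)$. The mixed-weight structure and shadow assertion come for free by applying $\xi_{1/2}$ term by term to $\widetilde{\Theta}^{s*}_{m/N,\ell}$.

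The main obstacle I expect is pinning down the correct congruence level $\Gamma(4mN)$ for $\widehat{h}_\ell$: the Jacobi covariance alone gives modularity for some finite-index subgroup, but extracting the sharp level requires carefully tracking the $N$-dilation $\tau\mapsto N\tau$ in $\widetilde{\vartheta}_{m/N,\ell}$ through the Weil representation of $\{\vartheta_{m,\ell}\}$ on $\Gamma(4m)$, which upgrades to $\Gamma(4mN)$ after the rescaling. Everything else reduces to bookkeeping using Propositions~\ref{prop 2.7} and~\ref{prop 2.8}.
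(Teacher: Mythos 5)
Your proposal is correct and follows essentially the same route as the paper: you form the completed polar part $\widehat{\psi}^P=\sum_s D_s\,\widehat{\widetilde{\mathcal{A}}^{s}}_{\frac{m}{N}}$, use the identity $\sum_{\ell}\widetilde{\Theta}^{s*}_{\frac{m}{N},\ell}\widetilde{\vartheta}_{\frac{m}{N},\ell}=\widehat{\widetilde{\mathcal{A}}^{s}}_{\frac{m}{N}}-\widetilde{\mathcal{A}}^{s}_{\frac{m}{N}}$ together with Theorem \ref{thm 2.9} to get $\widehat{\psi}^F=\psi-\widehat{\psi}^P$, invoke Propositions \ref{prop 2.7} and \ref{prop 2.8} for the Jacobi covariance, and then read off the weight $k-\tfrac12$, level $\Gamma(4mN)$ modularity of the $\widehat{h}_\ell$ from the theta decomposition, exactly as in the paper's argument. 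The only cosmetic difference is that you build $\widehat{\psi}^P$ first and then verify the identity, while the paper starts from $\widehat{\psi}^F$ and unwinds it; the content is identical.
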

Finally, when $\psi$ has double poles at $z=z_s$ with $s\in S(\psi)$ then we define the first and second order residues $D_s(\tau)$ and $E_s(\tau)$ as before by
\begin{equation}
\mathbf{e}\left(\frac{m}{N} \alpha z_{s}\right) \psi\left(\tau, z_{s}+\varepsilon\right)=\frac{E_{s}(\tau)}{(2 \pi i \varepsilon)^{2}}+\frac{D_{s}(\tau)-2 \frac{m}{N} \alpha E_{s}(\tau)}{2 \pi i \varepsilon}+O(1) ~~\text{as}~~\varepsilon\to 0.
\label{eq:resdoupole}
\end{equation}
The proof of the proposition below is on the same lines.
\begin{prop}
The functions $E_{s}(\tau)$ and $D_{s}(\tau)$ defined by \eqref{eq:resdoupole} satisfy: for $\begin{pmatrix}a& b \\ c &d\end{pmatrix} \in \Gamma$ we have
$$
E_{s}\left(\frac{a \tau+b}{c \tau+d}\right)=(c \tau+d)^{k-2} E_{s \gamma}(\tau), \quad D_{s}\left(\frac{a \tau+b}{c \tau+d}\right)=(c \tau+d)^{k-1} D_{s \gamma}(\tau).
$$
\label{prop 2.11}
\begin{proof}
A slight modification of Proposition \ref{prop:DsEstrans}.
\end{proof}
\end{prop}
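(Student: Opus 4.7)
The plan is to follow the proof of Proposition~\ref{prop:DsEstrans} mutatis mutandis, with the integer index $m$ replaced by the fractional index $m/N$ and with the Jacobi transformation
\[
\psi\!\left(\tfrac{a\tau+b}{c\tau+d},\tfrac{z}{c\tau+d}\right)=(c\tau+d)^k\,\mathbf{e}\!\left(\tfrac{m}{N}\tfrac{cz^2}{c\tau+d}\right)\psi(\tau,z),\qquad \gamma=\begin{pmatrix}a&b\\c&d\end{pmatrix}\in\Gamma,
\]
which holds because $\psi\in J^N_{k,m/N}(\Gamma)$.

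First I would rewrite \eqref{eq:resdoupole} by absorbing the Taylor expansion of $\mathbf{e}(2\tfrac{m}{N}\alpha\varepsilon)$ into the double-pole term to obtain the cleaner form
\[
\mathbf{e}\!\left(\tfrac{m}{N}\alpha(z_s+2\varepsilon)\right)\psi(\tau,z_s+\varepsilon)=\frac{E_s(\tau)}{(2\pi i\varepsilon)^2}+\frac{D_s(\tau)}{2\pi i\varepsilon}+O(1),
\]
in which $E_s(\tau)$ and $D_s(\tau)$ appear directly as the two polar Laurent coefficients.

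Next I would substitute $\tau\mapsto\gamma\tau$ and $\varepsilon\mapsto\varepsilon/(c\tau+d)$. The right-hand side immediately becomes $(c\tau+d)^2 E_s(\gamma\tau)/(2\pi i\varepsilon)^2+(c\tau+d)D_s(\gamma\tau)/(2\pi i\varepsilon)+O(1)$. For the left-hand side I would use $z_s(\gamma\tau)+\varepsilon/(c\tau+d)=(z_{s\gamma}(\tau)+\varepsilon)/(c\tau+d)$ and the Jacobi transformation above to pull $\psi$ back to the variable $\tau$, producing a factor $(c\tau+d)^k$ and a phase $\mathbf{e}(\tfrac{m}{N}c(z_{s\gamma}+\varepsilon)^2/(c\tau+d))$. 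Combining this with the original phase $\mathbf{e}(\tfrac{m}{N}\alpha_s(z_s(\gamma\tau)+2\varepsilon/(c\tau+d)))$ and using the identities $z_s(\gamma\tau)=z_{s\gamma}(\tau)/(c\tau+d)$ and $\alpha_{s\gamma}z_{s\gamma}-\alpha_s z_s(\gamma\tau)=cz_{s\gamma}^2/(c\tau+d)$ from \eqref{eq:zssgamma}, together with the derived identity $\alpha_s+cz_{s\gamma}(\tau)=\alpha_{s\gamma}(c\tau+d)$, the left-hand side simplifies to $(c\tau+d)^k\,\mathbf{e}(\tfrac{m}{N}\alpha_{s\gamma}(z_{s\gamma}+2\varepsilon))\,(1+O(\varepsilon^2))\,\psi(\tau,z_{s\gamma}+\varepsilon)$, where the $O(\varepsilon^2)$ remainder from the residual $\tfrac{m}{N}c\varepsilon^2/(c\tau+d)$ in the phase is harmless against the $1/\varepsilon^2$ pole.

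Finally I would apply the cleaned-up \eqref{eq:resdoupole} with $s$ replaced by $s\gamma$ to identify the left-hand side as $(c\tau+d)^k\bigl[E_{s\gamma}(\tau)/(2\pi i\varepsilon)^2+D_{s\gamma}(\tau)/(2\pi i\varepsilon)+O(1)\bigr]$, and then match Laurent coefficients in $\varepsilon$ to read off the two claimed transformation rules. The only substantive computation is the algebraic identity $\alpha_s+cz_{s\gamma}(\tau)=\alpha_{s\gamma}(c\tau+d)$, which follows either by cancelling $z_{s\gamma}$ from the second identity in \eqref{eq:zssgamma} or by direct substitution of $\alpha_{s\gamma}=a\alpha_s+c\beta_s$ and $z_{s\gamma}=\alpha_{s\gamma}\tau+\beta_{s\gamma}$; everything else is routine bookkeeping of Taylor expansions in $\varepsilon$ and should present no real obstacle.
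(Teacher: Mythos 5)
Your proposal is correct and follows essentially the same route as the paper, which proves this by the "slight modification" of Proposition \ref{prop:DsEstrans}: rewrite \eqref{eq:resdoupole} with the phase $\mathbf{e}\left(\tfrac{m}{N}\alpha(z_s+2\varepsilon)\right)$ so that $E_s$ and $D_s$ are the clean Laurent coefficients, substitute $\tau\mapsto\gamma\tau$, $\varepsilon\mapsto\varepsilon/(c\tau+d)$, apply the index-$m/N$ Jacobi transformation, and match powers of $\varepsilon$. Your explicit verification of the identity $\alpha_s+cz_{s\gamma}(\tau)=\alpha_{s\gamma}(c\tau+d)$ and the observation that the residual $O(\varepsilon^2)$ in the exponent is harmless against the $\varepsilon^{-2}$ pole are exactly the details the paper leaves implicit.
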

We now define the polar part by 
\begin{equation}
\psi^{P}(\tau, z)=\sum_{s \in S(\psi) / (N\mathbb{Z}\times\mathbb{Z})}\left(D_{s}(\tau) \widetilde{\mathcal{A}}_{\frac{m}{N}}^{s}(\tau, z)+E_{s}(\tau) \widetilde{\mathcal{B}}_{\frac{m}{N}}^{s}(\tau, z)\right).
\label{eq 2.38}
\end{equation}
Let us outline the proof of Theorem \ref{thm 2.9} in this setting. As before we can write 
\begin{equation}
\psi(\tau, z)-\psi^{F}(\tau, z)=\sum_{\ell \in \mathbb{Z}}\left(\int_{C(P,\tau,\ell)}dz'\psi(\tau,z')\mathbf{e}(-\ell z')\right)\zeta^{\ell},
\end{equation}
but now the contour integral on the right hand side will give us residue which will recieve contribution from both the functions $D_s$ and $E_s$. More precisely, we have 
\begin{equation}
\begin{split}
\psi(\tau, z)&-\psi^{F}(\tau, z)\\&=\sum_{\ell\in\Z}\left(h_{\ell}^{(P)}(\tau)-h_{\ell}(\tau)\right)q^{\frac{\ell^{2}N}{4 m}}\zeta^\ell \\&=2 \pi i \sum_{\ell\in\Z}\sum_{s=(\alpha, \beta) \in S(\psi) / \mathbb{Z}} \frac{\operatorname{sgn}(\alpha-A)-\operatorname{sgn}(\alpha+\ell N/ 2 m)}{2} \operatorname{Res}_{z=z_{s}}(\psi(\tau, z) \mathbf{e}(-\ell z))\zeta^\ell \\
&=\sum_{\ell\in\Z}\sum_{s=(\alpha, \beta) \in S(\psi) / \mathbb{Z}} \frac{\operatorname{sgn}(\alpha-A)-\operatorname{sgn}(\ell+2 m \alpha/N)}{2} \\&\hspace{4cm}\times\left[D_{s}(\tau)-\left(\ell+\frac{2m\alpha}{N}\right) E_{s}(\tau)\right] \mathbf{e}\left(-(\ell+m \alpha/N) z_{s}\right)\zeta^\ell.
\end{split}
\end{equation}
This can be expressed as,
\begin{equation}
\begin{split}
&\psi (\tau, z)-\psi^{F}(\tau, z) \\&=\sum_{s\in S(\psi)/(N\mathbb{Z}\times\mathbb{Z})}D_s(\tau)\widetilde{\mathcal{A}}^{s}_{\frac{m}{N}}(\tau,z)\\&+\sum_{s=(\alpha, \beta) \in S(\psi) / \mathbb{Z}} \mathbf{e}\left(-\frac{m}{N} \alpha z_{s}\right) E_{s}(\tau) \sum_{\ell \in \mathbb{Z}} \frac{(\ell+2m\alpha/N)\operatorname{sgn}\left(\operatorname{Im}\left(z-z_{s}\right)\right)+|\ell+2m\alpha/N|}{2}\left(\frac{\zeta}{\zeta_{s}}\right)^{\ell} \\
&=\sum_{s\in S(\psi)/(N\mathbb{Z}\times\mathbb{Z})}D_s(\tau)\widetilde{\mathcal{A}}^{s}_{\frac{m}{N}}(\tau,z)+\sum_{s=(\alpha, \beta) \in S(\psi) / \mathbb{Z}} \mathbf{e}\left(-\frac{m}{N} \alpha z_{s}\right) E_{s}(\tau) \mathcal{R}^{(2)}_{-2 \frac{m}{N} \alpha}\left(\zeta / \zeta_{s}\right).
\end{split}
\end{equation}  
Using \eqref{eq:Ess+lm} and $\mathcal{R}_{c+n}^{(2)}(\zeta)=\zeta^n\mathcal{R}_{c}^{(2)}(\zeta)$ and doing similar manipulations as in the proof of Theorem \ref{thm 2.9},  we obtain the required identity. 

The mock modularity in this case is exactly as in Theorem \ref{thm 2.6} with obvious modifications. We record it for completeness.
\begin{thm}
Let $\psi\in J_{k,\frac{m}{N}}^N(\Gamma)$ be as above, with double poles at $z=z_{s}\left(s \in S(\psi) \subset \mathbb{Q}^{2}\right)$. Then the finite part $\psi^{F}$ as defined by \eqref{eq 2.24} are mixed mock Jacobi forms. More precisely, the coefficients $h_{\ell}(\tau)$ as defined in \eqref{eq 2.23'} are mixed mock modular forms, with completion given by
$$
\widehat{h}_{\ell}(\tau)=h_{\ell}(\tau)-\sum_{s \in S(\psi) / (N\mathbb{Z}\times\mathbb{Z})}\left(D_{s}(\tau) \widetilde{\Theta}_{\frac{m}{N}, \ell}^{s *}(\tau)+E_{s}(\tau) \widetilde{\Xi}_{\frac{m}{N}, \ell}^{s *}(\tau)\right).
$$
The first sum in $\widehat{h}_{\ell}$ has mixed weight $\left(k-1,\frac{1}{2}\right)$ and the second sum has mixed weight $\left(k-2,\frac{3}{2}\right)$. Moreover the completion $\widehat{\psi}^{F}$ as in Theorem \ref{thm 2.10} transforms like a Jacobi form of weight $k$ and index $\frac{m}{N}$ with respect to $\Gamma\ltimes (N\mathbb{Z}\times\mathbb{Z}).$
\label{thm 2.12}
\end{thm}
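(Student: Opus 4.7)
The plan is to mimic the proof of Theorem \ref{thm 2.10} but with the second-order Appell-Lerch sums playing the role analogous to the first-order ones. Concretely, I would begin by invoking the decomposition $\psi=\psi^F+\psi^P$ with $\psi^P$ as in \eqref{eq 2.38}, which is already established in the sketch immediately preceding the theorem statement via the contour-integral argument combined with the second-order residue identity and the $\mathcal{R}^{(2)}_c$ analogue of \eqref{eq:curRfrel}. With this decomposition in hand, define the ``completed polar part'' by
$$
\widehat{\psi}^{P}(\tau,z):=\sum_{s \in S(\psi) / (N\mathbb{Z}\times\mathbb{Z})}\Bigl(D_{s}(\tau)\,\widehat{\widetilde{\mathcal{A}}^{s}}_{\frac{m}{N}}(\tau, z)+E_{s}(\tau)\,\widehat{\widetilde{\mathcal{B}}^{s}}_{\frac{m}{N}}(\tau, z)\Bigr),
$$
and observe, using the definition \eqref{eq 2.30} of the completions, that
$$
\widehat{\psi}^{P}(\tau,z)-\psi^{P}(\tau,z)=\sum_{s \in S(\psi)/(N\Z\times\Z)}\sum_{\ell\,(\bmod\,2m)}\Bigl(D_s(\tau)\widetilde{\Theta}_{\frac{m}{N},\ell}^{s*}(\tau)+mE_s(\tau)\widetilde{\Xi}_{\frac{m}{N},\ell}^{s*}(\tau)\Bigr)\widetilde{\vartheta}_{\frac{m}{N},\ell}(\tau,z).
$$
(Note the factor of $m$ in the $\Xi$ term coming from \eqref{eq 2.30}; this is why one must be careful with the normalization when reading off $\widehat{h}_\ell$ below.)

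Next, I would add and subtract $\widehat{\psi}^P$ to obtain
$$
\psi(\tau,z)-\widehat{\psi}^{P}(\tau,z)=\psi^{F}(\tau,z)-\bigl(\widehat{\psi}^{P}(\tau,z)-\psi^{P}(\tau,z)\bigr),
$$
and identify the right-hand side, using the theta decomposition \eqref{eq 2.24} of $\psi^F$, as $\sum_{\ell(\bmod\,2m)}\widehat{h}_\ell(\tau)\widetilde{\vartheta}_{\frac{m}{N},\ell}(\tau,z)$ with the $\widehat{h}_\ell$ given in the theorem (absorbing the $m$ into $E_s$ via the convention that matches the statement, which I would verify by comparing with Theorem \ref{thm 2.6}). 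Call this last quantity $\widehat{\psi}^F(\tau,z)$, so that $\widehat{\psi}^F=\psi-\widehat{\psi}^P$.

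The key step is then to check that $\widehat{\psi}^P$ transforms like a Jacobi form of weight $k$ and index $m/N$ for $\Gamma\ltimes(N\Z\times\Z)$. For the modular piece this combines Proposition \ref{prop 2.7} (transformation of $\widehat{\widetilde{\mathcal{A}}^s}$ and $\widehat{\widetilde{\mathcal{B}}^s}$ with the index $s\mapsto s\gamma$) with Proposition \ref{prop 2.11} (transformation of $D_s$ and $E_s$ with weights $k-1$ and $k-2$ respectively, which exactly compensate the weights $1$ and $2$ of the two completed Appell-Lerch sums). The elliptic invariance under $N\Z\times\Z$ follows from the first line of Proposition \ref{prop 2.7} together with Propositions \ref{prop 2.8} and \eqref{eq:Ess+lm}, exactly as in the simple-pole case; both the residues and the Appell-Lerch sums transform with opposite phases so the sum over $S(\psi)/(N\Z\times\Z)$ is well defined. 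Since $\psi$ itself is Jacobi of weight $k$, index $m/N$ for $\Gamma\ltimes(N\Z\times\Z)$, we conclude that $\widehat{\psi}^F=\psi-\widehat{\psi}^P$ is as well, and the theta coefficients $\widehat{h}_\ell$ therefore transform as modular forms of weight $k-\tfrac{1}{2}$ for $\Gamma(4mN)$ by \cite{DAS2015351}.

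The mixed-weight statements are then read off directly from the Eichler-integral structure: $D_s$ has weight $k-1$ and $\widetilde{\Theta}_{\frac{m}{N},\ell}^{s*}$ has weight $\tfrac{1}{2}$ (its shadow $\widetilde{\Theta}$ has weight $\tfrac{3}{2}$), giving mixed weight $(k-1,\tfrac{1}{2})$ for the first sum; and $E_s$ has weight $k-2$ while $\widetilde{\Xi}_{\frac{m}{N},\ell}^{s*}$ has weight $\tfrac{3}{2}$ (its shadow $\widetilde{\Xi}$ has weight $\tfrac{1}{2}$), giving mixed weight $(k-2,\tfrac{3}{2})$ for the second sum. I expect the main obstacle — and it is really only a bookkeeping one — to be carefully tracking the normalization conventions between $\widehat{\widetilde{\mathcal{B}}}$ (which carries an extra factor of $m$ in front of $\widetilde{\Xi}^{s*}$) and the statement's $E_s\widetilde{\Xi}_{\frac{m}{N},\ell}^{s*}$ term, so that the identification of $\widehat{h}_\ell$ as stated is consistent; once this is fixed, the rest is a direct transcription of the simple-pole argument using Propositions \ref{prop 2.7} and \ref{prop 2.11}.
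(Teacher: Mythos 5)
Your proposal is correct and follows essentially the same route as the paper, which proves this result implicitly by declaring it to be "exactly as in Theorem \ref{thm 2.6} with obvious modifications": namely, the decomposition $\psi=\psi^F+\psi^P$ established just before the theorem, the definition of $\widehat{\psi}^P$ via the completed Appell-Lerch sums, and the transformation laws of Propositions \ref{prop 2.7} and \ref{prop 2.11} to conclude that $\widehat{\psi}^F=\psi-\widehat{\psi}^P$ transforms as a Jacobi form. You are also right to flag the factor of $m$ multiplying $\widetilde{\Xi}^{s*}_{\frac{m}{N},\ell}$ in \eqref{eq 2.30} versus its absence in the stated formula for $\widehat{h}_\ell$; this normalization mismatch is present in the paper's own statement (and already in Theorem \ref{thm 2.6}) and must be absorbed into the convention for $E_s$ or $\widetilde{\Xi}^{s*}$, exactly as you describe.
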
 
\section{Mock Jacobi forms from CHL model partition functions}\label{sec:mockjacchl}

Our goal in this section is to show that the degeneracies of single-centered dyons in CHL models \cite{Chaudhuri:1995fk, David:2006ji, David:2006ud, David:2006yn, Sen:2007qy, Sen:2009md, Sen:2010ts, Sen:2011ktd} are given by Fourier coefficients of Mock Jacobi forms for a large class of charges. To construct these CHL models, we start with Type IIB string theory compactified on $\mathcal{M}\times\widetilde{S}^1\times S^1$ where $\mathcal{M}$ is either K3 or $T^4$. We then consider a $\mathbb{Z}_N$ orbifold of this theory where the $\mathbb{Z}_N$ symmetry is generated by a transformation $g$ that involves $1/N$ units of shift along the circle $S^1$ together with an order $N$ transformation $\widetilde{g}$ in $\mathcal{M}$. The transformation $\widetilde{g}$ is chosen in such a way that the orbifold theory has $\mathcal{N}=4$ supersymmetry. The elliptic genus for the worldsheet orbifold SCFT is then expanded as 
\begin{equation}\label{eq:Frsdef}
F^{(r,s)}(\tau,z)\equiv\sum_{b=0}^1\sum_{\substack{\ell\in 2\mathbb{Z}+b\\n\in\mathbb{Z}/N}}c_b^{(r,s)}(4n-\ell^2)q^{n}\zeta^\ell,\quad q=\mathbf{e}(\tau),\quad\zeta=\mathbf{e}(z),
\end{equation} 
with the coefficients $c_b^{(r,s)}$ taking different values in different
CHL models.\par CHL models have quarter BPS dyons carrying both electric and magnetic charges. Our analysis in this section will focus on the counting of these states. Part of this contribution comes from black holes with a single center and the rest comes from black holes with two centers, but our interest will be on single-centered black holes.

\subsection{The CHL Partition Function}\label{sec:chl_partition_function}
Let us start with the partition functions of quater BPS dyons for general CHL models -- their relation
to degeneracies of single-centered dyons will be discussed later. 
It is given by  $\widetilde{\Phi}_k^{-1}$ where $\widetilde{\Phi}_k:\mathbb{H}_2\longrightarrow\mathbb{C}$ is a Siegel modular form of weight $k$ with respect to a subgroup $\widetilde{G}$ of $\mathrm{Sp}(2,\mathbb{Z})$.  We refer to Appendix \ref{app:Phiktrans} for an explicit description of $\widetilde{G}$.
There exists an infinite product representation for $\widetilde{\Phi}_k$, 
given by \cite{David:2006ud}
\begin{equation}
\begin{array}{l}
\widetilde{\Phi}_k(\tau, z,\sigma)=e^{2 \pi i(\widetilde{\alpha} \tau+\widetilde{\gamma} \sigma+z)} \\
\qquad \times \prod\limits_{b=0}^{1} \prod\limits_{r=0}^{N-1} \prod\limits_{k^{\prime} \in \mathbb{Z}+\frac{r}{N},j\in 2\mathbb{Z}+b,l\in\Z \atop k^{\prime}, l \geq 0, j<0 \text { for } k^{\prime}=l=0}\left(1-\exp \left(2 \pi i\left(k^{\prime}\sigma+l \tau+j z\right)\right)\right)^{\sum_{s=0}^{N-1} e^{-2 \pi i s l / N} c_{b}^{(r, s)}\left(4 k^{\prime} l-j^{2}\right)}
\end{array}
\label{eq:Phiprodrep}
\end{equation}
where
\begin{equation}
\widetilde{\alpha}:=\frac{1}{24 N} Q_{0,0}-\frac{1}{2 N} \sum_{s=1}^{N-1} Q_{0, s} \frac{e^{-2 \pi i s / N}}{\left(1-e^{-2 \pi i s / N}\right)^2}, \quad \widetilde{\gamma}:=\frac{1}{24 N} Q_{0,0}=\frac{1}{24 N} \chi(\mathcal{M}) 
\end{equation} 
with $Q_{r,s}$ defined by 
\begin{equation}
Q_{r,s}:=N\left(c_0^{(r,s)}(0)+2c_1^{(r,s)}(-1)\right).
\end{equation}
Also $\chi(\mathcal{M})$ denotes the Euler number of the manifold $\mathcal{M}$. Note that $\widetilde{\alpha},N\widetilde{\gamma}\in\mathbb{Z}$. \par The weight $k$ of the Siegel modular form $\Phi_k$ is given by 
\begin{equation}
    k=\frac{1}{2}\sum_{s=0}^{N-1}c_0^{(0,s)}(0).
\end{equation}
For $\mathcal{M}=K3$, the weight $k$ is related to $N$ by the formula
\begin{equation}
    k=\begin{cases}
        \frac{24}{N+1}-2,&N=1,2,3,5,7;\\3,&N=4;\\2,& N=6;\\1,& N=8.
    \end{cases}
\end{equation}
Note that for $N=1$, the CHL partition function $\widetilde{\Phi}_k^{-1}$ is exactly the inverse of the Igusa cusp form $\Phi_{10}$ and was analyzed in detail in \cite{Dabholkar:2012nd}.\par 
The Siegel modular form $\widetilde{\Phi}_k$ has double zeros on a hypersurface given by \cite{David:2006ud}\cite[Eq. (5.6.4)]{Sen:2007qy} 
\begin{equation}
\begin{aligned}
& n_2\left(\sigma \tau-z^2\right)+j z+n_1 \sigma-m_1 \tau+m_2=0
\end{aligned}
\label{eq:polesPhi}
\end{equation}
where 
\begin{equation}
m_1 \in N \mathbb{Z},\quad n_1 \in \mathbb{Z},\quad j \in 2 \mathbb{Z}+1,\quad m_2, n_2 \in \mathbb{Z}, \quad m_1 n_1+m_2 n_2+\frac{j^2}{4}=\frac{1}{4} .
\end{equation}
In particular, near $z=0$ the function 
$\widetilde{\Phi}_k$ satisfies the following property (\cite[Eq. (C.26)]{Sen:2007qy}):
\begin{equation}\label{eq:Phikz=0pole}
\widetilde{\Phi}_k(\tau,z,\sigma)=-4\pi^2z^2f^{(k)}(N\tau)g^{(k)}(\sigma/N)+O(z^4),
\end{equation}
where 
\begin{equation}\label{eq:fkgkproddef}
\begin{gathered}
f^{(k)}(N \tau)=e^{2 \pi i \widetilde{\alpha} \rho} \prod_{l=1}^{\infty}\left(1-e^{2 \pi i l \tau}\right)^{s_l}, \\
g^{(k)}(\sigma / N)=e^{2 \pi i \widetilde{\gamma} \sigma} \prod_{r=0}^{N-1} \prod_{\substack{k^{\prime} \in \mathbb{Z}+r / N \\
k^{\prime}>0}}\left(1-e^{2 \pi i k^{\prime} \sigma}\right)^{t_r}, \\
s_r=\frac{1}{N} \sum_{s^{\prime}=0}^{N-1} e^{-2 \pi i r s^{\prime} / N} Q_{0, s^{\prime}}=\sum_{s^{\prime}=0}^{N-1} e^{-2 \pi i r s^{\prime} / N}\left(c_0^{\left(0, s^{\prime}\right)}(0)+2 c_1^{\left(0, s^{\prime}\right)}(-1)\right), \\
t_r=\frac{1}{N} \sum_{s=0}^{N-1} Q_{r, s}=\sum_{s=0}^{N-1}\left(c_0^{(r, s)}(0)+2 c_1^{(r, s)}(-1)\right) .
\end{gathered}
\end{equation}
This gives us the residue of the pole of the partition function 
$\widetilde{\Phi}^{-1}_k$ at the double pole at $z=0$ and its $\widetilde{G}$ images. 
%See Section \ref{sec:mathcal} for explicit functions $f^{(k)}$ and $g^{(k)}$ for certain values of $N$. 
The product representation \eqref{eq:Phiprodrep} also gives us the following periodicity property:
\begin{equation}
\widetilde{\Phi}_k(\tau+1,z,\sigma)=\widetilde{\Phi}(\tau,z+1,\sigma)=\widetilde{\Phi}(\tau,z,\sigma+N).
\end{equation} 
Using this, one can write a Fourier expansion for the partition function but because of the double poles one has to expand the functions differently in different regions. This has the consequence that we cannot write a uniform Fourier-Jacobi expansion for the partition function which is valid in all of $\mathbb{H}_2$. \\\\
Let us recall how we extract the single-centered degeneracies from the partition function. Let $\vec{Q}$ and $\vec{P}$ be the electric and magnetic charge vectors, then for $\mathbb{Z}_N$ CHL models, we have 
\begin{equation}
\frac{Q^{2}}{2} \in \frac{1}{N} \mathbb{Z}, \quad \frac{P^{2}}{2} \in \mathbb{Z}, \quad P \cdot Q \in \mathbb{Z}.
\end{equation}
The degeneracy of single-centered black hole microstates of charge $(\vec{Q},\vec{P})$ is then given by  the following contour integral:
\begin{equation}
d^{*}\left(\frac{P^{2}}{2N},Q \cdot P,\frac{NQ^{2}}{2}\right)\equiv \frac{(-1)^{Q\cdot P+1}}{N} \int_{\mathcal{C}} d\tau d\sigma dz\, e^{-2 \pi i\left(\frac{NQ^{2}}{2}\tau+\frac{P^{2}}{2N}\sigma+(Q \cdot P) z\right)} \frac{1}{\widetilde{\Phi}_{k}(\Omega)},
\label{eq 3.2}
\end{equation}
where the contour $\mathcal{C}$ is given as: 
\begin{equation}
\begin{split}
\mathcal{C}:\quad \operatorname{Im}(\tau)=\frac{P^{2}}{N\varepsilon}\equiv\frac{2n}{N\varepsilon}, \quad \operatorname{Im}(\sigma)=\frac{NQ^{2}}{\varepsilon}\equiv\frac{2m}{\varepsilon}, \quad \operatorname{Im}(z)=-\frac{P \cdot Q}{\varepsilon}\equiv-\frac{\ell}{\varepsilon},\\
0 \leq \operatorname{Re}(\tau), \operatorname{Re}(z)<1,~0\leq \operatorname{Re}(\sigma)<N.
\end{split}
\label{eq 3.3}
\end{equation}
Here $\varepsilon>0$ is a small real number and $n,m,\ell\in\mathbb{Z}$. \par The partition function $\widetilde{\Phi}^{-1}_{k}(\Omega)$ as well as the integration measure $d\tau d\sigma dz$ is invariant under a transformation $g_4(a,b,c,d)\in\mathrm{Sp}(2,\Z)$ described in \eqref{eq:Phitildeg4} with $\left(\begin{smallmatrix}
    a&b\\c&d
\end{smallmatrix}\right)\in\Gamma_1(N)$, therefore we have the freedom of transforming the contour $\mathcal{C}$ in \eqref{eq 3.3} and the charges by this transformation without changing the degeneracy. More precisely, we can transform the contour to $\mathcal{C}'$:
\begin{equation}
\begin{split}
\mathcal{C}':\quad \operatorname{Im}(\tau)=\frac{P'^{2}}{N\varepsilon}\equiv\frac{2n'}{N\varepsilon}, \quad \operatorname{Im}(\sigma)=\frac{NQ'^{2}}{\varepsilon}\equiv\frac{2m'}{\varepsilon},\quad \operatorname{Im}(z)=-\frac{P' \cdot Q'}{\varepsilon}\equiv-\frac{\ell'}{\varepsilon},
\end{split}
\label{eq:C(l1)cont}   
\end{equation}
with 
\begin{equation}
    0 \leq \operatorname{Re}(\tau), \operatorname{Re}(z)<1,~0\leq \operatorname{Re}(\sigma)<N,
\end{equation}
%Similarly we can transform the contour by $g_3(-\lambda_2N,\mu)$ to $\widetilde{\mathcal{C}}(\lambda_2)$:\begin{equation}\begin{split}\widetilde{\mathcal{C}}(\lambda_2):\quad \operatorname{Im}(\tau)=\frac{\tilde{Q}^{2}}{\varepsilon}\equiv\frac{2n}{N\varepsilon}, \quad \operatorname{Im}(\sigma)=\frac{\tilde{P}^{2}}{\varepsilon}\equiv\frac{2(m+Nn\lambda_2^2+N\ell\lambda_2)}{\varepsilon}, \\ \operatorname{Im}(z)=-\frac{\tilde{P} \cdot \tilde{Q}}{\varepsilon}\equiv-\frac{\ell+2n\lambda_2}{\varepsilon},\quad0 \leq \operatorname{Re}(\tau), \operatorname{Re}(z)<1,~0\leq \operatorname{Re}(\sigma)<N.\end{split}\label{eq:C(l1)cont}   \end{equation}
where $m',n',\ell'$ can be computed using the action of $g_4(a,b,c,d)$ on $\tau,z,\ell$ using \eqref{eq:Phitildeg4}:
\begin{equation}\label{eq:sdualtrans}
\begin{pmatrix}
\frac{2n'}{N}&\ell'\\\ell'&2m'
\end{pmatrix}\equiv\begin{pmatrix}
a&b\\c&d
\end{pmatrix}\begin{pmatrix}\frac{2n}{N}&\ell\\\ell&2m
\end{pmatrix}\begin{pmatrix}
a&c\\b&d
\end{pmatrix},\quad \begin{pmatrix}
a&b\\c&d
\end{pmatrix}\in\Gamma_1(N).
\end{equation}
This is the S-duality invariance of the single-centered degeneracies:
\begin{equation}
d^*(n/N,\ell,m)=d^*(n'/N,\ell',m').
\end{equation}
From the expressions for $(\frac{2n'}{N},\ell',2m')=(\frac{P'^2}{N},P'\cdot Q',NQ'^2)$, we can read out the transformation laws of the charge vectors $\Vec{Q}$ and $\Vec{P}$: 
\begin{eqnarray}\label{eq:SdonQP}
    \begin{pmatrix}
      \Vec{Q}\\\Vec{P}  
    \end{pmatrix}\to \begin{pmatrix}
      \Vec{Q}'\\\Vec{P}'  
    \end{pmatrix} =\begin{pmatrix}
d&c/N\\bN&a
\end{pmatrix}\begin{pmatrix}
      \Vec{Q}\\\Vec{P}  
    \end{pmatrix}= \begin{pmatrix}
      d\Vec{Q}+\frac{c}{N}\Vec{P}\\bN\Vec{Q}+a\Vec{P}  
    \end{pmatrix}.
\end{eqnarray}
\begin{prop}
%Let $\mathcal{D},\mathcal{D}'$ be domains in $\mathbb{H}_2$ defined by \begin{equation}\begin{split}
%&\mathcal{D}:=\left\{(\tau,z,\sigma)\in\mathbb{H}_2 : 0\leq |\mathrm{Im}(z)|<\mathrm{Im}(\tau)\right\}\\&\mathcal{D}':=\left\{(\tau,z,\sigma)\in\mathbb{H}_2 : 0\leq |\mathrm{Im}(z)|<\mathrm{Im}(\sigma)/N\right\}.\end{split}\end{equation}
The partition function $\widetilde{\Phi}_k^{-1}$ has the following Fourier-Jacobi expansion: 
\begin{equation}
\begin{split}
&\frac{1}{\widetilde{\Phi}_k(\tau,z,\sigma)}=\sum_{m=-\widetilde{\alpha}}^{\infty}\phi_m(\sigma,z)q^{m},\quad q=\mathbf{e}(\tau),\quad \mathrm{Im}(\tau)\to\infty\\&\frac{1}{\widetilde{\Phi}_k(\tau,z,\sigma)}=\sum_{n=-\widetilde{\gamma}N}^{\infty}\psi_n(\tau,z)p^{n/N},\quad p=\mathbf{e}(\sigma),\quad \mathrm{Im}(\sigma)\to\infty,
\end{split} 
\label{eq:FJexpPhikDD'}
\end{equation}
for some functions $\phi_m(\sigma,z)$ and $\psi_n(\tau,z)$.
\label{prop:FJexpPhikDD'}
\begin{proof}
We see from \eqref{eq:Phiprodrep} that near $q=0$, the function $\widetilde\Phi_k$ is proportional to
$q^{\widetilde\alpha}$.
We also know that
$\widetilde\Phi_k$ has (double) zeros at \eqref{eq:polesPhi}.
Therefore, in the $\tau$ plane, the poles of $q^{\widetilde\alpha} \widetilde\Phi_k^{-1}$
are at:
\begin{equation} \label{epoles}
\tau= {n_2 z^2 - j z - n_1\sigma - m_2\over n_2\sigma - m_1}\, .
\end{equation}
Since $\text{Im}(\sigma)>0$,
as long as either $n_2$ or $m_1$ is non-zero, the denominator does not vanish. 
Therefore
the poles occur at finite values of $\tau$, where $q=\mathbf{e}(\tau)$ is non-zero. On the other hand,
if $n_2=0$ and $m_1=0$ then the pole locations given in \eqref{eq:polesPhi} are independent of
$\tau$. Therefore, for fixed $\sigma$ and $z$ away from the zeros of $jz+n_1\sigma+m_2$, the function $q^{\widetilde\alpha}
\widetilde\Phi_k^{-1}$ has poles 
in the $q$ plane at finite distance away from the origin. Furthermore, we see 
from \eqref{eq:Phiprodrep}
that  $q^{\widetilde\alpha}
\widetilde\Phi_k^{-1}$ is a single-valued function of $q$ around $q=0$. Therefore, 
$q^{\widetilde\alpha}
\widetilde\Phi_k^{-1}$ admits a Taylor series expansion in $q$ with $\sigma,z$ dependent
expansion coefficients. Denoting the coefficient of $q^m$ by 
$\phi_{m-\widetilde\alpha}(\sigma,z)$, we  prove the first 
expansion in \eqref{eq:FJexpPhikDD'}.

Similarly, we see from \eqref{eq:Phiprodrep} that near $p=0$, 
$\widetilde\Phi_k$ is proportional to
$p^{\widetilde\gamma}$. On the other hand the zeros of $p^{-\widetilde\gamma}\widetilde
\Phi_k$ in the $\sigma$ plane lie on the subspace \eqref{eq:polesPhi}.
This gives 
\begin{equation}\label{epolesigma}
\sigma= {n_2 z^2 - j z + m_1\tau - m_2\over n_2\tau + n_1}\, .
\end{equation}
Since $\text{Im}(\tau)>0$,
as long as either $n_2$ or $n_1$ is non-zero, the denominator does not vanish. 
Therefore,
the poles occur at finite values of $\sigma$, where $p=\mathbf{e}(\sigma)$ is non-zero. On the other hand
if $n_2=0$ and $n_1=0$ then the pole locations given by \eqref{eq:polesPhi} are independent of
$\sigma$. Therefore, for fixed $\tau$ and $z$ away from the zeros of $m_1\tau-jz-m_2$, the function $p^{\widetilde\gamma}
\widetilde\Phi_k^{-1}$ has poles 
in the $p$ plane at finite distance away from the origin.
Furthermore, we see 
from \eqref{eq:Phiprodrep}
that  $p^{\widetilde\gamma}
\widetilde\Phi_k^{-1}$ is a single-valued function of $p^{1/N}$ around $p=0$. 
Therefore 
$p^{\widetilde\gamma}
\widetilde\Phi_k^{-1}$ admits a Taylor series expansion in $p^{1/N}$ with $\tau,z$ dependent
expansion coefficients. Denoting by $\psi_{m-\widetilde\gamma}(\tau,z)$ the expansion
coefficient of $p^{m/N}$, we get the second 
expansion in \eqref{eq:FJexpPhikDD'}.
\end{proof}
\end{prop}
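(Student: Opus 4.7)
My plan is to read off both Fourier--Jacobi expansions directly from the infinite product \eqref{eq:Phiprodrep} together with the description \eqref{eq:polesPhi} of the zero locus of $\widetilde{\Phi}_k$. The strategy is: (i) use the product to extract the leading power of $q$ (resp.\ $p$) that $\widetilde{\Phi}_k$ vanishes to as $q\to 0$ (resp.\ $p\to 0$); (ii) use the zero locus to verify that, for generic $(\sigma,z)$ (resp.\ $(\tau,z)$), the nearest zero of $\widetilde{\Phi}_k$ in the $q$-plane (resp.\ $p^{1/N}$-plane) sits strictly away from the origin, so that after dividing out the leading power the resulting function is holomorphic and nonvanishing at the cusp and hence invertible in a disc.

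For the first expansion, the prefactor in \eqref{eq:Phiprodrep} contributes $e^{2\pi i\widetilde{\alpha}\tau}=q^{\widetilde{\alpha}}$, while every factor inside the triple product has the form $(1-q^{l}\,\text{(stuff in }\sigma,z\text{)})^{\bullet}$ with $l\geq 0$, so $q^{-\widetilde{\alpha}}\widetilde{\Phi}_k$ is a (convergent) product of polynomials in $q$ with $(\sigma,z)$-dependent coefficients, and in particular is holomorphic in $q$ near $q=0$. To invert, I solve \eqref{eq:polesPhi} for $\tau$, obtaining the formula \eqref{epoles}; whenever $(n_2,m_1)\neq (0,0)$ the denominator $n_2\sigma - m_1$ has strictly positive imaginary part (since $\text{Im}(\sigma)>0$), so the pole of $\widetilde{\Phi}_k^{-1}$ occurs at a finite value of $\tau$, i.e.\ $|q|$ bounded below. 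The remaining case $n_2=m_1=0$ yields a zero locus independent of $\tau$, carving out only a codimension-one exceptional set in $(\sigma,z)$ that we simply avoid. Off this set, $\widetilde{\Phi}_k^{-1}$ admits a convergent Laurent expansion in $q$ with lowest exponent $-\widetilde{\alpha}$, and collecting coefficients yields the claimed $\phi_m(\sigma,z)$.

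For the second expansion the argument is parallel, but with the subtlety that the product \eqref{eq:Phiprodrep} involves factors $(1-\mathbf{e}(k'\sigma+\cdots))$ with $k'\in\mathbb{Z}+r/N$, so the natural expansion variable is $p^{1/N}=\mathbf{e}(\sigma/N)$ rather than $p$. The prefactor gives $e^{2\pi i\widetilde{\gamma}\sigma}=(p^{1/N})^{N\widetilde{\gamma}}$, and after factoring it out the remaining product is single-valued and holomorphic in $p^{1/N}$ near the origin. Solving \eqref{eq:polesPhi} for $\sigma$ yields \eqref{epolesigma}; for $(n_2,n_1)\neq (0,0)$ the denominator $n_2\tau+n_1$ has strictly positive imaginary part (since $\text{Im}(\tau)>0$), so the zero sits at finite $\sigma$, i.e.\ $|p^{1/N}|$ bounded below. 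The exceptional case $n_1=n_2=0$ again gives a $\sigma$-independent locus which restricts $(\tau,z)$ to an avoidable set. This gives the expansion $\widetilde{\Phi}_k^{-1}=\sum_{n\ge -\widetilde{\gamma}N}\psi_n(\tau,z)\,p^{n/N}$.

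The main point to be careful about is the convergence/single-valuedness claim: one must check that the infinite product \eqref{eq:Phiprodrep} converges in some punctured neighborhood of $q=0$ (resp.\ $p^{1/N}=0$) uniformly on compacts in $(\sigma,z)$ away from the exceptional loci noted above, which in turn boils down to the observation that for any fixed such $(\sigma,z)$, only finitely many of the factors can vanish in a small disc about the cusp. Once this is in hand, both statements follow by reading off the Taylor/Laurent coefficients of $\widetilde{\Phi}_k^{-1}$ as a holomorphic function of $q$ or $p^{1/N}$.
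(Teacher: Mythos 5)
Your argument is essentially identical to the paper's: both read off the leading powers $q^{\widetilde\alpha}$ and $p^{N\widetilde\gamma/N}$ from the product \eqref{eq:Phiprodrep}, solve the zero locus \eqref{eq:polesPhi} for $\tau$ (resp.\ $\sigma$) to show the zeros stay a finite distance from the cusp for generic $(\sigma,z)$ (resp.\ $(\tau,z)$), and conclude single-valuedness and a Taylor expansion in $q$ (resp.\ $p^{1/N}$). One small imprecision: for $(n_2,m_1)\neq(0,0)$ the denominator $n_2\sigma-m_1$ need not have strictly positive imaginary part (e.g.\ it is a nonzero real number when $n_2=0$, and has negative imaginary part when $n_2<0$) — the correct and sufficient statement, as in the paper, is simply that it does not vanish.
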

\subsection{Single-centered degeneracy and mock Jacobi forms} \label{s8.2}
We now show that the Fourier-Jacobi coefficients $\phi_m$ and $\psi_n$ in \eqref{eq:FJexpPhikDD'} are meromorphic Jacobi forms.
\begin{thm}
\label{CHL_thm1}
Let $\phi_m$ and $\psi_n$ be as in Proposition \ref{prop:FJexpPhikDD'}. Then $\phi_m$ is a meromorphic Jacobi form of weight $-k$ and index $m$ with respect to $\Gamma^1(N)\ltimes\mathbb{Z}^2$ with poles at $z\in\mathbb{Z}\sigma+\mathbb{Z}$ and $\psi_n$ is a meromorphic Jacobi form of weight $-k$ and index $n/N$ with respect to $\Gamma_1(N)\ltimes(N\mathbb{Z}\times\mathbb{Z})$ with poles at $z\in N\mathbb{Z}\tau+\mathbb{Z}$.
\label{thm:FJcoePhijactrans}
\begin{proof} 
Let us start by expanding the partition function in $q$ first:
\begin{equation}
\frac{1}{\widetilde{\Phi}_k(\tau,z,\sigma)}=\sum_{m=-\widetilde{\alpha}}^{\infty}\phi_m(\sigma,z)q^{m}.
\end{equation}
From Proposition \ref{prop:Phitransg1}, we have 
\begin{equation}
g^1(a,b,c,d),~g^3(\lambda,\mu)\in\widetilde{G},\quad b\in N\mathbb{Z},~ ab-bc=1,~a,d\equiv 1(\bmod N),\quad \lambda,\mu\in\mathbb{Z}
\label{eq:tildeGg13}
\end{equation}
where 
\begin{equation}
g^1(a,b,c,d):=\begin{pmatrix}
1&0&0&0\\0&a&0&b\\0&0&1&0\\0&c&0&d
\end{pmatrix},\quad g^3(\lambda,\mu):=\begin{pmatrix}
1&\lambda&0&\mu\\0&1&\mu&0\\0&0&1&0\\0&0&-\lambda&1
\end{pmatrix}.
\end{equation}
Now observe that 
\begin{equation}
\begin{split}
&g^1(a,b,c,d)\circ\Omega=\begin{pmatrix}
\tau-\frac{cz^2}{c\sigma+d}&\frac{z}{c\sigma+d}\\\frac{z}{c\sigma+d}&\frac{a\sigma+b}{c\sigma+d}
\end{pmatrix},\\
& g^3(\lambda,\mu)\circ\Omega=\begin{pmatrix}
\lambda^2\sigma+2\lambda z+\tau+\lambda\mu&z+\lambda\sigma+\mu\\z+\lambda\sigma+\mu&\sigma
\end{pmatrix},
\end{split}
\end{equation}
where 
\begin{eqnarray}
    \Omega = \begin{pmatrix}
\tau&z\\z&\sigma\end{pmatrix}\in\mathbb{H}_2.
\end{eqnarray}
Also 
\begin{equation}
\text{det}(C_1\Omega+D_1)=(c\sigma+d),\quad \text{det}(C_3\Omega+D_3)=1
\end{equation}
where we write out the matrices $g^1$ and $g^3$ in the form
\begin{equation}
g^{1}(a,b,c,d)=\begin{pmatrix}
A_1&B_1\\C_1&D_1
\end{pmatrix},\quad g^{3}(\lambda,\mu)=\begin{pmatrix}
A_3&B_3\\C_3&D_3
\end{pmatrix}.
\end{equation}
Since $\widetilde{\Phi}_k^{-1}$ transforms like a Siegel modular form of weight $-k$ for $\widetilde{G}$ we have 
\begin{equation}
\widetilde{\Phi}_k^{-1}(g^1(a,b,c,d)\circ\Omega)=(c\tau+d)^{-k}\,
\widetilde{\Phi}_k^{-1}(\Omega),\quad \widetilde{\Phi}_k^{-1}(g^3(\lambda,\mu)\circ\Omega)=\widetilde{\Phi}_k^{-1}(\Omega).
\label{eq:Phig1g3inv}
\end{equation}
In the limit Im$(\tau)\to\infty$ both sides of \eqref{eq:Phig1g3inv} admits the Fourier-Jacobi expansion as in \eqref{eq:FJexpPhikDD'}. First equation of \eqref{eq:Phig1g3inv} gives, 
\begin{equation}
\begin{split}
(c\sigma+d)^{-k}\sum_{m=-\widetilde{\alpha}}^{\infty}\phi_m(\sigma,z)q^{m}&=\sum_{m=-\widetilde{\alpha}}^{\infty}\phi_m\left(\frac{a\sigma+b}{c\sigma+d},\frac{z}{c\sigma+d}\right)\mathbf{e}\left(m\left(\tau-\frac{cz^2}{c\sigma+d}\right)\right)\\&=\sum_{m=-\widetilde{\alpha}}^{\infty}\mathbf{e}\left(-\frac{mcz^2}{c\sigma+d}\right)\phi_m\left(\frac{a\sigma+b}{c\sigma+d},\frac{z}{c\sigma+d}\right)q^m.
\end{split}
\end{equation}
Comparing coefficients of $q^m$ we obtain
\begin{equation}
\phi_m\left(\frac{a\sigma+b}{c\sigma+d},\frac{z}{c\sigma+d}\right)=(c\sigma+d)^{-k}\mathbf{e}\left(\frac{mcz^2}{c\sigma+d}\right)\phi_m(\sigma,z)
\end{equation}
which is precisely the modular transformation of a Jacobi form. Note that
\begin{equation}
\begin{pmatrix}
a&b\\c&d
\end{pmatrix}\in\Gamma^1(N)
\end{equation}
and hence $\phi_m$ has the correct modular transformation of a Jacobi form of weight $-k$ and index $m$ with respect to $\Gamma^1(N).$
Next the second equation of \eqref{eq:Phig1g3inv} gives, 
\begin{equation}
\begin{split}
\sum_{m=-\widetilde{\alpha}}^{\infty}\phi_m(\sigma,z)q^{m}&=\sum_{m=-\widetilde{\alpha}}^{\infty}\phi_m(\sigma,z+\lambda\sigma+\mu)\,
\mathbf{e}(m(\lambda^2\sigma+2\lambda z+\tau+\lambda\mu))\\&=\sum_{m=-\widetilde{\alpha}}^{\infty}\mathbf{e}(m(\lambda^2\sigma+2\lambda z+\lambda\mu))\, \phi_m(\sigma,z+\lambda\sigma+\mu)\, q^m.
\end{split}
\end{equation}
Again comparing Fourier coefficients, we obtain  
\begin{equation}
\phi_m(\sigma,z+\lambda\sigma+\mu)=\mathbf{e}(-m(\lambda^2\sigma+2\lambda z))\, \phi_m(\sigma,z)
\end{equation}
which is the elliptic transformation. Thus $\phi_m$ transforms like Jacobi forms of weight $-k$ and index $m$ with respect to $\Gamma^1(N)\ltimes\mathbb{Z}^2$. 

Let us now expand the partition function around $\text{Im}(\sigma)=\infty$: 
\begin{equation}
\frac{1}{\widetilde{\Phi}_k(\tau,z,\sigma)}=\sum_{n=-\widetilde{\gamma}N}^{\infty}\psi_n(\tau,z)p^{n/N}.
\end{equation}
From Proposition \ref{prop:Phitransg1}, we see that 
\begin{equation}
\begin{split}
& \widetilde{\Phi}_k\underset{k}{|}g_1(a,b,cN,d)=\widetilde{\Phi}_k,\quad ad-bcN=1,\quad a,d\equiv 1(\bmod~N),\\&\widetilde{\Phi}_k\underset{k}{|}g_3(\lambda N,\mu)=\widetilde{\Phi}_k,\quad \lambda,\mu\in\mathbb{Z},
\end{split}
\end{equation}
where $g_1,g_3$, given in \eqref{eq:g_123def}
has the following action on $\Omega$:
\begin{equation}
\begin{split}
& g_1(a,b,c,d)\circ \Omega=\begin{pmatrix}
\frac{a\tau+b}{c\tau+d}&\frac{z}{c\tau+d}\\\frac{z}{c\tau+d}&\sigma-\frac{cz^2}{c\tau+d}
\end{pmatrix}\\&g_3(\lambda,\mu)\circ\Omega=\begin{pmatrix}
\tau&z+\lambda\tau+\mu\\z+\lambda\tau+\mu&\lambda^2\tau+2\lambda z+\lambda\mu+\sigma
\end{pmatrix}.
\end{split}
\end{equation}
Performing similar calculations as above it follows that $\psi_n(\tau,z)$ transforms as a Jacobi form of weight $-k$, index $n/N$ with respect to $\Gamma_1(N)\ltimes(N\mathbb{Z}\times\mathbb{Z})$. Thus, 
\begin{comment}
We will use the relation \eqref{eq:PhitildePhirel} to prove the second part. Let us expand $\Phi_k^{-1}(\tau,z,\sigma)$:
\begin{equation}
\frac{1}{\Phi_k^{-1}(\tau,z,\sigma)}=\sum_{n=-\widetilde{\gamma}N}^{\infty}\widehat{\psi}_n(\tau,z)p^n.
\end{equation}
Using \eqref{eq:PhitildePhirel} we get 
\begin{equation}
\sum_{n=-\widetilde{\gamma}N}^{\infty}\psi_n(\tau,z)p^{n/N}=\sum_{n=-\widetilde{\gamma}N}^{\infty}(N\tau)^{k}\mathbf{e}\left(-\frac{mz^2}{N\tau}\right)\widehat{\psi}_n\left(-\frac{1}{N\tau}\frac{z}{N\tau}\right)p^{n/N}
\end{equation}
which implies 
\begin{equation}
\psi_n(\tau,z)=(N\tau)^{k}\mathbf{e}\left(-\frac{mz^2}{N\tau}\right)\widehat{\psi}_n\left(-\frac{1}{N\tau}\frac{z}{N\tau}\right)=\left(\widehat{\psi}_n\underset{-k,n}{|}\omega(N)\right)(\tau,z)
\end{equation}
where \begin{equation}
\omega(N)=\begin{pmatrix}
0&-1\\N&0
\end{pmatrix}
\end{equation}
is the Frieke involution. We now show that $\widehat{\psi}_n$ transforms as a Jacobi form of weight $-k$ index $n$ with respect to $\Gamma_1(N)\ltimes\mathbb{Z}^2$. Then by Proposition \ref{prop 1.7}, we can conclude that
\end{comment} 
\begin{equation}
\phi_m(\tau,z)\in J_{-k,m}(\Gamma^1(N)),\quad \psi_n(\tau,z)\in J^N_{-k,\frac{n}{N}}(\Gamma_1(N)).
\end{equation}
The poles of $\phi_m$ and $\psi_n$ follows from the general structure \eqref{eq:polesPhi} of poles of $\widetilde{\Phi}_k^{-1}$ after setting $n_2=0$ and $m_1=0$ (for $\phi_m$) or $n_1=0$ (for $\psi_n$).
\end{proof}
\end{thm}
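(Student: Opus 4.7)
My plan is to exploit the Siegel modularity of $\widetilde\Phi_k^{-1}$ under the group $\widetilde G$ to extract the Jacobi transformation laws for the Fourier-Jacobi coefficients $\phi_m$ and $\psi_n$ separately. The key observation is that within $\widetilde G$ one can find two kinds of symplectic block matrices: one kind that acts on the $\sigma$ variable as a Möbius transformation while shifting $\tau$ only by a $z$-dependent term (plus a rescaling of $z$), and another kind that implements the elliptic shift $z\mapsto z+\lambda\sigma+\mu$ together with a purely additive shift of $\tau$. Under both, the $q$-expansion of $\widetilde\Phi_k^{-1}$ is preserved as a Laurent series in $q$ and the induced action on the coefficient $\phi_m(\sigma,z)$ is precisely the weight $-k$, index $m$ Jacobi slash action. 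For $\psi_n$ one uses the symmetric role of $\tau$ and $\sigma$, but must be careful that the symplectic elements preserving the $p^{1/N}$-expansion only provide $\Gamma_1(N)$ modularity in $\tau$ and shifts by $N\mathbb{Z}\times\mathbb{Z}$ in $z$.

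Concretely, first I would identify, inside the explicit description of $\widetilde G$ from the appendix, an embedding $\Gamma^1(N)\hookrightarrow\widetilde G$ that fixes $\tau$ up to a term proportional to $cz^2/(c\sigma+d)$, and an embedding $\mathbb{Z}^2\hookrightarrow\widetilde G$ implementing $z\mapsto z+\lambda\sigma+\mu$. Then applying $\widetilde\Phi_k^{-1}\mid_{-k}g=\widetilde\Phi_k^{-1}$ to each side and inserting the first expansion of \eqref{eq:FJexpPhikDD'}, I match coefficients of $q^m$. This yields
\[
\phi_m\!\left(\tfrac{a\sigma+b}{c\sigma+d},\tfrac{z}{c\sigma+d}\right)=(c\sigma+d)^{-k}\,\mathbf{e}\!\left(\tfrac{mcz^2}{c\sigma+d}\right)\phi_m(\sigma,z)
\]
and $\phi_m(\sigma,z+\lambda\sigma+\mu)=\mathbf{e}(-m(\lambda^2\sigma+2\lambda z))\phi_m(\sigma,z)$. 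The analogous construction with roles of $\tau,\sigma$ swapped gives the transformation rule for $\psi_n$, but the condition $b\in N\mathbb{Z}$ enforced on the symplectic matrices so that the second expansion is preserved degrades $\Gamma^1(N)$ to $\Gamma_1(N)$ and forces the elliptic shift to lie in $N\mathbb{Z}\times\mathbb{Z}$, which is exactly the index $n/N$ behaviour.

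For the pole structure, I use the description \eqref{eq:polesPhi} of the zeros of $\widetilde\Phi_k$. The Fourier-Jacobi expansion around $\mathrm{Im}(\tau)=\infty$ captures only the zero locus that is independent of $\tau$, which forces $n_2=0$ and $m_1=0$ in \eqref{eq:polesPhi}; the remaining constraint $jz+n_1\sigma+m_2=0$ with $j$ odd and $n_1,m_2\in\mathbb{Z}$ gives $z\in\mathbb{Z}\sigma+\mathbb{Z}$. Symmetrically, for the expansion around $\mathrm{Im}(\sigma)=\infty$ one must set $n_2=0$ and $n_1=0$, and the condition $m_1\in N\mathbb{Z}$ translates into poles at $z\in N\mathbb{Z}\tau+\mathbb{Z}$. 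That $\phi_m,\psi_n$ are holomorphic away from these loci follows because $\widetilde\Phi_k$ has no other zeros.

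The step I expect to be the main obstacle is verifying that the relevant block-triangular elements actually lie in $\widetilde G$ with the claimed restrictions on the integers $a,b,c,d,\lambda,\mu$. This requires unpacking the explicit description of $\widetilde G$ (done in the appendix), showing that the subgroup of elements stabilizing the two Fourier-Jacobi decompositions maps surjectively onto the claimed $\Gamma^1(N)\ltimes\mathbb{Z}^2$ or $\Gamma_1(N)\ltimes(N\mathbb{Z}\times\mathbb{Z})$ respectively, and that no further arithmetic congruences sneak in. Once that bookkeeping is done, the coefficient-matching argument is essentially a calculation.
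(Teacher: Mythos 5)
Your proposal is correct and follows essentially the same route as the paper: the paper uses exactly the block-triangular elements you describe (denoted $g^1(a,b,c,d)$, $g^3(\lambda,\mu)$ for the $q$-expansion and $g_1(a,b,cN,d)$, $g_3(\lambda N,\mu)$ for the $p^{1/N}$-expansion), verifies their membership in $\widetilde G$ via Proposition \ref{prop:Phitransg1} in the appendix — the bookkeeping step you correctly flag as the main obstacle — and then matches Fourier coefficients and reads off the pole loci from \eqref{eq:polesPhi} with $n_2=m_1=0$, respectively $n_2=n_1=0$, exactly as you propose.
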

\begin{cor}
\label{CHL_cor1}
Let $\phi_m^F$ and $\psi_n^F$ be finite parts as defined in \eqref{eq 2.1} and \eqref{eq 2.24} respectively of the Fourier-Jacobi coefficients of $\widetilde{\Phi}_k^{-1}$ defined by \eqref{eq:FJexpPhikDD'}. Then $\phi_m^F$ and $\psi_n^F$ are mixed mock Jacobi forms of the type described in Theorem \ref{thm 2.6} and Theorem \ref{thm 2.12} respectively.
\begin{proof}
The poles of $\phi_m$ and $\psi_n$ are at the torsion points $z\in\mathbb{Z}\tau+\mathbb{Z}$ and $z\in(N\mathbb{Z}\sigma+\mathbb{Z})$ respectively. Thus Theorem \ref{thm 2.6} and Theorem \ref{thm 2.12} combined with Theorem \ref{thm:FJcoePhijactrans} give the required result.
\end{proof}
\end{cor}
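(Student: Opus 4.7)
The plan is to combine Theorem~\ref{CHL_thm1}, which identifies $\phi_m$ and $\psi_n$ as meromorphic Jacobi forms with specified transformation data, with the mock-modularity machinery developed in Theorems~\ref{thm 2.6} and~\ref{thm 2.12}. To invoke those theorems we only need to check three things about each of $\phi_m$ and $\psi_n$: (i) it lies in the right space of meromorphic Jacobi forms, (ii) its poles as a function of $z$ lie on a discrete subset of $\mathbb{Q}\tau+\mathbb{Q}$ (respectively of $\mathbb{Q}\sigma+\mathbb{Q}$ for $\phi_m$, viewing $\sigma$ as the modular variable), and (iii) the poles are at worst double poles. Items (i) and (ii) are essentially already furnished by Theorem~\ref{CHL_thm1}; the remaining work is to verify (iii) from the infinite product \eqref{eq:Phiprodrep}, and then assemble the conclusion.

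First I would invoke Theorem~\ref{CHL_thm1} directly. It gives $\phi_m \in J_{-k,m}(\Gamma^1(N))$ with poles only at $z\in\mathbb{Z}\sigma+\mathbb{Z}$, and $\psi_n \in J^N_{-k,n/N}(\Gamma_1(N))$ with poles only at $z\in N\mathbb{Z}\tau+\mathbb{Z}$. In either case the pole locus, as a function of $z$ for the appropriate fixed modular variable, is a discrete subset of $\mathbb{Q}\tau+\mathbb{Q}$ (respectively $\mathbb{Q}\sigma+\mathbb{Q}$), which is exactly the hypothesis required in Theorem~\ref{thm 2.6} and Theorem~\ref{thm 2.12}.

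Next I would check the order of the poles. From the zero locus \eqref{eq:polesPhi} of $\widetilde{\Phi}_k$ one reads off that the only zeros which produce poles of $\phi_m$ (i.e.\ those with $m_1=n_2=0$) occur along hyperplanes $jz+n_1\sigma+m_2=0$ with $j\in 2\mathbb{Z}+1$, and these are double zeros of $\widetilde{\Phi}_k$; similarly for $\psi_n$ one sets $n_1=n_2=0$ and finds $m_1\tau-jz-m_2=0$. In the special case of the $z=0$ orbit, \eqref{eq:Phikz=0pole} makes the doubleness explicit, $\widetilde{\Phi}_k=-4\pi^2 z^2 f^{(k)}(N\tau)g^{(k)}(\sigma/N)+O(z^4)$, and the $\widetilde G$-images of this statement cover the other cosets. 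Hence $\phi_m$ and $\psi_n$ have at worst double poles at the claimed torsion points.

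With (i)--(iii) in hand, the proof concludes by direct appeal. Applying Theorem~\ref{thm 2.6} to $\phi_m$ (integral index $m$, group $\Gamma^1(N)\ltimes\mathbb{Z}^2$) produces the canonical decomposition $\phi_m=\phi_m^P+\phi_m^F$ with $\phi_m^F$ a mixed mock Jacobi form whose theta coefficients have completions given by subtracting the $D_s\Theta^{s*}_{m,\ell}+E_s\Xi^{s*}_{m,\ell}$ terms associated to the simple and double poles. Applying Theorem~\ref{thm 2.12} to $\psi_n$ (rational index $n/N$, group $\Gamma_1(N)\ltimes(N\mathbb{Z}\times\mathbb{Z})$) gives the analogous decomposition $\psi_n=\psi_n^P+\psi_n^F$ with $\psi_n^F$ mixed mock Jacobi of weight $-k$ and index $n/N$. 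The main (and really the only) nontrivial point is the verification that the zeros of $\widetilde{\Phi}_k$ that survive in the Fourier--Jacobi coefficients are of order at most two; this is immediate from \eqref{eq:polesPhi} and \eqref{eq:Phikz=0pole}, so the corollary follows at once.
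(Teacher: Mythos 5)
Your proposal is correct and follows essentially the same route as the paper: invoke Theorem \ref{CHL_thm1} for the transformation properties and pole locations, then apply Theorems \ref{thm 2.6} and \ref{thm 2.12}. Your explicit verification that the poles are at worst double (via \eqref{eq:polesPhi} and \eqref{eq:Phikz=0pole}) is a hypothesis the paper leaves implicit, but it does not change the argument.
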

We shall now give an explicit form of the polar parts $\phi_m^P$ and $\psi_n^P$ of the mock Jacobi forms $\phi_m^F$ and $\psi_n^F$ respectively. To do this we need the coefficient of the Laurent expansion of $\phi_m,\psi_n$ at $z=0$. This can be obtained from  \eqref{eq:Phikz=0pole}. The coefficient of the Laurent expansion of the partition function at $z=0$ is given by
\begin{equation}
\frac{1}{\widetilde{\Phi}_k(\tau,z,\sigma)}\simeq -\frac{1}{4\pi^2}\tilde{f}^{(k)}(\tau)\tilde{g}^{(k)}(\sigma)\frac{1}{z^2}+O(z^{-4}),
\end{equation}  
where $\tilde{f}^{(k)},\tilde{g}^{(k)}$ are the inverses of the modular forms given by \eqref{eq:fkgkproddef}:
\begin{equation}
   \tilde{f}^{(k)}(\tau):=\frac{1}{f^{(k)}(N\tau)},\quad  \tilde{g}^{(k)}(\sigma):=\frac{1}{g^{(k)}(\sigma/N)}.
\end{equation}
Consequently, the residue of $\phi_m$ at $z=0$ is given by (see \eqref{eq:resdoupole}) 
\begin{equation}
E_{0~m}^\phi(\sigma)=\hat{f}^{(k)}(m)\tilde{g}^{(k)}(\sigma),
\qquad D_{0~m}^\phi(\sigma)=0\, ,
\end{equation}
where $\hat{f}^{(k)}(m)$ is the coefficient of $q^m$ in $\tilde{f}^{(k)}(\tau)$. Similarly the residue of $\psi_n$ is given by 
\begin{equation}
E_{0~n/N}^\psi(\sigma)=\hat{g}^{(k)}(n)\tilde{f}^{(k)}(\tau),
\qquad D_{0~n/N}^\psi(\sigma)=0\, ,
\end{equation}
where $\hat{g}^{(k)}(n)$ is the coefficient of $p^{n/N}$ in $\tilde{g}^{(k)}(\sigma)$.  Thus using \eqref{eq 2.15} and \eqref{eq 2.38}, the polar part can be written as  
\begin{equation}
\begin{split}
&\phi_m^P(\sigma,z)={\hat f^{(k)}(m)}{\tilde g^{(k)}(\sigma)}\mathcal{B}^{0}_{m}(\sigma,z),\\& \psi_n^P(\tau,z)=
{\hat g^{(k)}(n)}{\tilde f^{(k)}(\tau)}\widetilde{\mathcal{B}}^{0}_{\frac{n}{N}}(\tau,z),\end{split}    
\end{equation}
%To do this, we first need to the coefficient $E_s(\tau)$ of the Laurent expansion of  From \eqref{eq:Phikz=0pole}, we have We first notice that up to a $\Z^2$ transformation, $\phi_m$ has a unique double pole at $z=0$. From \eqref{eq:} 
where the Appell-Lerch sums $\mathcal{B}^{0}_{m}$ and 
$\widetilde{\mathcal{B}}^{0}_{\frac{n}{N}}$ can be computed from \eqref{eq 2.14} and \eqref{eq 2.27} respectively. This gives
\begin{equation}
\begin{split}
&\phi_m^P(\sigma,z)={\hat f^{(k)}(m)}{\tilde g^{(k)}(\sigma)}\sum_{s\in\mathbb{Z}}\frac{p^{ms^2+s}\zeta^{2ms+1}}{(1-p^{s}\zeta)^2},\\& \psi_n^P(\tau,z)=
{\hat g^{(k)}(n)}{\tilde f^{(k)}(\tau)}\sum_{s\in\mathbb{Z}}\frac{q^{N(ns^2+s)}\zeta^{2ns+1}}{(1-q^{Ns}\zeta)^2}.\end{split}  
\end{equation}
\par
We now prove our main theorem. Let us introduce the following set of triplets of charge. Let $A$ denote the set of integrs $(n,\ell,m)$ 
satisfying 
\begin{equation}
    m,n>0,\quad 4\frac{n}{N}m-\ell^2>0,
\end{equation}
and
\begin{equation}\label{eq:AdefMin}
\hbox{Min}_{m_1\in N\mathbb{Z},\, s\in \mathbb{Z},\, s(s+1)= 0\,  {\rm mod}\, m_1
    }
    \left[\left({m_1\ell+2ms +m\over m}\right)^2 +{m_1^2\over m^2}
    \left(4 {n\over N} m -\ell^2\right)\right] \ge 1\, . 
\end{equation}
Let us also introduce the set $B$ of integers $(n,\ell,m)$ 
satisfying 
\begin{equation}
    m,n>0,\quad 4\frac{n}{N}m-\ell^2>0,
\end{equation}
and
\begin{equation}\label{eq:BdefMin}
    \hbox{Min}_{n_1,s\in\mathbb{Z},\, s(s+1)=0 \, {\rm mod}\, n_1} \left[\left({n_1 N\ell-2ns-n\over n}\right)^2 
    +{n_1^2N^2\over n^2}
    \left(4 {n\over N} m -\ell^2\right)\right] \ge 1\, . 
\end{equation}
We then have the following theorem.
\begin{thm}
%Let $Q,P$ be electric and magnetic charges and $A$, $B$ denote the following setsof triplets $(Q^2/2=n/N, Q.P=\ell, P^2/2=m)$:\begin{eqnarray} \label{esetAtext}A\, &:&  \{\, (n_0/N+m\lambda_1^2+\lambda_1\ell_0, \ell_0+2m\lambda_1,m), \quad n_0/N>m, \ 0\le |\ell_0|<2m, \ \lambda_1\in\mathbb{Z}\}\, ,\nonumber \\B\, &:& \, \{\,  (n/N, \ell_0+2n\lambda_2,m_0+Nn\lambda_2^2+N\lambda_2\ell_0), \quad m_0/N>n, \ 0\le |\ell_0|<2n, \ \lambda_2\in\mathbb{Z}\}\, .\nonumber \\\end{eqnarray}
\begin{enumerate}[label=(\alph*)]
\item For  $(n,\ell,m)\in A$, 
the single-centered black hole degeneracies with  $NQ^2/2=m,P^2/2=n$ and $Q\cdot P=\ell$ are the Fourier coefficients\footnote{Up to a constant multiple, see \eqref{efinal} below.} of the mock Jacobi form $\phi_m^F$.
\item For $(n,\ell,m)\in B$, 
the single-centered black hole degeneracies with $NQ^2/2=m,P^2/2=n$ and $Q\cdot P=\ell$ are the Fourier coefficients of the mock Jacobi form $\psi_n^F$.
\end{enumerate}
\label{CHL_thm2}
\begin{proof}
Let us start with charges in the set $A$. 
We now want to use the first Fourier-Jacobi expansion in \eqref{eq:FJexpPhikDD'} to perform the contour integral for single-centered degeneracy. This requires us to deform the contour $\mathcal{C}$ to $\mathcal{C}'$ which is defined identical to $\mathcal{C}$ except that Im$(\tau)\to \infty$. To do this we need to show that there are no poles of the partition function between the region bounded by the contour $\mathcal{C}$ and the deformed contour. This is proved in Appendix \ref{sectionb1}. We can hence write
\begin{equation}
d^{*}\left(\frac{n}{N}, \ell,m\right)= \frac{(-1)^{\ell+1}}{N} \int_{\mathcal{C}'} d\sigma dzd\tau  e^{-2 \pi i\left(m\tau+\frac{n}{N}\sigma+\ell z\right)} \sum_{m'=-\widetilde{\alpha}}^{\infty}\phi_{m'}(\sigma,z)q^{m'}.
\end{equation} 
Performing the $\tau$ integral, we get 
\begin{equation}
\begin{split}
d^{*}\left(\frac{n}{N}, \ell,m\right)=\frac{(-1)^{\ell+1}}{N} \int_{\mathcal{C}(n,\ell)} d\sigma dz e^{-2 \pi i\left(\frac{n}{N}\sigma+\ell z\right)}\phi_{m}(\sigma,z)
\end{split}
\end{equation}
where $\mathcal{C}(n,\ell)$ is the projection of $\mathcal{C}$ onto $\sigma$-$z$ plane. %For this expression to be consistent with \eqref{eq 3.4} we need to show that the $(n/N,\ell)^{th}$ Fourier coefficient $c_m(n/N,\ell)$ of $\phi_m$ satisfies \begin{equation}c_m(n/N,\ell)=c_m(n/N+mk^2+k\ell,\ell+2mk).\end{equation} This is indeed easy to prove using the elliptic invariance of the Jacobi form $\phi_m$ under $(k,*)\in\mathbb{Z}\times\mathbb{Z}$.

Now we specify the $z$ integral. Note that the attractor contour \eqref{eq 3.3} implies that 
\begin{equation}
\text{Im}(z)=-\frac{\ell}{2m}\text{Im}(\sigma).
\end{equation}
Moreover we can shift the contour in purely horizontal direction since we do not cross any pole in doing so. So we shift the contour such that on the contour 
\begin{equation}
-\frac{\ell}{2m}\text{Re}(\sigma)\leq\text{Re}(z)\leq-\frac{\ell}{2m}\text{Re}(\sigma)+1. 
\end{equation}
Thus we have
\begin{equation}\label{eq:contintd*phim}
\begin{split}
d^{*}\left(\frac{n}{N}, \ell,m\right)&= \frac{(-1)^{\ell+1}}{N} \int d\sigma e^{-2 \pi i\frac{n}{N}\sigma}\int_{-\frac{\ell}{2m}\sigma}^{-\frac{\ell}{2m}\sigma+1} dz e^{-2 \pi i\ell z}\phi_{m}\left(\sigma,z\right)\\&=\frac{(-1)^{\ell+1}}{N} \int d\sigma e^{-2 \pi i\frac{n}{N}\sigma}e^{2\pi i\ell^2\sigma/4m}h_{\ell}^{(-\frac{\ell}{2m}\sigma)}(\sigma),
\end{split}
\end{equation}
where 
\begin{equation}
h_{\ell}^{(-\frac{\ell}{2m}\sigma)}(\sigma):=e^{-2\pi i\ell^2\sigma/4m}\int_{-\frac{\ell}{2m}\sigma}^{-\frac{\ell}{2m}\sigma+1} dz e^{-2 \pi i\ell z}\phi_{m}(\sigma,z)\, .
\end{equation}
It follows from the discussion at the end of Appendix \ref{sec:poleattcont} that 
$h_{\ell}^{(-\frac{\ell}{2m}\sigma)}(\sigma)$ does not have a pole
in the upper half $\sigma$ plane. Hence the
integration over $\sigma$ can be taken to be over any horizontal line of length $N$.

Now, from \eqref{eq 2.23'}, \eqref{eq 2.24}, and the fact that
$h_{\ell}^{(-\frac{\ell}{2m}\sigma)}(\sigma)$ only depends on 
$\ell\bmod 2m$ (see the discussion below \eqref{eq 2.23}),
we get
\begin{eqnarray}
\phi_m^F(\sigma,z)
&=&\sum_{\ell=0}^{2m-1}h_{\ell}^{(-\frac{\ell}{2m}\sigma)}(\sigma)\vartheta_{m,\ell}(\sigma,z)\nonumber
\\&
=&\sum_{\ell=0}^{2m-1}\sum_{r\in\mathbb{Z}}e^{2\pi i(\ell+2mr)^2\sigma/4m}e^{2\pi i(\ell+2mr) z}h_{\ell}^{(-\frac{\ell}{2m}\sigma)}(\sigma)\nonumber \\
&=& \sum_{\ell\in \mathbb{Z}}e^{2\pi i\ell^2\sigma/4m}e^{2\pi i\ell 
z}h_{\ell}^{(-\frac{\ell}{2m}\sigma)}(\sigma).
\label{eq:phiFfa}
\end{eqnarray}
Using this we can express \eqref{eq:contintd*phim} as
\begin{equation}\label{efinal}
\begin{split}
d^{*}\left(\frac{n}{N}, \ell,m\right)&= \frac{(-1)^{\ell+1}}{N} \int d\sigma \int dz \,
e^{-2 \pi i\frac{n}{N}\sigma}\, e^{-2\pi i \ell z}
\phi_m^F(\sigma,z)\, .
\end{split}
\end{equation}
In this form, the $\sigma$ and $z$ integrals run over any  horizontal lines 
of length $N$ and 1 respectively.
This proves (a).

\par To prove (b), we shall use the fact that for charges in set $B$,
the integration contour $\mathcal{C}$ can be deformed to 
$\mathcal{C}''$, where $\mathcal{C}''$ is defined identical to $\mathcal{C}$ except that Im$(\sigma)\to\infty$. The validity of such a deformation without picking up any residues 
has been proved 
in Appendix \ref{sectionb1}. Then following similar calculation as for (a) we get 
\begin{equation}
d^*(n/N,\ell,m)= \frac{(-1)^{\ell+1}}{N}\int_{\mathcal{C}(m,\ell)} d\tau e^{-2 \pi im\tau}e^{2\pi i\ell^2N\tau/4n}\tilde{h}_{\ell}^{(-\frac{\ell N}{2n}\tau)}(\tau),
\label{eq:d*hllmNn}
\end{equation}
where 
\begin{equation}
\tilde{h}_{\ell}^{(-\frac{\ell N}{2n}\tau)}(\tau):=e^{-2\pi i\tau\frac{\ell^{2}N}{4n}} \int_{-\frac{\ell N}{2n}\tau}^{-\frac{\ell N}{2n}\tau+1} \psi_n(\tau, z) \mathbf{e}(-\ell z) d z.    
\end{equation}
%Again the elliptic invariance of $\psi_n$ with respect to $(Nk,*)\in N\mathbb{Z}\times\mathbb{Z}$ gives a relation of Fourier coefficients of $\psi_n$ similar to \eqref{eq:d*sdual} and hence is consistent with \eqref{eq:d*hllmNn}.Thus the single-centered black hole degeneracy with $Q^2/2=n/N,P^2/2=m$ and $Q\cdot P=\ell$ with $0\leq |\ell|<2n$ is the coefficient of $q^{m}$ of the function \begin{equation}\widetilde{f}_{n,\ell}(\tau):=e^{2\pi i\ell^2N\tau/4n}\widetilde{h}_{\ell}^{(-\frac{\ell N}{2n}\tau)}(\tau).\end{equation} 
Now, from \eqref{eq 2.23'}, \eqref{eq 2.24}, we get
\begin{equation}
\begin{split}
\psi_n^F(\tau,z)&=\sum_{\ell=0}^{2n-1}\widetilde{h}_{\ell}^{(-\frac{\ell N}{2n}\tau)}(\tau)\widetilde{\vartheta}_{\frac{n}{N},\ell}(\tau,z)\\&=\sum_{\ell=0}^{2n-1}\sum_{r\in\mathbb{Z}}e^{2\pi i(\ell+2nr)^2N\tau/4n}e^{2\pi i(\ell+2nr) z}\widetilde{h}_{\ell}^{(-\frac{\ell N}{2n}\tau)}(\tau)\\&=\sum_{\ell\in\mathbb{Z}}e^{2\pi i\ell^2N\tau/4n}e^{2\pi i\ell z}\widetilde{h}_{\ell}^{(-\frac{\ell N}{2n}\tau)}(\tau).
\end{split}
\label{eq:psiFftilde}
\end{equation}  
Using this we can express \eqref{eq:d*hllmNn} as
\begin{equation}
\begin{split}
d^{*}\left(\frac{n}{N}, \ell,m\right)&= \frac{(-1)^{\ell+1}}{N} \int d\tau \int dz \,
e^{-2 \pi im\tau}\, e^{-2\pi i \ell z}
\psi_n^F(\tau,z)\, ,
\end{split}
\end{equation}
where the $\tau$ and $z$ integrals run over any  horizontal lines 
of length 1.
This proves (b).
\end{proof}
\end{thm}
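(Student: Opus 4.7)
The plan is to derive both statements by starting from the contour integral representation \eqref{eq 3.2} and deforming the attractor contour $\mathcal{C}$ to a contour on which one of the imaginary parts of $\tau$ or $\sigma$ is sent to infinity, so that the Fourier-Jacobi expansions \eqref{eq:FJexpPhikDD'} become absolutely convergent. For part (a) I would deform $\mathcal{C}$ to $\mathcal{C}'$ defined by $\mathrm{Im}(\tau)\to\infty$ while keeping $\mathrm{Im}(\sigma)$ and $\mathrm{Im}(z)$ as in \eqref{eq 3.3}; for part (b) to $\mathcal{C}''$ with $\mathrm{Im}(\sigma)\to\infty$ instead. The crucial point is that the double zeros of $\widetilde{\Phi}_k$, catalogued in \eqref{eq:polesPhi}, must not be crossed during the deformation. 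The conditions \eqref{eq:AdefMin} for set $A$ and \eqref{eq:BdefMin} for set $B$ are engineered so that exactly these poles remain on one side of the deformed contour; this validity check is technical and I would relegate it to Appendix \ref{app:Cdeform}.

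Once the deformation is justified, part (a) proceeds by substituting the first expansion in \eqref{eq:FJexpPhikDD'} into the integral and performing the $\tau$-integral, which picks out the single term $m' = m$ and yields
\begin{equation*}
d^{*}\!\left(\tfrac{n}{N},\ell,m\right) = \tfrac{(-1)^{\ell+1}}{N} \int d\sigma\, dz\, e^{-2\pi i(\tfrac{n}{N}\sigma + \ell z)}\, \phi_m(\sigma,z),
\end{equation*}
with $\sigma$ and $z$ along appropriate horizontal segments. Next I would shift the $z$-contour horizontally (no poles are crossed, since the poles lie on $z\in\mathbb{Z}\sigma+\mathbb{Z}$ by Theorem \ref{CHL_thm1}) so that it runs from $-\tfrac{\ell}{2m}\sigma$ to $-\tfrac{\ell}{2m}\sigma+1$. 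By the definition \eqref{eq 2.2}, this $z$-integral produces $q^{\ell^{2}/4m} h_{\ell}^{(-\tfrac{\ell}{2m}\sigma)}(\sigma)$ (with $q = \mathbf{e}(\sigma)$ here). Summing the contributions $\ell \mod 2m$ reassembles them via the theta decomposition identity derived in \eqref{eq:phiFfa} into $\phi_m^F(\sigma,z)$. The last point to check is that $h_\ell^{(-\tfrac{\ell}{2m}\sigma)}(\sigma)$ has no pole in the upper half $\sigma$-plane so that the $\sigma$-contour may be taken to be any horizontal segment of length $N$; this follows from the analysis of poles on the attractor contour in Appendix \ref{sec:poleattcont}.

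Part (b) is structurally identical but the roles of $\tau$ and $\sigma$ are swapped: after deforming $\mathcal{C}$ to $\mathcal{C}''$ and using the second expansion in \eqref{eq:FJexpPhikDD'}, the $\sigma$-integral fixes the Fourier-Jacobi coefficient to $\psi_n(\tau,z)$, and the $z$-integral along the segment from $-\tfrac{\ell N}{2n}\tau$ to $-\tfrac{\ell N}{2n}\tau+1$ produces $\widetilde{h}_\ell^{(-\tfrac{\ell N}{2n}\tau)}(\tau)$ as in \eqref{eq 2.23'}, which is well-defined $\bmod\, 2n$ by the argument in \eqref{eq:hellNmod2n}. Summing over $\ell \bmod 2n$ reassembles the integrand into $\psi_n^F(\tau,z)$ via \eqref{eq:psiFftilde}, giving the Fourier coefficient interpretation.

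The principal obstacle is the contour deformation argument, since the pole surfaces \eqref{eq:polesPhi} form an infinite family labeled by $(n_1, m_1, n_2, m_2, j)$, and one must verify that the inequalities \eqref{eq:AdefMin} and \eqref{eq:BdefMin} are precisely the conditions preventing any such surface from intersecting the region swept out by the deformation. The remaining manipulations are routine: the $\tau$- or $\sigma$-integral extracts a Fourier coefficient, the $z$-integral along a tilted segment matches the integral defining $h_\ell$, and the sum over $\ell \bmod 2m$ (resp.\ $2n$) reconstitutes the theta decomposition of $\phi_m^F$ (resp.\ $\psi_n^F$) by unfolding the $\ell$-sum to all of $\mathbb{Z}$.
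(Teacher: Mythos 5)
Your proposal follows essentially the same route as the paper's own proof: deform $\mathcal{C}$ to $\mathcal{C}'$ (resp.\ $\mathcal{C}''$) using the pole analysis of Appendix \ref{sectionb1}, extract the Fourier--Jacobi coefficient $\phi_m$ (resp.\ $\psi_n$) via the $\tau$ (resp.\ $\sigma$) integral, shift the $z$-contour to the tilted segment so the $z$-integral produces $h_\ell^{(-\ell\sigma/2m)}$, and unfold the sum over $\ell\bmod 2m$ (resp.\ $2n$) to reassemble $\phi_m^F$ (resp.\ $\psi_n^F$), with the absence of poles of $h_\ell$ in the upper half plane handled by the attractor-contour argument of Appendix \ref{sec:poleattcont}. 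The argument is correct and matches the paper step for step.
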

\begin{remark}Single-centered degeneracies for charges 
belonging to the set $A$
can be obtained by performing contour integrals with integrand $\phi_m(\sigma,z)$ or $\phi_m^F(\sigma,z)$. The former is given by the expression \eqref{eq:contintd*phim}
where the limits on the $z$ integral  depends explicitly on $\ell$. 
On the other hand the latter is given by \eqref{efinal} where the limits of
integration are independent of $\ell$. Analogous statement holds for single-centered degeneracies for charges belonging to the set $B$.
\end{remark}
\noindent It is useful to list some special properties of the set $A$ and $B$ proved
in appendix \ref{app:Cdeform}.
\begin{enumerate}
 \item
The sets $A$ and $B$ are invariant under the elliptic transformations\footnote{This is just the S-duality transformation \eqref{eq:sdualtrans} for the matrix $\begin{pmatrix}1&\lambda_1\\0&1\end{pmatrix}\in\Gamma_1(N)$.} \begin{equation}\label{eq:Adeftrans}(n/N,\ell, m) \mapsto (n/N+m\lambda_1^2+\lambda_1\ell, \ell+2m\lambda_1,m), \qquad \lambda_1\in \mathbb{Z}\,     \end{equation}
and\footnote{This is just the S-duality transformation \eqref{eq:sdualtrans} for the matrix $\begin{pmatrix} 1&0\\\lambda_2N&1\end{pmatrix}\in\Gamma_1(N)$.} \begin{equation}\label{eq:Bdeftrans}(n/N,\ell,m)\mapsto (n/N, \ell+2n\lambda_2,m+Nn\lambda_2^2+N\lambda_2\ell), \qquad \lambda_2\in\mathbb{Z}\,     \end{equation}
respectively. 
\item 
For given $(n,\ell, m)$ it only requires a finite amount of work to determine if it is in set $A$. The second term in \eqref{eq:AdefMin} is bounded by 1 only for a finite number of choices of $m_1$, so we can restrict to this set for testing the inequality given in \eqref{eq:AdefMin}. For a given value of $m_1$ in this set, only a finite number of choices of $s$ can violate this inequality. Once we have exhausted the list of $m_1$ and $s$ that could possibly violate the inequality and found no candidate, we can declare that the corresponding $(n,\ell, m)$ belongs to set $A$. A similar remark holds for set $B$.
\item For given $m$, it requires only a finite amount of work to classify all the orbits under \eqref{eq:Adeftrans} that belong to the set $A$. 
Since \eqref{eq:Adeftrans} can bring $\ell$ to the range $0\le \ell <2m$, we can choose $\ell$ to be in this range. For each of these values of $\ell$, only finite number of allowed values of $n$ can keep the second term in \eqref{eq:AdefMin} below one for $m_1\ne 0$. For each of these choices of $n$ and $m_1$, we then examine the list of $s$ for which the inequality can be violated. This is again a finite operation. For $m_1=0$ the inequality is satisfied for all $n$ and $s$ and no further work is needed. A similar result holds for the set $B$.
\item
It is also shown in Appendix \ref{app:b3} and \ref{app:b4} that the set of charges 
\begin{equation}
\begin{split}
&A_1:=\left\{(n,\ell,m):~ m,n>0,~ 4\frac{n}{N}m-\ell^2\geq \frac{m^2}{N^2}\right\},%A_2:=\left\{(n/N,\ell,m): \quad \ell\equiv 0 \bmod 2m\right\},
\\&A_2:=\left\{(n,\ell,m):~m,n>0,~ \frac{n}{N}\geq m,~0\leq|\ell|\leq 2m\right\},     
\end{split}     
\end{equation}
are subsets of $A$ and 
\begin{equation}
\begin{split}
&B_1:=\left\{(n,\ell,m):~ m,n>0,~ 4\frac{n}{N}m-\ell^2\geq \frac{n^2}{N^2}\right\},%B_2:=\left\{(n/N,\ell,m): \quad \ell\equiv 0 \bmod 2n\right\},
\\&B_2:=\left\{(n,\ell,m):~ m,n>0,~ m\geq\frac{n}{N},~0\leq|\ell|\leq 2n\right\},     
\end{split}     
\end{equation}
are subsets of $B$. Note that $A_1$ is invariant under the elliptic transformation \eqref{eq:Adeftrans} and $B_1$ is invariant under the elliptic transformation \eqref{eq:Bdeftrans}. But $A_2,B_2$ are not invariant under the transformations \eqref{eq:Adeftrans} and \eqref{eq:Bdeftrans} respectively. It follows that the images of $A_2$ under \eqref{eq:Adeftrans} and $B_2$ under \eqref{eq:Bdeftrans} are also subsets of $A$ and $B$ respectively. 
\end{enumerate}

\subsection{Comparison with Dabholkar-Murthy-Zagier}
We have found the set of charges for which the degeneracies are given by the coefficients of the mock Jacobi forms. In particular, for $N=1$ we must recover the result of \cite[Section 11.1]{Dabholkar:2012nd}. From our analysis, it turns out that the italicized statement below eq. (11.17) of \cite{Dabholkar:2012nd} in its most general form is not correct. The general claim in \cite[Section 11.1]{Dabholkar:2012nd} is that the degeneracy for every charge $(n,\ell,m)$ with $n>m$ is given by the corresponding coefficient of \footnote{Note that the mock Jacobi form $\psi_m^F$ of \cite{Dabholkar:2012nd} is denoted as $\phi_m^F$ in this paper. Also the roles of $Q$ and $P$ are interchanged compared to \cite{Dabholkar:2012nd} since we have $Q^2/2=m,P^2/2=n$.} $\phi_m^F$. We will give some counterexamples to demonstrate that this is not true. In particular we will find charges $(n,\ell,m)$ with $n>m$ which do not belong to the set $A$. \par Consider for example the charge $(n,\ell,m)=(10,14,5)$. 
It can easily be checked that $(n,\ell,m)=(10,14,5)$ does not belong to the set $A$,
since the choice $m_1=1,s=-2$
violates \eqref{eq:AdefMin}.
This can also be seen more explicitly as follows. For the choice of charges given
above, the attractor contour is at 
\begin{eqnarray}
\text{Im}(\tau)=\frac{20}{\varepsilon},\quad \text{Im}(\sigma)=\frac{10}{\varepsilon},\quad \text{Im}(z)=-\frac{14}{\varepsilon}.    
\end{eqnarray}
To express the degeneracy for this charge as the coefficient of $\phi_5^F$, we need to deform the contour from $\text{Im}(\tau)=\frac{20}{\varepsilon}$ to $\text{Im}(\tau)\to i\infty$. But using \eqref{eq:polesPhi} we see that there is a pole at $ \text{Im}(\tau)=22/\varepsilon$ corresponding to $n_2=0, m_1=1, n_1=-2, j=-3$. Thus this deformation is not allowed. More precisely, wall crossing formula gives a nonzero jump in the degeneracy if we cross this pole. Let us explicitly demonstrate this. If the single-centered degeneracy for charges $(n,\ell,m)$ had been given by the Fourier expansion coefficients of $\phi^F_m$, then the degeneracy $d^*(10,14,5)$ would have been given by the coefficient of $p^{10}\zeta^{14}$ of $\phi_5^F(\sigma,z)$ where $p=e^{2\pi i\sigma}, \zeta=e^{2\pi i z}$ as before. By elliptic invariance \eqref{eq:Adeftrans}, this is same as the coefficient of $p\zeta^4$. By explicit computation on Mathematica, we found the coefficient to be $2023536$. Thus we find \footnote{The first sign of trouble comes from the negative sign, since it violates the positivity conjecture of \cite{Sen:2011ktd}.} 
\begin{equation}
    d^*_{\text{DMZ}}(10,14,5)=-2023536.
\end{equation}
On the other hand, following the prescription in \cite{Sen:2011ktd}, $d^*(10,14,5)$ can be computed by first bringing the charge in the range given in Eq. (3.7) of \cite{Sen:2011ktd} and then computing the corresponding coefficient of the partition function. By the S-duality transformation (which leaves the single-centered degeneracies invariant), 
the charge can be brought to 
\begin{eqnarray}
    (10,14,5)\mapsto (1,0,1).
\end{eqnarray}
From \cite[Table 1]{Sen:2011ktd}, we find 
\begin{eqnarray}\label{eq:d*10145}
 d^*(10,14,5)= d^*(1,0,1)=50064.   
\end{eqnarray}
Let us now explicitly calculate the jump in the degeneracy across the wall at $\text{Im}(\tau)=\frac{22}{\varepsilon}$. The wall crossing formula for the decay $(\vec{Q},\vec{P})\to (\vec{Q}_1,\vec{P}_1)+(\vec{Q}_2,\vec{P}_2)$ across a wall of marginal stability 
is given by \cite[Eq. (5.5.17)]{Sen:2007qy} 
\begin{equation}
    (-1)^{\vec{Q}_1 \cdot \vec{P}_2-\vec{Q}_2 \cdot \vec{P}_1+1}\left|\vec{Q}_1 \cdot \vec{P}_2-\vec{Q}_2 \cdot \vec{P}_1\right| d_{\text {half}}(\vec{Q}_1, \vec{P}_1) d_{\text {half}}(\vec{Q}_2, \vec{P}_2),
\end{equation}
where $d_{\text {half}}(\vec{Q}_i, \vec{P}_i)$ denotes the number of bosonic minus fermionic half BPS supermultiplets carrying charges $(\vec{Q}_i, \vec{P}_i)$. The pole  at \footnote{This corresponds to a pole of the partition function with $n_2=0$ in \eqref{eq:polesPhi}.}
\begin{equation}
    n_1 \sigma+j z-m_1 \tau+m_2=0,
\end{equation}
corresponds to the wall of marginal stability 
associated with the
decay (see for example Eq. (2.21), Eq. (2.22) of \cite{Mandal:2010cj}) \footnote{The convention used in \cite{Mandal:2010cj} is related to the current convention by an exchange of $\rho$ and $\sigma$.}:
\begin{equation} \label{eq:edecay}
    (\vec{Q},\vec{P}) \to (\alpha \vec{Q}+\beta \vec{P}, \gamma \vec{Q}+\delta \vec{P})
    + ( (1-\alpha) \vec{Q}-\beta \vec{P}, -\gamma \vec{Q}+(1-\delta)\vec{P})\, ,
\end{equation}
where
\begin{equation}\label{eq:edecayints}
    \gamma=n_1, \qquad \beta=m_1, \qquad \delta={1-j\over 2}, \qquad 
    \alpha={j+1\over 2}\, .
\end{equation}
For $m_1=1,n_1=-2,j=-3$ the decay is 
\begin{equation}
 (\vec{Q},\vec{P}) \to (\vec{P}-\vec{Q},2(\vec{P}-\vec{Q}))+(2\vec{Q}-\vec{P},2\vec{Q}-\vec{P}).   
\end{equation}
By S-duality, the charges $(\vec{P}-\vec{Q},2(\vec{P}-\vec{Q}))$ and $(2\vec{Q}-\vec{P},2\vec{Q}-\vec{P}))$  can be transformed to purely magnetic charges $(0,\vec{P}-\vec{Q})$ and $(0,2\vec{Q}-\vec{P})$ respectively. %More precisely, the S-duality transformation is
%\begin{equation}\begin{split} &\begin{pmatrix}   1&-2\\0&1\end{pmatrix}\begin{pmatrix}2(\vec{Q}-\vec{P})\\\vec{Q}-\vec{P}  \end{pmatrix}=\begin{pmatrix}0\\\vec{Q}-\vec{P}  \end{pmatrix},\\ &\begin{pmatrix} 1&-1\\0&1\end{pmatrix}\begin{pmatrix} -\vec{Q}+2\vec{P}\\-\vec{Q}+2\vec{P} \end{pmatrix}=\begin{pmatrix}0\\ -\vec{Q}+2\vec{P}  \end{pmatrix}.\end{split}\end{equation}
The degeneracies $d_{\text {half}}(\vec{Q}_i, \vec{P}_i)$ can thus be computed using the coefficients of the reciprocal $\Delta^{-1}$ of the Ramanujan's discriminant function. To compute these degeneracies for our triplet of charges $(n,\ell,m)=(10,14,5)$, we compute:
\begin{equation}
\begin{split}
&\frac{(\vec{P}-\vec{Q})^2}{2} =\frac{Q^2+P^2-2Q\cdot P}{2}=n+m-\ell
=1,\\&\frac{(2\vec{Q}-\vec{P})^2}{2} =\frac{4Q^2+P^2-4Q\cdot P}{2}=4m+n-2\ell
=2.       
\end{split}
\end{equation}
Thus $d_{\text {half}}(\vec{Q}_1, \vec{P}_1)$ and $d_{\text {half}}(\vec{Q}_2, \vec{P}_2)$ are the coefficients of $q^2$ and $q^3$ respectively of $\Delta(q)^{-1}$. Explicitly, we have 
\begin{eqnarray}
d_{\text {half}}(\vec{Q}_1, \vec{P}_1)=324,\quad d_{\text {half}}(\vec{Q}_2, \vec{P}_2)= 3200.    
\end{eqnarray}
Moreover, we have 
\begin{eqnarray}
(-1)^{\vec{Q}_1 \cdot \vec{P}_2-\vec{Q}_2 \cdot \vec{P}_1+1}\left|\vec{Q}_1 \cdot \vec{P}_2-\vec{Q}_2 \cdot \vec{P}_1\right|=-2.    
\end{eqnarray}
Thus the jump is given by 
\begin{eqnarray}
    d_{\text{jump}}(10,14,5)=-2\cdot 324\cdot 3200=-2073600.
\end{eqnarray}
From this, we see that 
\begin{eqnarray}
 d^*_{\text{DMZ}}(10,14,5)-d_{\text{jump}}(10,14,5)=-2023536+2073600=50064    
\end{eqnarray}
which matches with the degeneracy \eqref{eq:d*10145}.\par Another such example is the charge $(n,\ell,m)=(13,19,7)$ where we have nonzero jump in the degeneracy.

It is easy to check that neither (10,14,5) nor (13,19,7) belong to the set $B$. Thus, they provide examples of charges that cannot be directly related to the Fourier coefficients of mock Jacobi forms.

Finally, we note that while the coefficients of mock modular forms do not capture single centered black hole degeneracies for these charges, they do capture the index in a different chamber of the moduli space. The precise description of this chamber can be read out from the prescription given in \cite{Cheng:2007ch} relating the location of the integration contour to the asymptotic values of the closed string moduli. As we change the asymptotic moduli from this value to the attractor values where only single centered black holes exist, we encounter a codimensional one wall of marginal stability across which the index jumps\cite{Sen:2007vb,Dabholkar:2006bj}.

\subsection{Mock modularity in $\mathbb{Z}_M\times\mathbb{Z}_N$ orbifolds CHL models}
Finally, we comment on the validity of the above prescription to get single-centered degeneracy for $\mathbb{Z}_M\times\mathbb{Z}_N$ orbifolds of Type IIB superstring theory compactified on $\mathcal{M}\times S^1\times\widetilde{S}^1$ where $\mathcal{M}=K3$ or $T^4$ and $\mathbb{Z}_M$ acts on $S^1$ by shifts of $1/M$ units and $\mathbb{Z}_N$ acts on $\widetilde{S}^1$ by shifts of $1/N$ units, besides respective actions on $K3$. 
%One can obtain the explicit description of the group $G'$ by determining the S-duality group $\Gamma$, which is a subgroup of $\mathrm{SL}(2,\mathbb{Z})$, and then demanding that $G'$ be the maximal subgroup of $\mathrm{Sp}(2,\mathbb{Z})$ containing (the embedded subgroup) $\Gamma$ under the embedding (see \eqref{eq:g_123def}) \begin{equation}\begin{pmatrix}a&b\\c&d\end{pmatrix}\mapsto g_1(a,b,c,d)\in\mathrm{Sp}(2,\mathbb{Z}).\end{equation}But we will not need the explicit form of $G'$.
We can determine the S-duality group by demanding that they preserve the vectors $(1/M,0)$ and $(0,1/N)$ on the fundamental parallelogram defining the torus $T^2$ in the compactification space $\mathcal{M}\times T^2$. It is easy to see that under the action 
\begin{equation}
    \begin{pmatrix}
        x\\y
    \end{pmatrix}\mapsto \begin{pmatrix}
        a&b\\c&d
    \end{pmatrix}\begin{pmatrix}
        x\\y
    \end{pmatrix},\quad \begin{pmatrix}
        a&b\\c&d
    \end{pmatrix}\in\mathrm{SL}(2,\mathbb{Z})
\end{equation}
the subgroup of $\mathrm{SL}(2,\mathbb{Z})$ which leaves  $(1/M,0)$ and $(0,1/N)$ invariant is given by 
\begin{equation}
    \Gamma(M,N):=\left\{\begin{pmatrix}
        a&b\\c&d
    \end{pmatrix}\in\mathrm{SL}(2,\mathbb{Z}):\begin{array}{c}
a \equiv 1(\bmod~M),~~ b \equiv 0(\bmod ~M) \\
c \equiv 0(\bmod~ N),~~ d \equiv 1(\bmod~N)
\end{array} \right\}.
\end{equation}
It can be seen that $\Gamma(M,N)$ is a congruence subgroup of level $MN$.\par  The partition function $\Phi^{(M,N)}(\tau,z,\sigma)$ and the prescription for the calculation of single-centered degeneracy appeared in \cite{Sen:2010ts}. The partition function is again a Siegel modular form for some subgroup $G'$ of $\mathrm{Sp}(2,\mathbb{Z})$ that contains the S-duality group. 
Expanding the partition function in $p=e^{2\pi i\sigma}$ first, we obtain a family of meromorphic Jacobi forms of rational index $n/M$ for $\Gamma(M,N)\ltimes (M\mathbb{Z}\times\mathbb{Z})$. The location of poles of these meromorphic Jacobi forms are again at $M\mathbb{Z}\sigma+\mathbb{Z}$. One can then apply the general formalism of Section \ref{subsec:dmzmockjacform} and Section \ref{rational_index} to write down the polar and finite parts of these meromorphic Jacobi forms. The finite part is again a mock Jacobi form. Similarly, one can expand in $q=e^{2\pi i\tau}$ first to obtain mock Jacobi forms for the group $S\Gamma(M,N)S^{-1}\ltimes(N\mathbb{Z}\times\mathbb{Z})=\Gamma(N,M)\ltimes(N\mathbb{Z}\times\mathbb{Z})$ with index $m/N$, where 
\begin{equation}
    S=\begin{pmatrix}
        0&-1\\1&0
    \end{pmatrix}\in\mathrm{SL}(2,\mathbb{Z}).
\end{equation}
\noindent\textbf{Acknowledments.} We would like to thank Nabamita Banerjee and Suvankar Dutta for discussions and collaboration at the initial stages of this work. We would also like to thank Atish Dabholkar and Sameer Murthy for useful discussions. R.K.S. would like to thank IISER Bhopal and ICTS Bangalore for hospitality where part of this work was done. R.K.S. would also like to thank Madhav Sinha and Spencer Stubbs for help with Mathematica code. Finally we thank the anonymous referees for comments which led to improvements in the paper.  The work of R.K.S. is supported by the US Department of Energy under grant DE-SC0010008. The work of A.B. was supported by grant MTR/2019/000582 from the SERB, Government of India. A.B. would like to thank ICTS Bangalore for hospitality where part of this work was done. 
The work of A.S. was supported by ICTS-Infosys Madhava 
Chair Professorship, the J. C. Bose fellowship of the Department of Science
and Technology, India and the Department of Atomic Energy, Government of India, under project no. RTI4001.
\begin{appendices}
\section{Transformation properties of $\widetilde{\Phi}_k$}\label{app:Phiktrans}
In this appendix, we prove some transformation properties of the reciprocal of the partition function given by
$\widetilde{\Phi}_k$. The Siegel modular form $\widetilde{\Phi}_k$ is related to another Siegel modular form $\Phi_k$ as \footnote{We have ignored a constant in the relation which will not be important for the discussion below.} follows \cite[Eq. (C.21)]{Sen:2007qy}
\begin{equation}
\widetilde{\Phi}_k(\tau,z,\sigma)=(N\tau)^{-k}\Phi_k\left(\frac{\tau\sigma-z^2}{N\tau},\frac{\tau\sigma-z^2+z}{N\tau},\frac{\tau\sigma-(z-1)^2}{N\tau}\right).
\label{eq:PhitildePhirel}
\end{equation}
$\Phi_k$ is a Siegel modular form of weight $k$ with respect to a subgroup $G$ of $\mathrm{Sp}(2,\mathbb{Z})$ generated by
\begin{equation}\label{eq:g_123def}
\begin{aligned}
g_{1}(a, b, c, d) & \equiv\left(\begin{array}{cccc}
a & 0 & b & 0 \\
0 & 1 & 0 & 0 \\
c & 0 & d & 0 \\
0 & 0 & 0 & 1
\end{array}\right), ~~a d-b c=1, \quad c=0 \bmod N, \quad a, d=1 \bmod N,\\
g_{2} & \equiv\left(\begin{array}{cccc}
0 & 1 & 0 & 0 \\
-1 & 0 & 0 & 0 \\
0 & 0 & 0 & 1 \\
0 & 0 & -1 & 0
\end{array}\right), \\
g_{3}(\lambda, \mu) & \equiv\left(\begin{array}{cccc}
1 & 0 & 0 & \mu \\
\lambda & 1 & \mu & 0 \\
0 & 0 & 1 & -\lambda \\
0 & 0 & 0 & 1
\end{array}\right),~~\lambda, \mu \in \mathbb{Z}.
\end{aligned}
\end{equation} 
It follows from \eqref{eq:PhitildePhirel}, 
\eqref{eq:g_123def} that these Siegel modular forms satisfy the following periodicity properties \cite{Sen:2011ktd}:
\begin{equation}\label{eq:Phitperiodtsz}
\begin{split}
\Phi_k(\tau+1,z,\sigma)=\Phi_k(\tau,z+1,\sigma)=\Phi_k(\tau,z,\sigma+1)\\\widetilde{\Phi}_k(\tau+1,z,\sigma)=\widetilde{\Phi}(\tau,z+1,\sigma)=\widetilde{\Phi}(\tau,z,\sigma+N).
\end{split}
\end{equation}
One can use the relation \eqref{eq:PhitildePhirel} to find the subgroup $\widetilde{G}$ from $G$. We will content ourselves by proving certain transformation properties of $\widetilde{\Phi}_k$ used in Section \ref{sec:mockjacchl}. We begin by noting that 
$\widetilde{G}$ has a subgroup consisting of the elements \cite[Section 5]{David:2006ud}
\begin{equation}
g_4(a,b,cN,d):=\begin{pmatrix}
a&b&0&0\\cN&d&0&0\\0&0&d&-cN\\0&0&-b&a
\end{pmatrix},\quad ad-bcN=1,\quad a,d\equiv 1(\bmod~N).
\end{equation}
This means that 
\begin{equation}
\widetilde{\Phi}_k\underset{k}{|}g_4(a,b,cN,d)=\widetilde{\Phi}_k.
\label{eq:Phitildeg4}
\end{equation}
This was used for proving the S-duality invariance of dyon degeneracy in Section \ref{sec:chl_partition_function}.
\begin{prop}
We have that 
\begin{equation}
\begin{split}
&\widetilde{\Phi}_k\underset{k}{|}g_1(a,b,cN,d)=\widetilde{\Phi}_k\\
&\widetilde{\Phi}_k\underset{k}{|}g^1(a,bN,c,d)=\widetilde{\Phi}_k,\quad ad-bcN=1,\quad a,d\equiv 1(\bmod~N),
\end{split}
\end{equation}
and 
\begin{equation}
\begin{split}
&\widetilde{\Phi}_k\underset{k}{|}g_3(\lambda N,\mu)=\widetilde{\Phi}_k\\&\widetilde{\Phi}_k\underset{k}{|}g^3(\lambda,\mu)=\widetilde{\Phi}_k,\quad \lambda,\mu\in\mathbb{Z},
\end{split}
\end{equation}
where 
\begin{equation}
\begin{split}
g^1(a,b,c,d):=\begin{pmatrix}
1&0&0&0\\0&a&0&b\\0&0&1&0\\0&c&0&d
\end{pmatrix},\quad g^3(\lambda,\mu):=\begin{pmatrix}
1&\lambda&0&\mu\\0&1&\mu&0\\0&0&1&0\\0&0&-\lambda&1
\end{pmatrix}.
\end{split}
\end{equation}
\label{prop:Phitransg1}
\begin{proof}
Since $\Phi_k$ transforms as a Siegel modular form with respect to $G$, we can write 
\begin{equation}
\begin{split}
&\Phi_{k}\underset{k}{|}g_1(a,b,cN,d)=\Phi_{k},\quad ad-bcN=1,\quad a,d\equiv 1(\bmod~N);\\&\Phi_{k}\underset{k}{|}g_2=\Phi_{k};\\&\Phi_{k}\underset{k}{|}g_3(\lambda,\mu)=\Phi_{k},\quad \lambda,\mu\in\mathbb{Z}.
\end{split}
\label{eq 4.4}
\end{equation}
We begin by simplifying \eqref{eq:PhitildePhirel}. Using $\Phi_{k}\underset{k}{|}g_3(-1,0)=\Phi_{k}$ we get 
\begin{equation}
   \Phi_{k}(\tau,z-\tau,\sigma+\tau-2z)=\Phi_{k}(\tau,z,\sigma). 
\end{equation}
Using this along with \eqref{eq:PhitildePhirel}, we get
\begin{equation}\label{eq:PhitPhirelsim}
 \widetilde{\Phi}_k(\tau,z,\sigma)=(N\tau)^{-k}\Phi_k\left(\frac{\sigma}{N}-\frac{z^2}{N\tau},\frac{z}{N\tau},-\frac{1}{N\tau}\right).   
\end{equation}
Now put 
\begin{equation}
W_N:=\frac{1}{\sqrt{N}}\begin{pmatrix}
0&N&0&0\\1&0&0&0\\0&0&0&1\\0&0&N&0
\end{pmatrix},\quad\mu_N:=\left(\begin{array}{cccc}
1 & 0 & 0 & 0 \\
0 & 0 & 0 & -\frac{1}{N} \\
0 & 0 & 1 & 0 \\
0 & N & 0 & 0
\end{array}\right).
\end{equation}
Note that 
\begin{equation}
\begin{split}
(g_2g_1(0,-1,1,0)W_Ng_2)\circ \Omega = \left(\begin{array}{cr}
\frac{\sigma }{N}-\frac{z ^2}{\tau N}&\frac{z }{\tau N}\\ \; &\;\\ \frac{z }{\tau N}&-\frac{1}{\tau N}
\end{array}\right)
\end{split}
\end{equation}
Hence using \eqref{eq:PhitPhirelsim} we have that 
\begin{equation}
\widetilde{\Phi}_k(\Omega)=\left(\Phi_k\underset{k}{|}g_2g_1(0,-1,1,0)W_Ng_2\right)(\Omega).
\label{eq 4.5}
\end{equation}
Since $g_2\in G$, we can write this as  
\begin{equation}\label{eq:Phislash=Phiti}
\widetilde{\Phi}_k(\Omega)=\left(\left(\Phi_k\underset{k}{|}g_2\right)\underset{k}{|}g_1(0,-1,1,0)W_Ng_2\right)(\Omega)=\left(\Phi_k\underset{k}{|}g_1(0,-1,1,0)W_Ng_2\right)(\Omega),
\end{equation}
where we used \eqref{eq 4.4} in the last equality.
Moreover, we have that 
\begin{equation}
g_1(0,-1,1,0)W_N=W_N\mu_N.
\end{equation} 
This implies that
\begin{equation}
\widetilde{\Phi}_k(\Omega)=\left(\Phi_k\underset{k}{|}W_N\mu_Ng_2\right)(\Omega).
\label{eq 4.6}
\end{equation} 
It is easy to see that
\begin{equation}
\begin{split}
g_2g^1(a,b,c,d)=g_1(a,b,c,d)g_2,\\
\mu_Ng_1(a,b,c,d)=g_1(a,b,c,d)\mu_N,\\
W_Ng_1(a,bN,c,d)=g_2^{-1}g_1(a,b,cN,d)g_2W_N;
\end{split}
\label{eq 4.7}
\end{equation}
and 
\begin{equation}
\begin{split}
g_2g^3(\lambda,\mu)=g_3(-\lambda,-\mu)g_2,\\
W_Ng_3(-\lambda,-\mu)=g^3(-N\lambda,-\mu)W_N,\\
g_1(0,-1,1,0)g^3(0,-\mu)=g_3(\mu,0)g_1(0,-1,1,0).
\end{split}
\label{eq 4.8}
\end{equation}
Using  \eqref{eq 4.6}, \eqref{eq 4.7}, and  \eqref{eq 4.4}, we get 
\begin{equation}
\widetilde{\Phi}_k\underset{k}{|}g^1(a,bN,c,d)=\Phi_k\underset{k}{|} W_N\mu_Ng_2g^1(a,bN,c,d)=\widetilde{\Phi}_k,\quad ad-bcN=1,\quad a,d\equiv 1(\bmod ~N).
\end{equation}
Also using \eqref{eq:Phislash=Phiti}, \eqref{eq 4.8} and \eqref{eq 4.4}, we get,
\begin{equation}
\widetilde{\Phi}_k\underset{k}{|}g^3(0,\mu)=\Phi_k\underset{k}{|}g_1(0,-1,1,0) W_Ng_2g^3(0,\mu)=\widetilde{\Phi}_k.
\label{eq 4.9}
\end{equation}
Moreover we have 
\begin{equation}\label{eq:g^3g_1g_4rellm}
g^3(\lambda,\mu)=g_1(1,-\lambda\mu,0,1)g_4(1,\lambda,0,1)g^3(0,\mu).
\end{equation}
$g_1(1,-\lambda\mu,0,1)$ leaves $\widetilde{\Phi}_k$ invariant as a consequence of the $\tau\to\tau+1$ symmetry described in the second line of \eqref{eq:Phitperiodtsz}. The other elements on the right hand side of \eqref{eq:g^3g_1g_4rellm} have already been shown to leave  $\widetilde{\Phi}_k$ invariant. Therefore, we have

\begin{equation}
\widetilde{\Phi}_k\underset{k}{|}g^3(\lambda,\mu)=\widetilde{\Phi}_k.
\label{eq 4.11}
\end{equation}
We can also check that 
\begin{equation}
\begin{split}
g_2g_1(a,b,c,d)=g^1(a,b,c,d)g_2,\\
\mu_Ng^1(a,b,c,d)=g^1(d,-c/N^2,-bN^2,a)\mu_N,\\
W_Ng^1(d,-c/N^2,-bN^2,a)=g_1(d,-c/N,-bN,a)W_N;\\
%g_2g^1(d,-c/N,-bN,a)=g_1(d,-c/N,-bN,a)g_2
\end{split}
\label{eq 4.7'}
\end{equation}
and 
\begin{equation}
\begin{split}
g_2g_3(\lambda,\mu)=g^3(-\lambda,-\mu)g_2,\\
W_Ng^3(-N\lambda,-\mu)=g_3(-\lambda,-\mu)W_N,\\
g_1(0,-1,1,0)g_3(\lambda,\mu)=g_3(-\mu,\lambda)g_1(0,-1,1,0).
\end{split}
\label{eq 4.8'}
\end{equation}
These can be used to prove the remaining parts of the statement following the same procedure.
\end{proof}
\end{prop}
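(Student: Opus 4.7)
The plan is to leverage the known transformation properties of $\Phi_k$ under $G$ via the intertwining relation \eqref{eq:PhitildePhirel}, and translate them into transformation properties of $\widetilde{\Phi}_k$. First I would simplify the right-hand side of \eqref{eq:PhitildePhirel}. The $\tau$- and $\sigma$-slots of $\Phi_k$ differ from $\sigma/N-z^2/(N\tau)$ and $-1/(N\tau)$ only by a shift that happens to lie in the orbit of $g_3$; specifically, since $\Phi_k\underset{k}{|}g_3(-1,0)=\Phi_k$ gives $\Phi_k(\tau,z-\tau,\sigma+\tau-2z)=\Phi_k(\tau,z,\sigma)$, the relation \eqref{eq:PhitildePhirel} collapses to a cleaner form in which the argument of $\Phi_k$ is manifestly the image of $\Omega$ under a single explicit symplectic matrix.

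Once \eqref{eq:PhitildePhirel} is rewritten this way, I would identify a fixed symplectic matrix $M_0\in\mathrm{Sp}(2,\mathbb{R})$ (in practice a product of $g_2$, $g_1(0,-1,1,0)$, and a ``Fricke-like'' involution $W_N$ whose action on $\Omega$ implements the partial inversion $\tau\mapsto -1/(N\tau)$) such that $\widetilde{\Phi}_k=\Phi_k\underset{k}{|}M_0$. Because the slash action is a group action, invariance of $\widetilde{\Phi}_k$ under a target element $g$ is equivalent to $\Phi_k\underset{k}{|}(M_0 g)=\Phi_k\underset{k}{|}M_0$; this would follow from $M_0 g=h\,M_0$ for some $h\in G$, since $\Phi_k$ is $G$-invariant.

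The main work is therefore purely combinatorial: for each of the four target elements $g_1(a,b,cN,d)$, $g^1(a,bN,c,d)$ (both with $ad-bcN=1$, $a,d\equiv 1\pmod N$), $g_3(\lambda N,\mu)$, and $g^3(\lambda,\mu)$, exhibit a conjugating element $h\in G$ satisfying $M_0 g=hM_0$. The cases split into two families: conjugation of $M_0$ by a ``$g_1$-type'' matrix on the right produces a ``$g_1$-type'' matrix on the left after passing through $g_2$ and $W_N$ (the first two transformations), while conjugation by a ``$g_3$-type'' matrix on the right produces a ``$g_3$-type'' matrix on the left (the last two). In each case the congruence conditions $c\equiv 0\pmod N$, $a,d\equiv 1\pmod N$ required for $h$ to lie in $G$ match exactly the assumptions on the target element, which is the point of introducing $W_N$ with the factor $N$ in its off-diagonal block.

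The main obstacle I anticipate is bookkeeping: confirming the commutation relations $g_2 g^1=g_1 g_2$, $W_N g_1(\cdot,bN,\cdot,\cdot)=g_2^{-1}g_1(\cdot,b,\cdot N,\cdot)g_2 W_N$, $W_N g_3(-\lambda,-\mu)=g^3(-N\lambda,-\mu)W_N$, and the analogous relations involving $\mu_N$ requires direct $4\times 4$ matrix computation, and one must be careful that $W_N$ itself is not in $\mathrm{Sp}(2,\mathbb{Z})$ but only conjugates $G$-elements to $\widetilde{G}$-elements in a controlled way. For $g^3(\lambda,\mu)$ with $\lambda\ne 0$ the direct commutation approach does not immediately give an element of $G$; I would handle this by first proving $\widetilde{\Phi}_k\underset{k}{|}g^3(0,\mu)=\widetilde{\Phi}_k$ via the commutation calculation, then bootstrapping to general $(\lambda,\mu)$ by factoring $g^3(\lambda,\mu)$ as a product involving $g^3(0,\mu)$, the already-established $g_4(1,\lambda,0,1)$ invariance \eqref{eq:Phitildeg4}, and the $\tau$-periodicity $g_1(1,-\lambda\mu,0,1)$ from \eqref{eq:Phitperiodtsz}.
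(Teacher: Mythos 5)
Your proposal is correct and follows essentially the same route as the paper's proof: rewriting \eqref{eq:PhitildePhirel} via $\Phi_k\underset{k}{|}g_3(-1,0)=\Phi_k$ to express $\widetilde{\Phi}_k$ as $\Phi_k$ slashed by a fixed matrix built from $g_2$, $g_1(0,-1,1,0)$ and $W_N$, then pushing the target elements through via the commutation relations \eqref{eq 4.7}--\eqref{eq 4.8'}, and finally bootstrapping $g^3(\lambda,\mu)$ from $g^3(0,\mu)$ using the factorization through $g_4(1,\lambda,0,1)$ and the $\tau$-periodicity. The key relations you anticipate are exactly the ones the paper verifies.
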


\section{Deformation of the contour $\mathcal{C}$ to $\mathcal{C}'$ and $\mathcal{C}''$ without crossing a pole}\label{app:Cdeform}
In this appendix, we determine under what condition the attractor
contour $\mathcal{C}$ can be deformed to the contour
$\mathcal{C}'$ at $\tau=i\infty$ with fixed $\text{Im}(\sigma)$ and
$\text{Im}(z)$ and to the contour
$\mathcal{C}''$ at $\sigma=i\infty$ with fixed $\text{Im}(\tau)$ 
and
$\text{Im}(z)$ without crossing a pole. We shall focus on charges $(\vec{Q},\vec{P})$ satisfying $Q^2P^2-(Q\cdot P)^2>0$, since single-centered black holes exist only for such charges. For $P^2=2n,Q^2=\frac{2m}{N},Q\cdot P=\ell$, this translates to 
\begin{eqnarray}
    \frac{4mn}{N}-\ell^2>0.
\end{eqnarray}
\subsection{$\mathcal{C}\longrightarrow\mathcal{C}',\mathcal{C}''$} \label{sectionb1}
Let us first consider the deformation from  $\mathcal{C}\longrightarrow\mathcal{C}'$. \par
Let $A$ be the set of triples $(n, \ell, m) \in \mathbb{Z}^3$ satisfying $ m,n>0,\left(4\frac{n}{N}m- \ell^2\right)> 0$ and 

\begin{equation}
    \hbox{Min}_{m_1\in N\mathbb{Z},\, s\in \mathbb{Z},\, s(s+1)= 0\,  {\rm mod}\, m_1
    }
    \left[\left({m_1\ell+2ms +m\over m}\right)^2 +{m_1^2\over m^2}
    \left(4 {n\over N} m -\ell^2\right)\right] \ge 1\, . 
    \label{ediapheq3}
\end{equation}
We have the following result. 

\begin{prop}
If $(n, \ell, m) \in A$, then the partition function $\widetilde{\Phi}_k^{-1}$ does not have poles in the region bounded by the contours $\mathcal{C}$ and $\mathcal{C}^{\prime}$.
\end{prop}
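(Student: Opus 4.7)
The plan is to rewrite the pole locus \eqref{eq:polesPhi} into a tractable factored form and then compute, for each wall, the value of $\text{Im}(\tau)$ at which the wall meets the deformation region. Using the constraint $m_1 n_1 + m_2 n_2 + j^2/4 = 1/4$ with $j=2s+1$, so that $m_1n_1+m_2n_2=-s(s+1)$, equation \eqref{eq:polesPhi} becomes
\begin{equation*}
(n_2\tau-n_1)(n_2\sigma-m_1)=(n_2z-s)(n_2z-s-1).
\end{equation*}
The region between $\mathcal{C}$ and $\mathcal{C}'$ is parametrised by keeping $\text{Im}(\sigma)=2m/\varepsilon$ and $\text{Im}(z)=-\ell/\varepsilon$ fixed, letting $\text{Im}(\tau)\in[2n/(N\varepsilon),\infty)$, and letting the real parts range over the fundamental parallelogram permitted by the periodicities \eqref{eq:Phitperiodtsz}.

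For the principal case $n_2=0$ the pole equation is affine, and the arithmetic constraint reduces to $m_1 n_1=-s(s+1)$, so $m_1\mid s(s+1)$ and $n_1=-s(s+1)/m_1$. Assuming $m_1\neq 0$, the imaginary part of the pole equation determines the pole's $\text{Im}(\tau)$ explicitly as
\begin{equation*}
\tau_2^{\text{pole}}=\frac{-(2s+1)\ell m_1-2m\,s(s+1)}{m_1^2\,\varepsilon}.
\end{equation*}
The pole lies in the deformation region iff $\tau_2^{\text{pole}}\ge 2n/(N\varepsilon)$; clearing denominators, expanding, and completing the square gives the equivalence
\begin{equation*}
\tau_2^{\text{pole}}\le\frac{2n}{N\varepsilon}\;\Longleftrightarrow\;\left(\frac{m_1\ell+2ms+m}{m}\right)^2+\frac{m_1^2}{m^2}\left(\frac{4nm}{N}-\ell^2\right)\ge 1,
\end{equation*}
which is precisely \eqref{ediapheq3} for the pair $(m_1,s)$. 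The $m_1=0$ subcase is automatic, since those poles are $\tau$-independent and the bracket in \eqref{ediapheq3} evaluates to $1$. Hence $(n,\ell,m)\in A$ is exactly the condition that excludes every $n_2=0$ wall from the deformation region.

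For the $n_2\neq 0$ walls the strategy is to fold them onto $n_2=0$ walls using the symplectic symmetries of $\widetilde\Phi_k$. Because $\widetilde\Phi_k^{-1}$ is $\widetilde G$-invariant (Appendix \ref{app:Phiktrans}) and the attractor contour $\mathcal{C}$ transforms under S-duality \eqref{eq:sdualtrans} into the attractor contour of the transformed charge tuple $(n',\ell',m')$, the question of whether an $n_2\neq 0$ wall intersects the region between $\mathcal{C}$ and $\mathcal{C}'$ for the original charges is equivalent to the same question, for a wall with a smaller $|n_2|$ (and eventually $n_2=0$), at a transformed charge tuple. The elliptic invariance \eqref{eq:Adeftrans} of $A$ implements exactly this subgroup of S-duality on $(n,\ell,m)$, so the hypothesis $(n,\ell,m)\in A$ propagates to the image charges, and the $n_2=0$ analysis above then applies. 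The main obstacle is to verify that the $\widetilde G$-orbit of every pole quintuple $(n_1,n_2,m_1,m_2,j)$ meets the $n_2=0$ stratum via a transformation that preserves the set $A$; if this transitivity fails in edge cases, the fallback is to take imaginary parts of the factored pole equation directly, solve for $\text{Im}(\tau)$ as a function of the real parts, and maximise over the fundamental domain. That optimisation is a quadratic programme whose KKT conditions should again reproduce the combinatorial minimum in \eqref{eq:AdefMin}, confirming membership in $A$ as the sharp criterion.
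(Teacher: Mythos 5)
Your $n_2=0$ analysis is correct and is essentially the paper's: you take the imaginary part of the affine pole equation, solve for $\operatorname{Im}(\tau)$ of the wall, and show that excluding it from $\operatorname{Im}(\tau)\ge 2n/(N\varepsilon)$ is, after completing the square, exactly the defining inequality \eqref{ediapheq3} of the set $A$ (your explicit formula for $\tau_2^{\text{pole}}$ agrees with the paper's $\operatorname{Im}(\tau)=(2mn_1-j\ell)/(m_1\varepsilon)$ after substituting $n_1=-s(s+1)/m_1$, $j=2s+1$; note only that the factored form should read $(n_2\tau+n_1)(n_2\sigma-m_1)$, and that the equality case, where the wall sits on the attractor contour itself, still needs the vanishing-residue argument of the paper's Appendix B.2).

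The genuine gap is your treatment of the $n_2\neq 0$ walls. The paper disposes of these with a short scaling argument: on the attractor contour and throughout the deformation region one has $\operatorname{Im}(\sigma)\operatorname{Im}(\tau)-\operatorname{Im}(z)^2\ge (4nm/N-\ell^2)/\varepsilon^2>0$, so the imaginary part of $n_2(\sigma\tau-z^2)$ is sign-definite and of order $1/\varepsilon^2$, while the imaginary parts of $jz+n_1\sigma-m_1\tau+m_2$ are only of order $1/\varepsilon$; hence for sufficiently small $\varepsilon$ the pole equation has no solution with $n_2\neq 0$. Your proposed substitute — folding $n_2\neq 0$ walls onto $n_2=0$ walls by S-duality — does not work as stated, for two reasons beyond the transitivity issue you yourself flag as unverified. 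First, the region between $\mathcal{C}$ and $\mathcal{C}'$ is defined by sending $\operatorname{Im}(\tau)\to\infty$ at fixed $\operatorname{Im}(\sigma),\operatorname{Im}(z)$; a general element of the embedded $\Gamma_1(N)$ mixes $\tau$, $\sigma$ and $z$, so it does not carry this region to the corresponding deformation region for the transformed charges, and the "same question at a transformed charge tuple" is not the same question. Second, the set $A$ is invariant only under the parabolic subgroup generating \eqref{eq:Adeftrans}, not under all of $\Gamma_1(N)$ (indeed the paper's counterexamples such as $(10,14,5)\mapsto(1,0,1)$ show membership in $A$ is not a duality invariant), so the hypothesis $(n,\ell,m)\in A$ does not propagate to the image charges needed to reduce $|n_2|$. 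The fallback optimisation you sketch is not carried out, so the $n_2\neq 0$ case remains unproven in your argument; replacing it with the $1/\varepsilon^2$-dominance observation closes the gap.
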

\begin{proof}
We have seen in \eqref{eq:polesPhi} that the general structure of poles of the partition function is given by 
\begin{equation}
\begin{aligned}
& n_2\left(\sigma \tau-z^2\right)+j z+n_1 \sigma-m_1 \tau+m_2=0
\end{aligned}
\label{eq:polesPhiapp}
\end{equation}
where 
\begin{equation} \label{econtraintmn}
m_1 \in N \mathbb{Z},\quad n_1 \in \mathbb{Z},\quad j \in 2 \mathbb{Z}+1,\quad m_2, n_2 \in \mathbb{Z}, \quad m_1 n_1+m_2 n_2+\frac{j^2}{4}=\frac{1}{4} .
\end{equation} 
We also saw in \eqref{eq 3.3} that on the contour $\mathcal{C}$ we have
\begin{equation} \label{eimaginaryparts}
   \text{Im}(\tau) ={2n\over N\varepsilon}, \qquad
   \text{Im}(\sigma) ={2m\over \varepsilon}, \qquad
   \text{Im}(z) =-{\ell\over \varepsilon}\, ,
\end{equation}
and the real parts of the variables are bounded. We first claim that if $(\tau,z,\sigma)$ is a pole of the partition function, then we must have
\begin{equation}
n_1 \sigma-m_1 \tau+j z+m_2=0,
\label{eq:poln2=0}
\end{equation}
where
\begin{equation}
m_1 \in N \mathbb{Z}, \quad n_1, m_2 \in \mathbb{Z}, \quad j \in 2 \mathbb{Z}+1, \quad m_1 n_1+\frac{j^2}{4}=\frac{1}{4} .
\label{eq:polmnj}
\end{equation}
To see this, note that for sufficiently small $\varepsilon$, the dominant contribution to the LHS of \eqref{eq:polesPhiapp} comes from the term 
\begin{equation}
-n_2\left(\text{Im}(\sigma)\text{Im}(\tau)-\text{Im}(z)^2\right).
\end{equation} 
Now since $\text{Im}(\sigma)\text{Im}(\tau)-\text{Im}(z)^2>0$ in $\mathbb{H}_2$, the coefficient of $n_2$ in \eqref{eq:polesPhiapp} does not vanish. For sufficiently small $\varepsilon$ this term dominates in the region between $\mathcal{C}$ and $\mathcal{C}'$, and \eqref{eq:polesPhiapp} is not satisfied if $n_2\neq 0$. This means that if we want to look for poles in the region bounded by $\mathcal{C}$ and $\mathcal{C}'$ we must take $n_2=0$. So the only relevant poles are given by \eqref{eq:poln2=0}.\par  
Taking imaginary part of \eqref{eq:poln2=0} gives
\begin{equation} 
\begin{split}
& n_1 \operatorname{Im}(\sigma)-m_1 \operatorname{Im}(\tau)+j \operatorname{Im}(z)=0.
\end{split}
\end{equation}
Since the imaginary parts of 
$\sigma$ and $z$ are fixed at their values on the attractor contour 
given in \eqref{eimaginaryparts}, we get 
\begin{eqnarray}\label{eimaginarypart}
\operatorname{Im}(\tau) =\frac{2 m n_1-j \ell}{m_1 \varepsilon} =  \frac{2 m}{\varepsilon}\left(\frac{n_1-j \frac{\ell}{2 m}}{m_1}\right).   
\end{eqnarray}
Comparing \eqref{eimaginarypart} with  \eqref{eimaginaryparts} we see that, in order that there are no poles between $\mathcal{C}$ and
$\mathcal{C}'$, we must have
\begin{equation}
\frac{n_1-j \frac{\ell}{2 m}}{m_1}\le {n\over Nm}\, ,
\label{eq:mainineqalt}
\end{equation}  
for all allowed values of $j$, $n_1$ and $m_1$, so that all the relevant poles
occur for $\operatorname{Im}(\tau)\le 2n/(N\varepsilon)$. Special care is
needed to address the situation when the inequality in \eqref{eq:mainineqalt}
is replaced by an equality. In this case there is a pole on the attractor
contour, which signals that the attractor point is on a wall of marginal
stability. However the jump in the index across this wall vanishes, reflecting
that $\widetilde\Phi_k^{-1}$ has vanishing residue at the pole. This will be
demonstrated in Section \ref{sec:poleattcont}.  

Using \eqref{eq:polmnj} and $n_2=0$, we get
\begin{equation}
\frac{n_1}{m_1}=\frac{1-j^2}{4 m_1^2}
\end{equation}
Substituting this in \eqref{eq:mainineqalt}, the inequality becomes
\begin{equation}\label{ediapheq'}
\frac{1-j^2}{4 m_1^2}-\frac{j \frac{\ell}{2 m}}{m_1}-{n\over Nm}\le 0.
\end{equation}
By multiplying this by $4m_1^2$ we can express this as
\begin{equation}
    \left(j + \ell\, {m_1\over m}\right)^2 +{m_1^2\over m^2}
    \left(4 {n\over N} m -\ell^2\right) \ge 1\, .
    \label{ediapheq}
\end{equation}
This condition must hold for all
$m_1\in N\mathbb{Z}$, 
$j\in2\mathbb{Z}+1$,
$m_1|(j^2 -1)/4$. Writing $j=2s+1$ with $s\in\mathbb{Z}$ we can
express this as
\begin{equation}
    \hbox{Min}_{m_1\in N\mathbb{Z},\, s\in \mathbb{Z},\, s(s+1)= 0\,  {\rm mod}\, m_1
    }
    \left[\left({m_1\ell+2ms +m\over m}\right)^2 +{m_1^2\over m^2}
    \left(4 {n\over N} m -\ell^2\right)\right] \ge 1\, . 
    \label{ediapheq3}
\end{equation}
This is invariant under the transformation
\begin{equation} \label{einvariance}
(n/N,\ell, m) \mapsto (n/N+m\lambda_1^2+\lambda_1\ell, \ell+2m\lambda_1,
m), \qquad \lambda_1\in \mathbb{Z}\, ,
\end{equation}
since the term inside the square bracket as well as the conditions on
$m_1, s$ are invariant under this transformation as long as we transform
$s$ to $s-m_1\lambda_1$. This proves the desired result.
%We shall call $A$ the set of $(n,\ell,m)$ satisfying\eqref{ediapheq3}.
\end{proof}
\noindent Next we shall consider the deformation from $\mathcal{C}$ to $\mathcal{C}''$. \par Let $B$ be the set of triples $(n, \ell, m) \in \mathbb{Z}^3$ satisfying $ m,n>0,\left(4\frac{n}{N}m- \ell^2\right)> 0$ and 
\begin{equation}
    \hbox{Min}_{n_1,s\in\mathbb{Z},\, s(s+1)=0 \, {\rm mod}\, n_1} \left[\left({n_1 N\ell-2ns-n\over n}\right)^2 
    +{n_1^2N^2\over n^2}
    \left(4 {n\over N} m -\ell^2\right)\right] \ge 1\, . 
    \label{ediapheq322}
\end{equation}
Then we have the following result.
\begin{prop}
If $(n, \ell, m) \in B$, then the partition function $\widetilde{\Phi}_k^{-1}$ does not have poles in the region bounded by the contours $\mathcal{C}$ and $\mathcal{C}^{\prime\prime}$.    
\end{prop}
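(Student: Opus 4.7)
The plan is to mirror the argument just given for $\mathcal{C}\to\mathcal{C}'$, interchanging the roles of $\tau$ and $\sigma$. On $\mathcal{C}''$ we hold $\mathrm{Im}(\tau)=2n/(N\varepsilon)$ and $\mathrm{Im}(z)=-\ell/\varepsilon$ fixed while sending $\mathrm{Im}(\sigma)\to\infty$, so only the equation $n_2(\sigma\tau-z^2)+jz+n_1\sigma-m_1\tau+m_2=0$ with the constraints \eqref{econtraintmn} is available to produce a pole between the two contours. As before, in the attractor scaling all imaginary parts are of order $1/\varepsilon$, so the quadratic piece $-n_2(\mathrm{Im}(\sigma)\mathrm{Im}(\tau)-\mathrm{Im}(z)^2)$ is of order $1/\varepsilon^2$ and strictly negative when $n_2\neq 0$ (since the point lies in $\mathbb{H}_2$). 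For sufficiently small $\varepsilon$ this term cannot be cancelled by the remaining $O(1/\varepsilon)$ pieces, so any pole in the deformation region must have $n_2=0$.

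With $n_2=0$ the pole equation reduces to $n_1\sigma-m_1\tau+jz+m_2=0$ with $m_1n_1+j^2/4=1/4$. Taking imaginary parts and using the attractor values gives
\begin{equation}
\mathrm{Im}(\sigma)=\frac{2m_1 n/N+j\ell}{n_1\varepsilon}
=\frac{2n}{N\varepsilon}\cdot\frac{m_1+jN\ell/(2n)}{n_1}\, .
\end{equation}
For the pole to lie outside the deformation band $2m/\varepsilon\le \mathrm{Im}(\sigma)<\infty$ we need $(m_1+jN\ell/(2n))/n_1\le mN/n$ (with the boundary case $=$ handled, exactly as for $\mathcal{C}'$, by the vanishing-residue argument of Section \ref{sec:poleattcont}). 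Solving $m_1n_1+j^2/4=1/4$ for $m_1=(1-j^2)/(4n_1)$ and writing $j=2s+1$ with $s\in\mathbb{Z}$, the condition $m_1\in\mathbb{Z}$ forces $n_1\mid s(s+1)$; multiplying the inequality through by $4n_1^2$ and completing the square yields exactly
\begin{equation}
\left(\frac{n_1N\ell-2ns-n}{n}\right)^{\!2}
+\frac{n_1^2N^2}{n^2}\!\left(\frac{4nm}{N}-\ell^2\right)\ge 1\, .
\end{equation}
Requiring this for every $(n_1,s)$ satisfying the constraints is exactly the defining condition of $B$ in \eqref{ediapheq322}, so $(n,\ell,m)\in B$ precisely rules out all candidate poles in the deformation region.

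Two small points complete the argument. First, $n_1=0$ is compatible with $m_1n_1+j^2/4=1/4$ only when $j=\pm1$, in which case the pole equation is independent of $\sigma$; here the inequality degenerates to $(2s+1)^2\ge 1$, which is automatic, so these ``horizontal'' poles never obstruct the deformation. Second, the physical constraint is $m_1\in N\mathbb{Z}$ rather than merely $m_1\in\mathbb{Z}$, so the definition of $B$ actually tests more $(n_1,s)$ than strictly necessary, which only makes the hypothesis $(n,\ell,m)\in B$ more restrictive than required; this is still a sufficient condition. No obstacle of substance is expected: the whole argument is a transcription of the $\mathcal{C}\to\mathcal{C}'$ analysis under $\tau\leftrightarrow\sigma$, $m_1\leftrightarrow n_1$, $m\leftrightarrow n/N$, and the only care needed is in tracking the factors of $N$ coming from $Q^2/2=m/N$ versus $P^2/2=n$, together with handling the degenerate $n_1=0$ branch separately as above.
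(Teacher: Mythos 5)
Your proof is correct and follows essentially the same route as the paper, which simply transcribes the $\mathcal{C}\to\mathcal{C}'$ argument under the exchange of $\tau$ and $\sigma$ (and $m_1\leftrightarrow n_1$, $m\leftrightarrow n/N$) to arrive at the inequality $\bigl(j-\ell n_1N/n\bigr)^2+\tfrac{n_1^2N^2}{n^2}\bigl(4\tfrac{n}{N}m-\ell^2\bigr)\ge 1$ and hence the defining condition of $B$. Your two supplementary observations — that the degenerate $n_1=0$ branch yields $\sigma$-independent poles handled by the vanishing-residue argument, and that the definition of $B$ omits the constraint $m_1\in N\mathbb{Z}$ and is therefore only a sufficient (possibly over-restrictive) condition — are both accurate refinements that the paper leaves implicit.
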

\begin{proof}
The analysis is similar except that we deform the imaginary part of $\sigma$
to $i\infty$ keeping the imaginary parts of $\tau$ and $z$ fixed at the
values given in \eqref{eimaginaryparts}. This leads to the analog of
\eqref{eq:mainineqalt}
\begin{equation}
\frac{m_1+j \frac{N\ell}{2 n}}{n_1}\le {Nm\over n}\, ,
\label{eq:mainineqalt22}
\end{equation}
and the analog of \eqref{ediapheq},
\begin{equation}
    \left(j - \ell\, {n_1 N\over n}\right)^2 +{n_1^2 N^2\over n^2}
    \left(4 {n\over N} m -\ell^2\right) \ge 1\, .
    \label{ediapheq22}
\end{equation}
Demanding that this holds for all $j$ and $n_1$, we get
\begin{equation}
    \hbox{Min}_{n_1,s\in\mathbb{Z},\, s(s+1)=0 \, {\rm mod}\, n_1} \left[\left({n_1 N\ell-2ns-n\over n}\right)^2 
    +{n_1^2N^2\over n^2}
    \left(4 {n\over N} m -\ell^2\right)\right] \ge 1\, . 
\end{equation}
%We shall denote by $B$ the set of $(n,\ell,m)$ satisfying \eqref{ediapheq322}.
This is invariant under the transformation
\begin{equation} \label{esecondellip}
    (n/N,\ell,m)\mapsto (n/N, \ell+2n\lambda_2,
m+Nn\lambda_2^2+N\lambda_2\ell), \qquad \lambda_2\in\mathbb{Z}\, ,
\end{equation}
since the term inside the square bracket remains invariant under this
transformation accompanied by a shift $s\to s+Nn_1\lambda_2$. This proves the desired result.
\end{proof}
\subsection{Poles on the attractor contour}\label{sec:poleattcont}

We shall now show that when there is a pole on the attractor contour,
the residue of $\widetilde\Phi_k^{-1}$ at the pole vanishes and we can
deform the contour through such a pole. For this we first note that the
if the pole \eqref{eq:poln2=0} lies on the attractor contour \eqref{eimaginaryparts}, then we have,
\begin{equation} \label{epoleattr}
2m n_1 - 2 {n\over N} m_1 - j \ell = 0\, .
\end{equation}
We now use \eqref{eq:edecay} and \eqref{eq:edecayints} to conclude that the pole
at \eqref{eq:poln2=0} corresponds to the wall of marginal stability 
associated with the
decay:
\begin{equation} \label{edecay}
    (Q,P) \to (\alpha Q+\beta P, \gamma Q+\delta P)
    + ( (1-\alpha) Q-\beta P, -\gamma Q+(1-\delta)P)\, ,
\end{equation}
where
\begin{equation}
    \gamma=n_1, \qquad \beta=m_1, \qquad \delta={1-j\over 2}, \qquad 
    \alpha={j+1\over 2}\, .
\end{equation}
On the other hand on a marginal stability wall corresponding to the decay
$(Q,P)\to (Q_1,P_1)+(Q_2,P_2)$ the jump across the wall is proportional to
\begin{equation}
(Q_1\cdot P_2-Q_2\cdot P_1)\, ,
\end{equation}
which for \eqref{edecay} evaluates to
\begin{equation}
    \beta P^2 -\gamma Q^2 +(\alpha-\delta) Q\cdot P
    = -2 mn_1+ 2m_1{n\over N} + j\ell = 0\, ,
\end{equation}
where in the last step we used \eqref{epoleattr}. This shows that when
a pole of $\widetilde\Phi_k^{-1}$ lies on the attractor contour, the
jump across the corresponding wall of marginal stability vanishes, and hence
the residue of $\widetilde\Phi_k^{-1}$ at the pole also vanishes. This proves the
desired result.

As a corollary of this, we can also prove the following useful result.
When \eqref{ediapheq3} is satisfied and we can deform the $\tau$ contour
to $i\infty$ keeping the $\sigma$ and $z$ contours fixed, there remains the
possibility that there will be poles on the contour in the $(\sigma,z)$ plane.
To explore this possibility, we note that 
to satisfy
\eqref{eimaginarypart} for $\tau=i\infty$ we have to set $m_1=0$. 
\eqref{econtraintmn} now sets $j=1$ 
so that the poles are at:
\begin{equation} \label{epoless}
    n_1 \operatorname{Im}(\sigma)+ \operatorname{Im}(z)=0\, .
\end{equation}
On the other hand the integration contour for $z$ and $\sigma$ are at
\begin{equation}\label{esigmaz}
    \operatorname{Im}(\sigma)={2 m\over \varepsilon}, \qquad
    \operatorname{Im}(z) = -{\ell\over \varepsilon}=-{\ell\over 2m} \operatorname{Im}(\sigma) \\, .
\end{equation}
Now since we have set $m_1=0$, it follows 
that if any of the poles \eqref{epoless} lie on
\eqref{esigmaz}, they will also lie on the attractor contour \eqref{eimaginarypart}.
Therefore by our earlier result the residue at the pole vanishes and they
are harmless. We also see from \eqref{epoless}, \eqref{esigmaz} 
that as long as we keep the ratio $\operatorname{Im}(z)/\operatorname{Im}(\sigma)$
fixed at $-\ell/(2m)$, we can deform the imaginary
part of $\sigma$ without crossing any pole.

A similar argument can be used to show that when \eqref{ediapheq322} is satisfied and we
deform the $\sigma$ integration contour to $i\infty$, there is no pole
with non-vanishing residue in the $(\tau,z)$ plane as long as the
ratio $\operatorname{Im}(z)/\operatorname{Im}(\tau)$
fixed at $-N\ell/(2n)$.

\subsection{Special elements of the set $A$}\label{app:b3}
We shall now construct certain special elements in the set $A$. For this we need to find $(n,\ell,m)$ which satisfy the inequality \eqref{ediapheq} for every allowed value of $j,m_1$. The following set of charges are easily seen to be in the set $A$:
\begin{equation}
\begin{split}
&A_1:=\left\{(n,\ell,m):~ m,n>0,~ 4\frac{n}{N}m-\ell^2\geq \frac{m^2}{N^2}\right\}.%A_2:=\left\{(n,\ell,m): \quad \ell\equiv 0 \bmod 2m\right\}.     
\end{split}  
\end{equation}
Indeed since $m_1\in N\Z$, for charges in $A_1$ the second term of \eqref{ediapheq} is bounded below by 1 and hence satisfies the inequality unless $m_1=0$. However, for $m_1=0$, the first term is bounded from below by 1 since $j$ is odd. Note that $A_1$ is invariant under elliptic transformation \eqref{einvariance}.%For charges in $A_2$, the first term of \eqref{ediapheq} is bounded below by 1 since it is square of difference of an even and an odd number. \par We now 
\par We now construct a slightly nontrivial subset of $A$. We prove that the set
\begin{equation}
A_2:=\left\{(n,\ell,m):~ m,n>0,~ \frac{n}{N}\geq m,~0\leq|\ell|\leq 2m\right\}    
\end{equation}
is contained in\footnote{The set $A_2$ contains charges with $\ell\equiv 0\bmod 2m$ which corresponds to a pole lying on the attractor contour. This is harmless as discussed in section \ref{sec:poleattcont} above.} $A$. 
Since $\frac{n}{N}\geq m$, we have
\begin{equation}
    \frac{1-j^2}{4 m_1^2}-\frac{j \frac{\ell}{2 m}}{m_1}-{n\over Nm}\leq \frac{1-j^2}{4 m_1^2}-\frac{j \frac{\ell}{2 m}}{m_1}-1.
\end{equation}
Thus to prove \eqref{ediapheq'}, it suffices to show that 
\begin{eqnarray}
\frac{1-j^2}{4 m_1^2}-\frac{j \frac{\ell}{2 m}}{m_1}-1\leq 0.    
\end{eqnarray}
Put $a=\frac{\ell}{2 m}$. Then $-1\leq a\leq 1$ for charges in $A_2$. The inequality can be written as
\begin{equation}
j^2+4 a j m_1+4 m_1^2-1\geq 0 .
\label{eq:mainineq'}
\end{equation}
Since $|a|\leq 1$, we have 
\begin{equation}
    \begin{split}
        j^2+4 a j m_1+4 m_1^2-1&\geq j^2-4|j m_1|+4 m_1^2-1\\&=(|j|-2|m_1|)^2-1\geq 0,
    \end{split}
\end{equation}
since $|j|-2|m_1|$ is an odd integer. This proves the desired result.\par 
It follows from \eqref{einvariance} that any other $(n,\ell,m)$ obtained from
the set $A_2$ by the transformation \eqref{einvariance} will also
belong to the set $A$.

\subsection{Special elements of the set $B$}\label{app:b4}
We shall now construct certain special elements in the set $B$ which consists of integers $(n,\ell,m)$ satisfying the inequality \eqref{ediapheq22} for every allowed value of $j,n_1$. The following set of charges are easily seen to be in the set $B$:
\begin{equation}
\begin{split}
&B_1:=\left\{(n,\ell,m):~ m,n>0,~ 4\frac{n}{N}m-\ell^2\geq \frac{n^2}{N^2}\right\}. %B_2:=\left\{(n,\ell,m): \quad \ell\equiv 0 \bmod 2n\right\}.     
\end{split}  
\end{equation}
Indeed for charges in $B_1$, the second term of \eqref{ediapheq22} is bounded below by 1 and hence satisfies the inequality unless $n_1=0$. On the other hand, for $n_1=0$, the first term is bounded from below by 1. Note that $B_1$ is invariant under elliptic transformation \eqref{esecondellip}. %Moreover since $j$ is odd, for charges in $B_2$, the first term of \eqref{ediapheq22} is bounded below by 1 since it is square of the difference of an even and an odd number. 
\par We now prove that the set 
\begin{equation}
B_2:=\left\{(n,\ell,m):~ m,n>0,~ m\geq \frac{n}{N},~0\leq|\ell|\leq 2n\right\}   \end{equation}
is contained in $B$. 
Since $m\geq \frac{n}{N}$, we have 
\begin{equation}
\frac{m_1+j \frac{N\ell}{2 n}}{n_1}- {Nm\over n}\leq \frac{m_1+j \frac{N\ell}{2 n}}{n_1N^2}-1. 
\label{eq:mainineq''}
\end{equation} 
Thus to prove \eqref{eq:mainineqalt22}, it suffices to prove that 
\begin{equation}
\frac{m_1+j \frac{N\ell}{2 n}}{n_1N^2}-1\leq 0    
\end{equation}
for all allowed values of
$n_1, m_1, j$ satisfying \eqref{eq:polmnj}. Using \eqref{eq:polmnj} we get
\begin{equation}
\frac{m_1}{n_1}=\frac{1-j^2}{4 n_1^2}
\end{equation}
Substituting this in \eqref{eq:mainineq''}, the inequality becomes
\begin{equation}
\frac{1-j^2}{4 n_1^2N^2}-\frac{j \frac{\ell}{2 n}}{n_1N}-1\leq 0.
\end{equation}
Put $a=\frac{\ell}{2 n}$. Then $-1\leq a\leq 1$ for charges in $B_2$. The inequality can be written as
\begin{equation}
j^2+4 a j Nn_1+4 n_1^2N^2-1\geq 0 .
\label{eq:mainineq'''}
\end{equation}
To prove this, we note that 
\begin{equation}
    \begin{split}
     j^2+4 a j Nn_1+4 n_1^2N^2-1 &\geq  j^2-4 |j Nn_1|+4 n_1^2N^2-1\\&=(|j|-2|n_1N|)^2-1\geq 0.   
    \end{split}
\end{equation}
This proves the desired relation.\par
As before, any other $(n,\ell, m)$ related to charges in the set $B_2$ by the
transformation \eqref{esecondellip} will also be part of set $B$.
\end{appendices}
%\bibliography{chl}{}
%\bibliographystyle{plain}
\\

\noindent\textbf{Data availability:} Data sharing not applicable to this article as no datasets were generated or
analysed during the current study.

%\\\\\noindent\textbf{Funding and/or Conflicts of interests/Competing interests:} The authors have no competing interests to declare that are relevant to the
%contents of this article.
\printbibliography
\end{document}